\theoremstyle{thmstyleone}%
\newtheorem{theorem}{Theorem}
\newtheorem{proposition}[theorem]{Proposition}%
\theoremstyle{thmstyletwo}%
\newtheorem{remark}{Remark}%
\theoremstyle{thmstylethree}%
\newtheorem{definition}{Definition}%
\def \R {{\mathbb {R}}}
\def \C {{\mathbb {C}}}
\def \Q {{\mathbb {Q}}}
\def \Z {{\mathbb {Z}}}
\def \N {{\mathbb {N}}}
\DeclarePairedDelimiter{\abs}{\lvert}{\rvert}
\DeclarePairedDelimiter{\norm}{\lVert}{\rVert}
\DeclareMathOperator{\sgn}{sgn}
\DeclareMathOperator{\Res}{Res}
\DeclareMathOperator{\Id}{\mathbb{I}}
\DeclareMathOperator*{\esssup}{ess\,sup}
\newtheorem{lemma}[theorem]{Lemma}
\newtheorem{corollary}[theorem]{Corollary}
\begin{document}

\title{Neural Field Equations with Time-Periodic External Inputs and Some Applications to Visual Processing}

\author*[1]{\fnm{M.~Virginia} \sur{Bolelli}}\email{maria-virginia.bolelli@centralesupelec.fr}
\author[1,2]{\fnm{Dario} \sur{Prandi}}\email{dario.prandi@centralesupelec.fr}
\equalcont{These authors contributed equally to this work.}

\affil*[1]{\orgdiv{Laboratoire des Signaux et Syst\'emes} \orgname{Universit\'e Paris-Saclay, CentraleSup\'elec}
	\orgaddress{\street{3 rue Joliot Curie}, \city{Gif-sur-Yvette}, \postcode{91190}, \country{France}}
}

\affil[2]{\orgname{CNRS}}

\abstract{
The aim of this work is to present a mathematical framework for the study of flickering inputs in visual processing tasks. When combined with geometric patterns, these inputs influence and induce interesting psychophysical phenomena, such as the MacKay and the Billock-Tsou effects, where the subjects perceive specific afterimages typically modulated by the flickering frequency. 
Due to the symmetry-breaking structure of the inputs, classical bifurcation theory and multi-scale analysis techniques are not very effective in our context. We thus take an approach based on the input-output framework of control theory for Amari-type neural fields. This allows us to prove that, when driven by periodic inputs, the dynamics converge to a periodic state. Moreover, we study under which assumptions these nonlinear dynamics can be effectively linearised, and in this case we present a precise approximation of the integral kernel for short-range excitatory and long-range inhibitory neuronal interactions. 
Finally, for inputs concentrated at the center of the visual field with a flickering background, we directly relate the width of the illusory contours appearing in the afterimage with both the flickering frequency and the strength of the inhibition.
 }

\keywords{Neural Field Model, Amari-Type Equation, Time-Periodic External Input, Localized Flickering Input, Application to Visual Processing}

\maketitle

\section{Introduction}
In computational neuroscience, neural fields are fundamental models for understanding the collective behavior of neuronal populations. Initially developed in the 70s to model the sensory visual cortex \cite{A72, WC72, ermentrout1979mathematical}, they have been refined and adapted to account for various factors such as the functional architecture of the primary visual cortex \cite{bressloff2003functional,bressloff2001geometric, veltz2015effects, faye2011analysis, SC15, bolelli2024individuation, BCSZ23b}, and temporal delays \cite{FCF10, faye2010some}.
These models are highly suitable for describing phenomena like spontaneous geometric visual hallucinations that emerge in the visual field due to sudden qualitative changes in specific physiological parameters \cite{ermentrout1979mathematical,bressloff2001geometric}. They also prove effective in understanding sensory-driven and self-organized cortical activity interactions when the visual stimulus exhibits regular shape and complete distribution across the visual field, with symmetry respecting a subgroup of the Euclidean group \cite{nicks2021understanding}.

In this study, we use neural fields to understand the impact of flickering external inputs on the perception of visual phenomena. 
This situation has been investigated in \cite{rule2011model}, in a slightly different setting, for spatially homogeneous inputs.
In this work, continuing the research line started in \cite{tamekue2023mathematical,tamekue2024reproducibility}, we are interested in visual stimuli that do not necessary exhibit symmetries. 
As such, the standard tools from equivariant bifurcation theory cannot be employed and we resort to an approach inspired by control theory, where the external input is seen as the control.

Specifically, we consider the following Amari-type neural field equation,
\begin{equation}
\label{eq:mean_field}
\frac{\partial u}{\partial t}  = -u + \omega \ast f(u) + I.
\end{equation}
This equation is characterized by a spatial interaction kernel $\omega$, a firing rate function $f$, and a time-periodic external input $I$, where $\ast$ denotes convolution in space.  Here, $I$ captures the cortical response to flickering visual stimuli, and the solution $u$ represents the perceived visual pattern.

Since the aim of this paper is to understand the effect of flickering visual stimuli under normal conditions, the parameters defining the mean field equation, such as those for the kernel $\omega$ or the firing rate function $f$, are chosen so that, in absence of input (e.g., $I \equiv 0$), a unique stable stationary solution for \eqref{eq:mean_field} exists. 

Under these assumptions, we theoretically study equation \eqref{eq:mean_field} allowing to predict cortical dynamics under a time-periodic inputs $I$. 
In particular, we show that introducing a time-periodic input $I$ ensures the dynamic to converge toward a globally attractive periodic state. Additionally, the solution of the neural field equation respects the symmetries of the vector field (r.h.s. of \eqref{eq:mean_field}), entailing that a spatially homogeneous input will produce a spatially homogeneous solution. This differs from the model proposed in \cite{rule2011model} for studying flicker-induced phosphenes, where the input affects system stability, and a homogeneous input results in an non-homogeneous solution.

We then show under which conditions it is possible to approximate the nonlinear dynamics \eqref{eq:mean_field} with their linearisation, and study the latter via Fourier techniques. Namely, when $\omega$ is taken to be a difference of Gaussians, we explicitly calculate the integral kernels solving the linear version of \eqref{eq:mean_field}, thus generalizing a technique introduced for the static case in \cite{tamekue2023cortical, tamekue2023mathematical}. This allows not only to exhibit an explicit expression for the attractive periodic state  $u^\star$ in the one-dimensional case, but also in the general case, albeit only for \emph{one-dimensional inputs} (i.e., inputs that depend on a single variable). 

These theoretical results are particularly relevant to study flicker-induced visual phenomena.
Indeed, geometric patterns commonly studied in visual processing, and particularly visual illusions, often include concentric circles or radial shapes whose induced cortical input is effectively represented by one-dimensional inputs, due to the retinotopic mapping from the retina to V1.

Our research inquires on how localised flicker inputs interact with these geometric patterns: leveraging analysis from the linear case, we explicitly derive solutions corresponding to these time-dependent inputs. In accordance with \cite{tamekue2023cortical, tamekue2023mathematical}, localized information generates geometric patterns complementary to its position, and we show how flicker induces their perceived motion. We also explore how kernel and input parameters influence solution dynamics, affecting the perceived pattern size. 
Indeed, we theoretically show that the thickness of the illusory contours induced by an input localised at the center of the fovea depend on the frequency of the background flickering.
Moreover, we show how this phenomenon is strictly connected with the relative strength of the long-range inhibition.
Finally, we apply these findings to psychophysical experiments like the MacKay effect \cite{mackay1957moving} and Billock-Tsou illusion \cite{billock2007neural}, showcasing how parameter variations impact the size and reproducibility of perceived phenomena.

\subsection{Structure of the paper}
The paper is organized as follows.
In {Section~\ref{sec:general}}, we present the neural field model with periodic input. We address the well-posedness of the Cauchy problem, the existence of a globally attractive periodic solution, and we study its behavior in the presence of quasi-linear firing rate functions. We show that if the input is sufficiently small, the solution to the general mean field equation behaves like the solution to the equation characterized by the linear firing rate function. 

{Section~\ref{sec:linear}} deals with the linear mean field equation. We decompose the globally attractive periodic solution using Fourier transform in space and Fourier expansion in time for the input, introducing the convolution with a suitable kernel. We restrict ourselves to the 1D case, where we can directly compute these kernels using results from complex analysis, extending a technique proposed in \cite{tamekue2023mathematical}. The proof of some necessary technical results are postponed to Appendix~\ref{app:proofs}.

Finally, in {Section~\ref{sec:application}}, we use the explicit representation of the globally attractive state of the mean field equation to apply the theoretical results to the domain of visual processing. Firstly, we investigate the solution in response to a localized flickering visual stimulus in the linear case of the firing rate function. Then, we examine how the parameters defining the equation, including kernel and input parameters, influence the solutions. We explore how these findings can be applied to psychophysical experiments that can be approached from a one-dimensional perspective, such as the MacKay effect and the Billock-Tsou illusion.

\subsection*{Notation}

The Euclidean norm of a point $x \in \R^d$ is denoted by $\abs{x}$, and $\langle \cdot, \cdot \rangle$ denotes the scalar product in $\R^d$. 
Given a complex number $z\in \C$, we denote with $\Re z$ and $\Im z$ its real and imaginary part, respectively. 

For $1 \leq p \leq \infty$, we consider the Lebesgue functional space $L^p(\R^d)$ with its standard norm $\norm{\cdot}_p$. We define $L^\infty([0, \infty); L^p(\R^d))$ as the space of all real-valued functions $u$ on $\R^d \times [0, \infty)$ such that 
\[
\norm{u}_{L_t^\infty L_x^p} \coloneqq \esssup_{t \geq 0}\, \norm{u(\cdot, t)}_{p} < \infty.
\]
We endow this space with the norm $\norm{u}_{L_t^\infty L_x^p}$. Note that for $u \in L^\infty([0, \infty); L^p(\R^d))$ and almost every $t\in [0,+\infty)$ we have that $u(\cdot, t) \in L^p(\R^d)$.

We let $\hat{u}$ be the Fourier transform of a function $u\in L^p(\mathbb{R}^d)$. When $u\in L^p(\mathbb{R}^d)\cap L^1(\mathbb{R}^d)$, this is given by 
\begin{equation}
   \hat{u}(\xi) = \int_{\R^d} u(x) e^{-2\pi i \langle x, \xi \rangle} \, dx.
\end{equation}
The inversion formula then reads
\[
   u(x) = \int_{\R^d} \hat{u}(\xi) e^{2\pi i \langle x, \xi \rangle} \, dx.
\]

\section{Neural field model with periodic input}
\label{sec:general}

In this section we collect results concerning the neural field equation \eqref{eq:mean_field}, 
for general, and in particular nonlinear, activation functions.

We consider the following assumptions on the parameters:
\begin{enumerate}
	\item The spatial interaction kernel $\omega:\R^d\to\R$ is an even function (i.e., $\omega(x)=\omega(-x)$), and it satisfies $\|\omega\|_1<1$.
	\item The firing rate function $f:\R\to\R$ is Lipschitz continuous, such that $f(0)=0$, $\max_{s \in \R} f'(s)=f'(0)=1$, and $\|f'\|_\infty \le 1$. 
	\item The external visual input $I$ belongs to $L^\infty([0,+\infty),L^p(\R^d))$.
\end{enumerate}

\begin{remark}
	In the static case, namely when the input $I(x,t) = I(x)$ is time-independent, these assumptions ensure that there is a unique stable stationary state associated with the mean field equation \eqref{eq:mean_field}, see \cite{tamekue2023mathematical, tamekue2024reproducibility}.
\end{remark}

\subsection{Well-posedness of the Cauchy problem}
We use classical tools of functional analysis in Banach spaces (for a general reference, see \cite{deimling2006ordinary,brezis2010functional}), to show the property of existence and uniqueness of a solution to equation \eqref{eq:mean_field}.
Assume $1\leq p \leq \infty$, 
we define the operator $A: L^p(\R^d) \longrightarrow L^p(\R^d)$ as:
\begin{equation}
\label{eq:operator_notation}
    A \coloneqq -\Id + W_f
\end{equation}
with $\Id $ identity operator and
\begin{equation}
W_fu(x) = [\omega\ast f(u)](x) =\int_{\R^d} \omega(x-y)f(u(y))dy.
\end{equation}
We recast equation \eqref{eq:mean_field} as a Cauchy problem in $L^p(\R^d)$:
\begin{equation}
\label{eq:cp_mean_field}
\begin{cases}
    u'(t) = F(t, u)\\
    u(0)  = u_0 , \hspace{0.5cm} u_0 \in L^p(\R^d)
\end{cases}
\end{equation} 
with $F(t, u) = Au(t) + I(t)$ and  $I \in L^\infty ([0, \infty); L^p(\R^d))$.

\begin{definition}
A solution of the Cauchy problem \eqref{eq:cp_mean_field} is an absolutely continuous function $u : [0, \infty) \to L^p(\R^d)$, which is differentiable almost everywhere on $[0, \infty)$ and satisfies equation \eqref{eq:cp_mean_field} for almost every $t \in [0, \infty)$ with initial condition $u(0) = u_0$.
\end{definition}


\begin{proposition}
\label{prop:well-pos_(CP)}
For any initial condition $u_0 \in L^p(\R^d)$ the Cauchy Problem \eqref{eq:cp_mean_field} has a unique solution taking values in $L^p(\R^d)$, that is locally Lipschitz continuous on $[0,\infty)$. 
\end{proposition}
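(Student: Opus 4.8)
The plan is to recast the Cauchy problem in its equivalent integral form and solve it by the Banach contraction principle, the key preliminary being that the nonlinear operator $A = -\Id + W_f$ is globally Lipschitz on $L^p(\R^d)$. First I would establish this Lipschitz bound: since $f(0)=0$ and $\norm{f'}_\infty \le 1$ force the pointwise estimate $\abs{f(a)-f(b)} \le \abs{a-b}$, one gets $\norm{f(u)-f(v)}_p \le \norm{u-v}_p$, and combining this with Young's convolution inequality $\norm{\omega \ast g}_p \le \norm{\omega}_1\norm{g}_p$ yields $\norm{W_f u - W_f v}_p \le \norm{\omega}_1\norm{u-v}_p$ and hence $\norm{Au - Av}_p \le (1+\norm{\omega}_1)\norm{u-v}_p$. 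Taking $v=0$ and noting $W_f 0 = \omega \ast f(0) = 0$ also shows $A$ maps $L^p(\R^d)$ into itself.

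Next I would set up the fixed-point map $(Tu)(t) = u_0 + \int_0^t [Au(s) + I(s)]\,ds$ on $C([0,T];L^p(\R^d))$; the Lipschitz bound above gives $\norm{(Tu)(t)-(Tv)(t)}_p \le (1+\norm{\omega}_1)\int_0^t\norm{u(s)-v(s)}_p\,ds$. Rather than restricting to a short interval, I would use a Bielecki weighted norm $\sup_{t}e^{-\lambda t}\norm{u(t)}_p$, which for $\lambda$ sufficiently large makes $T$ a contraction on $C([0,T];L^p(\R^d))$ for every $T$, so that uniqueness patches the local solutions into a single global one on $[0,\infty)$. Observe that $I$ enters only through $\int_0^t I(s)\,ds$, which is Lipschitz in $t$ because $I \in L^\infty([0,\infty);L^p(\R^d))$; hence $T$ returns continuous (indeed Lipschitz) functions of $t$ even though $I$ need not be continuous, which is the point that makes the merely essentially-bounded time dependence harmless.

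I expect the main obstacle to be the uniform-in-time a priori bound that upgrades local to genuine Lipschitz continuity on all of $[0,\infty)$, and this is where the hypothesis $\norm{\omega}_1 < 1$ is essential. The idea is to exploit the dissipation of the $-\Id$ term through the variation-of-constants formula $u(t) = e^{-t}u_0 + \int_0^t e^{-(t-s)}[W_f u(s) + I(s)]\,ds$. Setting $M = \norm{I}_{L_t^\infty L_x^p}$ and estimating in $L^p$ with $\norm{W_f u(s)}_p \le \norm{\omega}_1\norm{u(s)}_p$ gives $\norm{u(t)}_p \le e^{-t}\norm{u_0}_p + M + \norm{\omega}_1 e^{-t}\int_0^t e^s\norm{u(s)}_p\,ds$; bounding the integrand by its running supremum and using $\norm{\omega}_1 < 1$ closes the estimate into the uniform bound $\sup_{t\ge 0}\norm{u(t)}_p \le (\norm{u_0}_p + M)/(1-\norm{\omega}_1)$. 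With this in hand the final Lipschitz estimate is immediate: from $u(t)-u(s) = \int_s^t[Au(\tau)+I(\tau)]\,d\tau$ and $\norm{Au(\tau)}_p \le (1+\norm{\omega}_1)\norm{u(\tau)}_p$ (using $A0=0$), the uniform bound controls $\norm{u(t)-u(s)}_p$ by a constant multiple of $\abs{t-s}$, giving a single Lipschitz constant valid on $[0,\infty)$.
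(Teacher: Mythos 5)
Your proof is correct, and while it rests on the same cornerstone as the paper's --- the global Lipschitz estimate $\norm{Au_1-Au_2}_p \le (1+\norm{\omega}_1)\norm{u_1-u_2}_p$ obtained from $\abs{f(a)-f(b)}\le\abs{a-b}$ and Young's convolution inequality (note the paper's displayed constant $-1+\norm{\omega}_1$ is a sign slip; yours is the correct one) --- it then follows a genuinely more self-contained route. The paper outsources existence to a black-box Picard--Lindel\"of theorem in Banach spaces (Deimling, Thm.~8.1; Brezis, Thm.~7.3) and gets uniqueness from Gronwall; you instead re-prove that theorem in situ, running the Banach fixed-point argument on the integral (Carath\'eodory) formulation with a Bielecki weighted norm, which yields existence and uniqueness on every $[0,T]$ in one stroke and makes transparent why the merely $L^\infty$-in-time input is harmless (it enters only through $\int_0^t I(s)\,ds$). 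More substantively, you add a step the paper's proof omits: the uniform-in-time bound $\sup_{t\ge 0}\norm{u(t)}_p \le (\norm{u_0}_p + \norm{I}_{L_t^\infty L_x^p})/(1-\norm{\omega}_1)$, obtained from the variation-of-constants formula and $\norm{\omega}_1<1$. This is exactly what is needed to read ``Lipschitz continuous on $[0,\infty)$'' with a single global constant --- the citation route alone only gives Lipschitz continuity on compact intervals, since without dissipativity $\norm{u(t)}_p$ (hence $\norm{Au(t)}_p$, the a.e.\ time derivative) could grow --- and it anticipates the paper's own Proposition~\ref{prop:upper-bound-sol}, which is proved only later via Gronwall. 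Your argument also isolates cleanly where each hypothesis is used: existence and uniqueness need only $\norm{\omega}_1<\infty$, while $\norm{\omega}_1<1$ is spent exclusively on the global-in-time bound.
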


\begin{proof}
Firstly, we show that $F(t,\cdot)$ is Lipschitz in $L^p(\R^d)$. Suppose $u_1, u_2 \in L^p(\R^d)$,
by Young's convolutional inequality, we have
\begin{equation}
\begin{aligned}
\norm{\omega\ast [f(u_1)-f(u_2)]}_p^p  \leq
 \norm{\omega}_1^{p}\norm{f(u_1)-f(u_2)}_p^p. 
\end{aligned}
\end{equation}
This yields
\begin{equation}
    \begin{aligned}
        \norm{F(t, u_1)-F(t, u_2)}_p  \leq 
        (1+\norm{\omega}_1)\norm{u_1-u_2}_p.
    \end{aligned}
\end{equation}
The existence of a Lipschitz solution to the Cauchy Problem \eqref{eq:cp_mean_field} follows by \cite[Thm. 8.1]{deimling2006ordinary} (see also \cite[Thm. 7.3]{brezis2010functional}). Such solution is unique as a consequence of Gronwall's Lemma, thanks to the Lipschitz continuity of $F$. 
\end{proof}

In order to study symmetries of solutions to \eqref{eq:cp_mean_field}, we introduce the following.

\begin{definition}
Given a map $\phi:\mathbb{R}^d\to \mathbb{R}^d$, we say that a function $u:\mathbb{R}^d\to \mathbb{R}$ is $\phi$-invariant if $u\circ \phi = u$. Moreover, an operator $A$ is $\phi$-equivariant if $A(u\circ \phi)=A(u) \circ \phi$ for any $u$ in its domain.
\end{definition}

The following result shows that the solution to \eqref{eq:cp_mean_field} respects the symmetries of the equation. 

\begin{proposition}
Consider $\phi\in C^1(\R^d;\R^d)$ 
with Jacobian  $J_{\phi}$ s.t. $\abs{\det(J_{\phi})}=1$. 
Assume that the operator $W_f$ is $\phi$-equivariant. Then, for any $\phi$-invariant input and any $\phi$-invariant initial condition, the solution to the Cauchy Problem \eqref{eq:cp_mean_field} is $\phi$-invariant.
\end{proposition}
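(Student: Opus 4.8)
The plan is to run a \emph{uniqueness} argument built around the composition operator $C_\phi u \coloneqq u \circ \phi$. Writing $v(\cdot,t) \coloneqq u(\cdot,t)\circ\phi = C_\phi u(t)$, I would show that $v$ solves exactly the same Cauchy problem \eqref{eq:cp_mean_field} as $u$, and then invoke the uniqueness part of Proposition~\ref{prop:well-pos_(CP)} to conclude $v = u$; unravelling the definition of $v$, this is precisely the statement that $u(\cdot,t)\circ\phi = u(\cdot,t)$, i.e.\ that the solution is $\phi$-invariant for every $t\ge 0$.

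First I would establish that $C_\phi$ is a well-defined bounded linear operator on $L^p(\R^d)$, which is exactly where the hypothesis $\abs{\det(J_\phi)}=1$ is used. By the change-of-variables formula (taking $\phi$ to be a bijective $C^1$ map, so that the substitution $y=\phi(x)$ is legitimate), one gets $\norm{u\circ\phi}_p = \norm{u}_p$, so that $C_\phi$ is in fact an isometry of $L^p(\R^d)$. Being linear and bounded, $C_\phi$ commutes with the time derivative, whence $v'(t) = C_\phi\bigl(u'(t)\bigr) = u'(\cdot,t)\circ\phi$, and $v$ inherits the Lipschitz-in-time regularity of $u$ together with the initial condition $v(0)=u_0\circ\phi = u_0$, the last equality holding by $\phi$-invariance of the initial datum.

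Next I would check that the vector field is preserved under $C_\phi$. Since $F(t,u) = -u + W_f u + I(t)$, composing with $\phi$ and distributing gives $F(t,u)\circ\phi = -\,u\circ\phi + (W_f u)\circ\phi + I(t)\circ\phi$. The $\phi$-equivariance assumption yields $(W_f u)\circ\phi = W_f(u\circ\phi) = W_f v$, while $\phi$-invariance of the input gives $I(t)\circ\phi = I(t)$; hence $F(t,u)\circ\phi = -v + W_f v + I(t) = F(t,v)$. Combining this with the previous paragraph, $v'(t) = F(t,u)\circ\phi = F(t,v)$, so $v$ is indeed a solution of \eqref{eq:cp_mean_field} with the same datum $u_0$.

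The algebraic core — that equivariance and invariance combine to give $F(t,u)\circ\phi = F(t,v)$ — is routine. The main point requiring care is the functional-analytic step: verifying that $C_\phi$ genuinely maps $L^p$ into itself isometrically (which forces a bijectivity/diffeomorphism hypothesis on $\phi$ beyond the bare $\abs{\det(J_\phi)}=1$ condition) and that it commutes with $d/dt$ in a way that preserves the Lipschitz regularity needed to apply the uniqueness statement of Proposition~\ref{prop:well-pos_(CP)}. Once that is in place, uniqueness closes the argument immediately.
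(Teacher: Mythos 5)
Your proposal is correct and follows essentially the same route as the paper: set $v(\cdot,t)=u(\cdot,t)\circ\phi$, use the $\phi$-equivariance of $W_f$ and the $\phi$-invariance of $I$ and $u_0$ to show that $v$ solves the same Cauchy problem \eqref{eq:cp_mean_field}, and conclude $v=u$ by the uniqueness statement of Proposition~\ref{prop:well-pos_(CP)}. The only difference is that you spell out the functional-analytic details (the $L^p$-isometry of $u\mapsto u\circ\phi$ via $\abs{\det(J_\phi)}=1$ and the commutation with $d/dt$) that the paper leaves implicit, including the fair observation that $\phi$ should be a diffeomorphism for the composition to be well defined on $L^p$.
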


\begin{proof}
 Consider $u(x, t)$ solution of \eqref{eq:cp_mean_field} and let $v(x, t) = u(\phi(x),t)$. Then, $v$ satisfies:
\begin{equation}
\begin{cases}
    \partial_t v(x,t) = -v(x,t) + \omega\ast f(v(\cdot, t))(x) + I(x,t)\\
    v(x,0)=u_0(x).
    \end{cases}
\end{equation}
The result follows from the uniqueness of the solution. 
\end{proof}

Since $W_f$ is a convolution operator, it is $\phi$-equivariant for all translations $\phi(x)=x-a$, $a\in\R$. Hence, spatially homogeneous inputs and initial conditions yield spatially homogeneous solutions. 

In general, $W_f$ is $\phi$-equivariant if $\omega(x-\phi^{-1}(z))=\omega(\phi(x)-z)$ for all $x,z\in\R$, and hence additional symmetries depend on the kernel $\omega$. We mention that if $\omega$ is radial (i.e., $\omega(x)=\omega(|x|)$) then $W_f$ is equivariant w.r.t.~rotations.

\subsection{Existence of periodic solution}
In this section we study the long-term behavior of the dynamical system governed by  \eqref{eq:mean_field}, when the external input is periodic.

\begin{theorem}
\label{thm:attractive_limit_cycle} 
If the input $I \in L^\infty ([0, \infty); L^p(\R^d))$ is $T$- periodic in time, then \eqref{eq:mean_field} has a unique $T$-periodic solution $u^\star(t) \in L^p(\R^d)$ such that, for any other solution $u$ of \eqref{eq:mean_field} it holds
\begin{equation}
\| u(t) - u^\star(t) \|_p \le e^{-t(1-\|\omega\|_1)}\|u(0)-u^\star(0)\|_p.
\end{equation}

\end{theorem}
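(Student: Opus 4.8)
The plan is to exploit the strong contractivity built into the dynamics by the assumptions $\|\omega\|_1<1$ and $\|f'\|_\infty\le 1$, and then obtain the periodic solution as the unique fixed point of the time-$T$ (Poincaré) map, from which global attractivity follows immediately.

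First I would derive the key contraction estimate between \emph{any} two solutions $u$ and $v$ of \eqref{eq:mean_field} sharing the same input $I$. Writing the equation in mild form via the semigroup $S(t)=e^{-t}\Id$ generated by the linear part $-\Id$, namely
\begin{equation}
u(t) = e^{-t}u(0) + \int_0^t e^{-(t-s)}\left[\omega\ast f(u(s)) + I(s)\right]ds,
\end{equation}
the difference $w=u-v$ satisfies
\begin{equation}
w(t) = e^{-t}w(0) + \int_0^t e^{-(t-s)}\,\omega\ast\left[f(u(s))-f(v(s))\right]ds.
\end{equation}
Taking the $L^p$ norm and using Young's convolution inequality $\|\omega\ast g\|_p\le\|\omega\|_1\|g\|_p$ together with the Lipschitz bound $\|f(u(s))-f(v(s))\|_p\le\|w(s)\|_p$ coming from $\|f'\|_\infty\le 1$, one gets
\begin{equation}
e^t\|w(t)\|_p \le \|w(0)\|_p + \|\omega\|_1\int_0^t e^s\|w(s)\|_p\,ds.
\end{equation}
Grönwall's lemma applied to $t\mapsto e^t\|w(t)\|_p$ then yields
\begin{equation}
\|w(t)\|_p \le e^{-t(1-\|\omega\|_1)}\|w(0)\|_p,
\end{equation}
uniformly in $1\le p\le\infty$. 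I would favour this Duhamel--Grönwall route over differentiating $\|w\|_p^p$ directly, since the latter is delicate at the endpoint $p=\infty$.

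Next I would set up the Poincaré map $\Phi:L^p(\R^d)\to L^p(\R^d)$, $\Phi(u_0)=u(T;u_0)$, sending an initial datum to the time-$T$ value of the solution, which is well defined and unique by Proposition~\ref{prop:well-pos_(CP)}. Evaluating the contraction estimate at $t=T$ gives
\begin{equation}
\|\Phi(u_0)-\Phi(v_0)\|_p \le e^{-T(1-\|\omega\|_1)}\|u_0-v_0\|_p,
\end{equation}
and since $1-\|\omega\|_1>0$ the factor is strictly below $1$, so $\Phi$ is a contraction on the Banach space $L^p(\R^d)$. The Banach fixed-point theorem then provides a unique fixed point $u_0^\star$.

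Finally I would verify that the solution $u^\star$ issued from $u_0^\star$ is $T$-periodic and globally attractive. Because $I$ is $T$-periodic, $t\mapsto u^\star(t+T)$ solves \eqref{eq:mean_field} with initial value $u^\star(T)=\Phi(u_0^\star)=u_0^\star=u^\star(0)$, so uniqueness of solutions forces $u^\star(t+T)=u^\star(t)$ for all $t$, and uniqueness of the fixed point yields uniqueness of the periodic orbit. The claimed convergence bound is then precisely the contraction estimate with $v=u^\star$. The main points requiring care are the endpoint $p=\infty$ and the transfer of the input's periodicity to $\Phi$ and hence to $u^\star$; the mild formulation handles the former cleanly, while the latter is a direct consequence of the uniqueness established in Proposition~\ref{prop:well-pos_(CP)}.
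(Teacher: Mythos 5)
Your proposal is correct and follows essentially the same route as the paper: the same Duhamel/variation-of-constants estimate combined with Gr\"onwall's lemma applied to $t\mapsto e^t\|w(t)\|_p$, yielding the contraction bound, then the Poincar\'e map as a contraction on $L^p(\R^d)$ whose unique fixed point gives the periodic orbit via uniqueness of solutions. Your write-up is in fact slightly more explicit than the paper's on the final step (checking that $t\mapsto u^\star(t+T)$ solves the equation and invoking uniqueness), which the paper compresses into a single sentence.
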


\begin{proof}
Consider the Poincar\'e map $\Phi$ associated to the mean field equation \eqref{eq:mean_field}:
\begin{equation}
\label{eq:PM}
\begin{aligned}
    \Phi: \:& L^p(\R^d) && \longrightarrow && L^p(\R^d) \\
          & u_0 && \longmapsto && u_1
\end{aligned}
\end{equation}
where $u_1: = u(T)$ with $u(t)$ denoting the trajectory associated to the differential equation \eqref{eq:mean_field} with initial condition $u(0)=u_0$.
The idea is to prove that the Poincar\'e map $\Phi$ is a contraction. As such, it will have a unique fixed point. Due to the periodicity of the input and the uniqueness
of the solution, this yields the existence and the uniqueness of the periodic solution. The attractiveness follows from standard estimates. 

Let us consider  $u(t)$ solution for the Cauchy Problem \eqref{eq:cp_mean_field} associated to \eqref{eq:mean_field} with initial condition $u_0$ and let $v(t)$ be the solution to \eqref{eq:cp_mean_field} with initial condition $v_0$. 
    Then, by Variation of Constant formula, and taking the $L^p$ norm, we have: 
\begin{equation}
\begin{split}
    e^t&\|u(t)-v(t)\|_p-\norm{u_0-v_0}_p\\
    &\leq    \int_0^t e^s\norm{\omega \ast [f(u(s))-f(v(s))]}_pds\\
     &\le \norm{\omega}_1 \int_0^te^s \norm{u(s)-v(s)}_pds.\\
\end{split}
\end{equation}
Rewriting the inequality as: 
\begin{equation}
    z(t)\coloneqq e^t \norm{u(t)-v(t)}_p
     \leq   z(0) + \norm{\omega}_1 \int_0^tz(s)ds,
\end{equation}
we can apply Gronwall's Lemma (see Lemma~\ref{lem::gronwall-lemma}) to get: 
\begin{equation}
    z(t)
     \leq   z(0) e^{\int_0^t\norm{\omega}_1 ds} = z(0)e^{t\norm{\omega}_1}.
\end{equation}
It follows then, that for every $t$ we have:
\begin{equation}
\label{eq:dis-g}
    \norm{u(t)-v(t)}_p
    \leq  e^{-t(1-\norm{\omega}_1)} \norm{u_0-v_0}_p.
\end{equation}
Finally, if we use \eqref{eq:dis-g} for $t = T$, we obtain: 
\begin{equation}
    \begin{aligned}
       \norm{\Phi(u_0)-\Phi(v_0)}_p \leq  
         e^{-T(1-\norm{\omega}_1)}\norm{u_0-v_0}_p.\\
    \end{aligned}
\end{equation}
This implies the Poincar\'e map $\Phi$ to be contractive since the period $T$ is positive and $1-\norm{\omega}_1>0$. 

Let $u^\star$ be the unique periodic solution of  \eqref{eq:mean_field}; the thesis follows by applying \eqref{eq:dis-g} to $u^\star$ and to any other solution $u$ of \eqref{eq:mean_field}. 
\end{proof}

It is worth noting that if the firing rate function $f$ is linear, we can identify the initial condition that yields the periodic solution.

\begin{proposition}
\label{rmk:initial-condition-periodic}
    Let $1\leq p \leq \infty$ and assume $f(s)=s$. If the input $I \in L^\infty ([0, \infty); L^p(\R^d))$ is $T$-periodic in time, the periodic solution $u^\star$ is 
	the solution of \eqref{eq:mean_field} passing through the point
    \begin{equation}
\label{eq:init_cond}
    u_{0, I} \coloneqq (\Id-e^{TA})^{-1}\left[ -\int_{0}^T e^{-sA}I(s)ds\right].
\end{equation}
\end{proposition}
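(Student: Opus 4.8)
The plan is to exploit that, for the linear choice $f(s)=s$, the Cauchy problem \eqref{eq:cp_mean_field} reduces to the inhomogeneous \emph{linear} equation $u'(t)=Au(t)+I(t)$, in which $A=-\Id+W$ is a \emph{bounded} operator on $L^p(\R^d)$ (with $Wu=\omega\ast u$). Boundedness of $A$ is what makes everything explicit: it generates the uniformly continuous semigroup $e^{tA}=\sum_{k\ge 0}t^kA^k/k!$, and the unique solution furnished by Proposition~\ref{prop:well-pos_(CP)} coincides with the variation of constants representation
\[
u(t)=e^{tA}u_0+\int_0^t e^{(t-s)A}I(s)\,ds.
\]
Since $A$ commutes with each $e^{sA}$, I would rewrite the integral term as $e^{tA}\int_0^t e^{-sA}I(s)\,ds$, which lets me factor the semigroup out of the time integral.

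Next I would identify $u^\star$ through the Poincar\'e map $\Phi$ of \eqref{eq:PM}. By Theorem~\ref{thm:attractive_limit_cycle} the periodic solution is exactly the trajectory issued from the unique fixed point of $\Phi$, and in the linear case $\Phi$ is \emph{affine}:
\[
\Phi(u_0)=u(T)=e^{TA}u_0+e^{TA}\int_0^T e^{-sA}I(s)\,ds.
\]
Hence determining $u_{0,I}$ amounts to solving the linear equation $(\Id-e^{TA})\,u_0=e^{TA}\int_0^T e^{-sA}I(s)\,ds$ for $u_0$.

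The crucial point is the invertibility of $\Id-e^{TA}$, and this is handed to me for free by Theorem~\ref{thm:attractive_limit_cycle}: the contraction estimate proved there shows $\Phi$ has Lipschitz constant $e^{-T(1-\norm{\omega}_1)}<1$, and since $\Phi$ is affine its linear part $e^{TA}$ satisfies $\norm{e^{TA}}<1$; therefore $\Id-e^{TA}$ is boundedly invertible through its Neumann series $\sum_{k\ge 0}e^{kTA}$. Solving the displayed equation then gives $u_0=(\Id-e^{TA})^{-1}e^{TA}\int_0^T e^{-sA}I(s)\,ds$, and the elementary operator identity $(\Id-e^{TA})^{-1}e^{TA}=-(\Id-e^{-TA})^{-1}$ recasts this as the claimed expression \eqref{eq:init_cond}.

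I expect the only genuine obstacle to be bookkeeping rather than analysis. One must check that the variation of constants formula is valid for a merely $L^\infty$-in-time input, so that $u$ is Lipschitz (not $C^1$) and \eqref{eq:mean_field} holds for a.e.\ $t$, and that $e^{TA}$ may legitimately be moved in and out of the time integral; both follow from the boundedness of $A$ and the norm-continuity of $t\mapsto e^{tA}$. A final verification is that the trajectory issued from $u_{0,I}$ is indeed $T$-periodic, which is immediate: the relation $u(T)=u_0$ together with the $T$-periodicity of $I$ forces $u(t+T)=u(t)$ for all $t$ by the uniqueness in Proposition~\ref{prop:well-pos_(CP)}.
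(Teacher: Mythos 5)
Your overall route is the same as the paper's: represent the solution by variation of constants, use that the Poincar\'e map is affine with linear part $e^{TA}$, and characterise the periodic trajectory through its initial datum; the paper runs the identical computation in reverse, plugging the candidate $u_{0,I}$ into the formula and checking $u(t)=u(t+T)$. Up to the last line your argument is correct: the fixed-point equation is indeed $(\Id-e^{TA})u_0=e^{TA}\int_0^T e^{-sA}I(s)\,ds$, invertibility of $\Id-e^{TA}$ by Neumann series is legitimate because the contraction estimate of Theorem~\ref{thm:attractive_limit_cycle} gives $\|e^{TA}\|\le e^{-T(1-\|\omega\|_1)}<1$, and the operator identity $(\Id-e^{TA})^{-1}e^{TA}=-(\Id-e^{-TA})^{-1}$ is true.

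The gap is the final identification. What your identity yields is
\[
u_{0}=-\bigl(\Id-e^{-TA}\bigr)^{-1}\int_0^T e^{-sA}I(s)\,ds,
\]
which is \emph{not} the claimed expression \eqref{eq:init_cond}: there the inverse is of $\Id-e^{TA}$, with the opposite sign in the exponent, and $(\Id-e^{-TA})^{-1}\neq(\Id-e^{TA})^{-1}$. So your last sentence asserts an equality that is false, and strictly speaking you have not proved the statement as written --- you have proved a corrected version of it. Indeed your formula, not \eqref{eq:init_cond}, is the right one. Sanity check with an input constant in time: the unique $T$-periodic solution is then the equilibrium $-A^{-1}I$, and since $\int_0^T e^{-sA}I\,ds=A^{-1}(\Id-e^{-TA})I$, your expression gives exactly $-A^{-1}I$, while \eqref{eq:init_cond} gives $e^{-TA}A^{-1}I\neq-A^{-1}I$. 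The source of the discrepancy is in the paper's own verification: the displayed identity $u(t)-u(t+T)=e^{tA}\bigl[(\Id-e^{TA})u_{0,I}+\int_0^T e^{-sA}I(s)\,ds\bigr]$ is incorrect, since splitting $\int_0^{t+T}e^{-sA}I(s)\,ds$ at $s=T$ and using the periodicity of $I$ gives $u(t)-u(t+T)=e^{tA}\bigl[(\Id-e^{TA})u_{0,I}-e^{TA}\int_0^T e^{-sA}I(s)\,ds\bigr]$, whose vanishing forces precisely your $u_0$, equivalently $u_{0,I}=(\Id-e^{TA})^{-1}\int_0^T e^{(T-s)A}I(s)\,ds$. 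The correct conclusion of your proof is therefore to state this corrected formula and flag the sign slip in \eqref{eq:init_cond}, not to claim coincidence with the statement as written.
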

\begin{proof}
Let us use the operator notation introduced in \eqref{eq:operator_notation}. The Variation of Constants formula allows us to write, for a given $u_0 \in L^p(\R^d)$, the unique solution $u(t)$ to the Cauchy Problem associated to equation \eqref{eq:mean_field-lin} as:
\begin{equation}
\label{eq:variation_constant}
    u(t) = e^{tA}u_0 + e^{tA}\int_0^t e^{-sA}I(s)ds. 
\end{equation}
Taking as initial condition the function\footnote{$(\Id-e^{TA})$ is invertible since $A$ is a contraction thanks to $\norm{\omega}_1 < 1$. }
\begin{equation}
    u_{0, I} \coloneqq (\Id-e^{TA})^{-1}\left[ -\int_{0}^T e^{-sA}I(s)ds\right]
\end{equation}
independent of time but dependent on the non-homogeneous external input $I$. Then, the solution $u(t)$ for the Cauchy Problem \eqref{eq:cp_mean_field} is periodic with the same periodicity as $I$:     
\begin{multline}
                 u(t) - u(t+T)  \\
                 =e^{tA}{\left[ (\Id-e^{TA})u_{0, I}+\int_0^Te^{-sA}I(s)ds \right]}
                 =  0.
\end{multline}
This completes the proof of the statement.
\end{proof}

\subsection{Quasi-linear nonlinearities}

We investigate under which circumstances having a linear or nearly linear firing rate function provides sufficient information to understand the problem. Specifically, we observe when the presence of a nonlinear firing rate function $f$ does not produce a significantly different outcome compared to the scenario where the equation is linear.

\subsubsection{Prior estimates}

\begin{proposition}
\label{prop:upper-bound-sol}
Let $1\leq p \leq \infty$. For any solution $u$ of \eqref{eq:mean_field} we have
    \begin{equation}
        \|u(t)\|_p \le e^{-t(1-\|\omega\|_1)}\|u(0)\|_p + \frac{\norm{I}_{L_x^p \times L_t^\infty}}{1-\norm{\omega}_1}.
    \end{equation}
\end{proposition}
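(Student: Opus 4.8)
The plan is to turn the vector-valued equation into a scalar differential inequality for the real function $\phi(t)\coloneqq\norm{u(t)}_p$ and then integrate it with an integrating factor. Writing \eqref{eq:mean_field} as $u'(t)=-u(t)+h(t)$ with forcing $h(t)\coloneqq\omega\ast f(u(t))+I(t)$, the first step is the one-sided comparison $u(t+\delta)=(1-\delta)u(t)+\delta h(t)+o(\delta)$, which upon taking norms and letting $\delta\to0^+$ gives the Dini-derivative bound $D^+\phi(t)\le-\phi(t)+\norm{h(t)}_p$.

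Next I would control $\norm{h(t)}_p$ by three elementary estimates. Young's convolution inequality gives $\norm{\omega\ast f(u(t))}_p\le\norm{\omega}_1\norm{f(u(t))}_p$; the hypotheses $f(0)=0$ and $\norm{f'}_\infty\le1$ give the pointwise bound $\abs{f(s)}\le\abs{s}$, hence $\norm{f(u(t))}_p\le\phi(t)$; and the definition of the norm on $L^\infty([0,\infty);L^p(\R^d))$ gives $\norm{I(t)}_p\le\norm{I}_{L_x^p\times L_t^\infty}\eqqcolon\beta$ for a.e.\ $t$. Combining them yields the linear differential inequality
\begin{equation}
    D^+\phi(t)\le-(1-\norm{\omega}_1)\,\phi(t)+\beta.
\end{equation}
Setting $\alpha\coloneqq1-\norm{\omega}_1>0$ and multiplying by the integrating factor $e^{\alpha t}$, one obtains $D^+\bigl(e^{\alpha t}\phi(t)\bigr)\le\beta e^{\alpha t}$; integrating and dividing back by $e^{\alpha t}$ produces $\phi(t)\le e^{-\alpha t}\phi(0)+\frac{\beta}{\alpha}\bigl(1-e^{-\alpha t}\bigr)$, and dropping the last negative contribution gives exactly the claimed bound.

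The step requiring the most care is the passage to the differential inequality for $\phi$, since the $L^p$ norm is not differentiable and $u$ is only Lipschitz (by Proposition~\ref{prop:well-pos_(CP)}); this is handled rigorously by working with the upper Dini derivative of the absolutely continuous scalar function $\phi$ and integrating the inequality $D^+(e^{\alpha t}\phi)\le\beta e^{\alpha t}$ via the standard comparison lemma. An equivalent route, closer to the Variation of Constants computations already used for Theorem~\ref{thm:attractive_limit_cycle}, is to start from $u(t)=e^{-t}u(0)+\int_0^t e^{-(t-s)}h(s)\,ds$, reach the integral inequality $e^t\phi(t)\le\phi(0)+\beta(e^t-1)+\norm{\omega}_1\int_0^t e^s\phi(s)\,ds$, and then apply Gronwall's Lemma~\ref{lem::gronwall-lemma}; here one must use its sharp non-constant form, since bounding the inhomogeneity $\beta(e^t-1)$ merely as a nondecreasing term would yield a spurious bound growing like $e^{t\norm{\omega}_1}$ instead of the uniform tail $\beta/(1-\norm{\omega}_1)$.
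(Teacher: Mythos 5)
Your proof is correct, and your primary argument takes a genuinely different route from the paper's. The paper works in integral form: it writes the solution via the Variation of Constants formula, takes $L^p$ norms to obtain $e^t\norm{u(t)}_p \le \norm{u(0)}_p + \norm{\omega}_1\int_0^t e^s\norm{u(s)}_p\,ds + \norm{I}_{L_x^p L_t^\infty}\int_0^t e^s\,ds$, and then applies Gronwall's Lemma~\ref{lem::gronwall-lemma} --- this is exactly the ``equivalent route'' you sketch in your closing paragraph, and your warning there is on point: the paper's Lemma~\ref{lem::gronwall-lemma} is stated precisely in the sharp form $u(t)\le u(0)e^{G(t)}+\int_0^t h(s)e^{G(t)-G(s)}ds$ needed to keep the inhomogeneity $\beta e^s$ inside the kernel, and a cruder Gronwall would indeed produce a bound growing like $e^{t\norm{\omega}_1}$. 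Your main argument instead works in differential form: the Dini-derivative inequality $D^+\phi(t)\le -(1-\norm{\omega}_1)\phi(t)+\beta$ for $\phi(t)=\norm{u(t)}_p$, obtained from $u(t+\delta)=(1-\delta)u(t)+\delta h(t)+o(\delta)$, followed by an integrating factor. This avoids Gronwall's lemma altogether and is self-contained; the ingredients feeding the inequality (Young's convolution inequality and $\abs{f(s)}\le\abs{s}$ from $f(0)=0$, $\norm{f'}_\infty\le 1$) are the same ones the paper uses implicitly, and both routes arrive at the identical intermediate bound $\phi(t)\le e^{-\alpha t}\phi(0)+\frac{\beta}{\alpha}\bigl(1-e^{-\alpha t}\bigr)$ before discarding the negative term. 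The regularity subtlety you flag is real --- $u$ is only Lipschitz, so $u'(t)$ exists merely almost everywhere (as a Lebesgue point of the Bochner integrand), and the norm is not differentiable --- but your resolution via absolute continuity of the scalar function $\phi$ and a.e.\ differentiation is the standard and correct one, so no gap remains.
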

\begin{proof}
Writing the solution to \eqref{eq:mean_field} using the Variation of Constant formula with initial condition $u_0 \in L^p(\R^d)$ and taking its $L^p$ norm, we get: 
    \begin{equation}
    \label{eq:G1}
    \begin{aligned}
        e^t&\norm{u(\cdot, t)}_{p} - \norm{u_0}_{p}    \\
         &\le \int_0^t e^{s} \norm{\omega}_1\norm{u(\cdot, s)}_{p}ds+  \int_0^t e^{s} \norm{I(\cdot, s)}_{p}ds\\
         &\le\norm{\omega}_1 \int_0^t e^{s} \norm{u(\cdot, s)}_{p}ds+  \norm{I}_{L_x^p L_t^\infty}\int_0^t e^s\,ds.\\
        \end{aligned}
    \end{equation}
Rewriting the inequality as: 
\begin{multline}
\label{eq:G2}
    v(t)\coloneqq e^t \norm{u(\cdot, t)}_{p} \leq \\ v(0) +  \norm{\omega}_1 \int_0^tv(s) ds + \norm{I}_{L_x^p L_t^\infty} \int_0^t e^s\,ds,
\end{multline}
using Gronwall's Lemma (see Lemma~\ref{lem::gronwall-lemma}), we have: 
\begin{equation}
\label{eq:G3}
\begin{aligned}
    v(t) 
&\leq  v(0)e^{\norm{\omega}_1t}+ \frac{\norm{I}_{L_x^p L_t^\infty} }{1-\norm{\omega}_1}(e^t-e^{\norm{\omega}_1 t}).\\
\end{aligned}  
\end{equation}
The statement immediately follows.
\end{proof}

If the firing rate function $f$ is  linear, the estimate can be improved. 

\begin{proposition}
\label{rmk:linear-upper-bound}
Let $1\leq p \leq \infty$ and assume that $f(s)=s$. Then, for any solution $u$ of \eqref{eq:mean_field} we have
    \begin{equation}
    \limsup_{t \rightarrow \infty} \norm{u(\cdot, t)}_{p} \leq \norm{I}_{L_x^p \times L_t^\infty}.
    \end{equation}
\end{proposition}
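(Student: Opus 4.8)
The plan is to exploit that, for $f(s)=s$, the right-hand side $F(t,u)=Au+I(t)$ of \eqref{eq:cp_mean_field} is affine, so the flow is given exactly by the variation of constants formula \eqref{eq:variation_constant} and is governed by the semigroup $e^{tA}$ generated by $A=-\Id+W$ from \eqref{eq:operator_notation}. Note first that $f(s)=s$ satisfies the standing assumptions on the firing rate, so Proposition~\ref{prop:upper-bound-sol} already applies and yields $\limsup_{t\to\infty}\norm{u(\cdot,t)}_p\le \norm{I}_{L_x^p L_t^\infty}/(1-\norm{\omega}_1)$. The content of the present statement is therefore to remove the factor $1/(1-\norm{\omega}_1)$, and this is precisely the point at which linearity must be used more sharply than in the Gronwall argument of Proposition~\ref{prop:upper-bound-sol}.

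Concretely, I would first record the decay estimate for the semigroup: writing $e^{tA}=e^{-t}e^{tW}$ and using $\norm{W}_{L^p\to L^p}\le\norm{\omega}_1$ (Young's inequality), one gets $\norm{e^{tA}}_{L^p\to L^p}\le e^{-t(1-\norm{\omega}_1)}$. Plugging this into \eqref{eq:variation_constant} and taking $L^p$ norms splits the solution into a homogeneous part $e^{tA}u_0$, which decays like $e^{-t(1-\norm{\omega}_1)}\norm{u_0}_p\to 0$ and hence does not contribute to the $\limsup$, and a forcing part bounded by
\begin{equation}
\int_0^t \norm{e^{(t-s)A}}_{L^p\to L^p}\,\norm{I(s)}_p\,ds \le \norm{I}_{L_x^p L_t^\infty}\int_0^t e^{-(t-s)(1-\norm{\omega}_1)}\,ds.
\end{equation}
Taking $\limsup_{t\to\infty}$ leaves the constant $\int_0^\infty e^{-s(1-\norm{\omega}_1)}\,ds$, and the remaining task is to show that the effective gain one should extract here is in fact $\le 1$ rather than this integral.

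The main obstacle is precisely this last point: the chain of triangle/Young inequalities above evaluates the forcing constant to $1/(1-\norm{\omega}_1)$, so it cannot on its own produce the sharp value $1$. To try to reach $\norm{I}_{L_x^p L_t^\infty}$ I would reduce to the attractive periodic state, since by Theorem~\ref{thm:attractive_limit_cycle} every solution satisfies $\norm{u(t)-u^\star(t)}_p\to0$, whence $\limsup_{t\to\infty}\norm{u(\cdot,t)}_p=\max_t\norm{u^\star(t)}_p$, and it suffices to estimate the periodic solution itself. The delicate step is then to bound $u^\star$ without discarding the sign of $\omega$. For $p=2$, passing to the Fourier side decouples \eqref{eq:mean_field} into the scalar equations $\partial_t\widehat{u^\star}(\xi,t)=(-1+\hat\omega(\xi))\,\widehat{u^\star}(\xi,t)+\hat I(\xi,t)$, whose periodic solution has steady gain $1/\abs{1-\hat\omega(\xi)}$. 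Since $\abs{\hat\omega(\xi)}\le\norm{\omega}_1<1$ this gain is finite, but it exceeds $1$ at frequencies where $\hat\omega(\xi)>0$. I therefore expect that making this frequency-wise control of the convolution — rather than the semigroup bookkeeping — the crux of the argument is what is needed, and that the improved constant $1$ is genuine only under a non-amplification property of $\omega$ (e.g.\ $\hat\omega\le0$); this is the point most likely to require care, or an additional hypothesis beyond $\norm{\omega}_1<1$.
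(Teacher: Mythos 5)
Your proposal stops short of a completed proof, but rightly so: the obstruction you identify is genuine, the paper's own proof collapses at exactly the step you flagged, and the statement is in fact false under the standing hypothesis $\norm{\omega}_1<1$ alone. The paper's argument takes $L^p$ norms in the variation of constants formula \eqref{eq:variation_constant} and asserts
\begin{equation*}
\norm{u(\cdot, t)}_{p} \leq e^{-t\norm{A}} \norm{u_0}_{p} + \bigl(1-e^{-t\norm{A}}\bigr)\norm{I}_{L_x^p L_t^\infty},
\end{equation*}
concluding by letting $t\to\infty$ ``since $\norm{A}<1$''. Neither ingredient is correct: on $L^2$ one has $\norm{A}=\sup_\xi\abs{1-\hat\omega(\xi)}\geq 1$ because $\hat\omega$ vanishes at infinity (Riemann--Lebesgue), and a semigroup bound of the form $\norm{e^{tA}}\leq e^{-t\norm{A}}$ is backwards, since $\norm{e^{tA}}\,\norm{e^{-tA}}\geq 1$ forces $\norm{e^{tA}}\geq e^{-t\norm{A}}$. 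The valid decay rate is $1-\norm{\omega}_1$, and then, exactly as in your first display, the forcing term unavoidably contributes the factor $1/(1-\norm{\omega}_1)$, i.e.\ one only recovers Proposition~\ref{prop:upper-bound-sol}. Worse, no repair is possible: take $p=\infty$, $\omega\geq 0$ a Gaussian normalized so that $\norm{\omega}_1=w\in(0,1)$, $I\equiv 1$, $u_0\equiv 0$. Spatially homogeneous data yield spatially homogeneous solutions (as the paper notes after its symmetry proposition), so $u(x,t)=v(t)$ with $\dot v=-(1-w)v+1$, whence
\begin{equation*}
\norm{u(\cdot,t)}_\infty = \frac{1-e^{-(1-w)t}}{1-w} \xrightarrow{t\to\infty} \frac{1}{1-w} > 1 = \norm{I}_{L_x^\infty L_t^\infty},
\end{equation*}
violating both the displayed inequality (which with $u_0=0$ would give $\norm{u(\cdot,t)}_\infty\leq\norm{I}_{L_x^\infty L_t^\infty}$ for all $t$) and the proposition's conclusion. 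Since $1/(1-w)$ is precisely the globally attractive state of Theorem~\ref{thm:attractive_limit_cycle}, every solution eventually violates the claimed bound.

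Your closing diagnosis is also the correct repair, and it can be sharpened: the mode-$(\ell,\xi)$ gain $\abs{1+i\mu-\hat\omega(\xi)}^{-1}$ (with $\mu$ the temporal frequency) can exceed $1$ if and only if $\hat\omega(\xi)>0$, which is exactly where both counterexamples live ($\xi=0$ above, low spatial frequencies in your $L^2$ version). Note that even the difference-of-Gaussians kernels \eqref{eq:DoG_kernel} used in Section~\ref{sec:linear} have $\hat\omega(\xi)>0$ for large $\xi$, so this is not an exotic regime, although there the amplification is quantitatively small. Under your non-amplification hypothesis $\hat\omega\leq 0$ the $p=2$ statement does hold, and the cleanest proof is not the mode-by-mode reduction to $u^\star$ (the triangle inequality over $\ell$ is lossy) but the energy estimate
\begin{equation*}
\tfrac{1}{2}\tfrac{d}{dt}\norm{u}_2^2 \leq -\norm{u}_2^2 + \int_{\R}\hat\omega\,\abs{\hat u}^2\,d\xi + \norm{u}_2\norm{I(\cdot,t)}_2,
\end{equation*}
whose middle term is then $\leq 0$, giving decay rate $-1$ and hence the constant $1$. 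In operator terms, the proposition is true exactly when the convolution operator is dissipative on $L^p$, so that $\norm{e^{tA}}\leq e^{-t}$; this is the inequality the paper's proof implicitly assumes without justification, and, as you suspected, it is an additional hypothesis beyond $\norm{\omega}_1<1$ that cannot be dispensed with. The statement (and its later use in the proof of Proposition~\ref{prop:quasi-linearity}) should be amended accordingly.
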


\begin{proof}
Let $A$ be the $L^p$-operator defined in \eqref{eq:operator_notation}.
For any initial condition $u_0 \in L^p(\R^d)$, we can take the $L^p$ norm of the solution written using the Variation of Constant formula \eqref{eq:variation_constant}, to obtain:
    \begin{equation}
        \norm{u(\cdot, t)}_{p} 
        \leq e^{-t\norm{A}} \norm{u_0}_{p} + (1-e^{-t\norm{A}})\norm{I}_{L_x^p\times L_t^\infty}
    \end{equation}
    The statement follows by taking the limit as $t$ tends to $+\infty$, since $\norm{A}<1$ due to the fact that $\norm{\omega}_1<1$.
\end{proof}

\subsubsection{Linear approximation}

In this section we compare the solutions to \eqref{eq:mean_field} with a nonlinear and a linear firing rate function $f:\R\to\R$, for small inputs. More precisely, for a given input $I\in L^\infty([0,+\infty),L^p(\R^d))$ and for any $\varepsilon>0$, we let $u_{\textsc{NL}}(\varepsilon)$ be the solution to \eqref{eq:mean_field} with input $\varepsilon I$, and initial condition $u_{\textsc{NL}}(\varepsilon)|_{t=0}\equiv 0$. The solution to the linear equation 
\begin{equation}
\label{eq:mean_field-lin-eps}
\frac{\partial u}{\partial t}= - u + \omega\ast u + \varepsilon I, \qquad u(0)=0,
\end{equation}
is denoted by $\varepsilon u_{\textsc{L}}$, where $u_{\textsc{L}}$ is the solution to \eqref{eq:mean_field-lin-eps} with $\varepsilon=1$. Observe that this is true due to the choice of the initial condition.

\begin{proposition}
\label{prop:}
Let $1\leq p \leq \infty$, and assume that $f\in C^2(\R;\R)$ with $f''\in L^\infty(\R)$.
Then, there exists a constant $c>0$,  depending only on $\omega$, $I$, and $\|f''\|_\infty$, such that
\begin{equation}
\|u_{\textsc{NL}}(\varepsilon)-\varepsilon  u_{\textsc{L}}\|_p \le 
    c\varepsilon^2,
\end{equation}
In particular, if the input $I$ is $T$-periodic in time, letting $u^\star_{\textsc{NL}}(\varepsilon)$ and (resp.~$\varepsilon u^\star_{\textsc{L}}$) be the unique periodic solutions to the nonlinear (resp.~linear) neural field equation with input $\varepsilon I$, we have
\begin{equation}
    u^\star_{\textsc{NL}}(\varepsilon) = \varepsilon u^\star_{\textsc{L}} + O(\varepsilon^2).
\end{equation}
\end{proposition}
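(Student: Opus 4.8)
The plan is to treat $\varepsilon u_{\textsc{L}}$ as the first-order term in the expansion of $u_{\textsc{NL}}(\varepsilon)$ and to control the remainder by a Grönwall argument whose forcing is the purely quadratic part of $f$. First I would Taylor expand $f$ at the origin. Since $f(0)=0$ and (for the linearisation to coincide with the firing rate $f(s)=s$) $f'(0)=1$, writing $f(s)=s+R(s)$ with $R(s)=\tfrac12 f''(\theta s)s^2$ gives the pointwise bound $|R(s)|\le\tfrac12\|f''\|_\infty s^2$. Setting $w\coloneqq u_{\textsc{NL}}(\varepsilon)-\varepsilon u_{\textsc{L}}$, subtracting the nonlinear equation from the linear one \eqref{eq:mean_field-lin-eps} and inserting $f(u_{\textsc{NL}})=u_{\textsc{NL}}+R(u_{\textsc{NL}})$, the zeroth- and first-order terms cancel and I obtain the equation $\partial_t w=-w+\omega\ast w+\omega\ast R(u_{\textsc{NL}}(\varepsilon))$ with $w(0)=0$, linear in $w$ and driven solely by the source $g\coloneqq\omega\ast R(u_{\textsc{NL}}(\varepsilon))$.

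To see that the source is $O(\varepsilon^2)$ uniformly in time, I would apply the prior estimate of Proposition \ref{prop:upper-bound-sol} to $u_{\textsc{NL}}(\varepsilon)$ (input $\varepsilon I$, initial datum $0$), whose first term vanishes, giving $\|u_{\textsc{NL}}(\varepsilon)(t)\|_{2p}\le \varepsilon\|I\|_{L^{2p}_xL^\infty_t}/(1-\|\omega\|_1)$. Combined with the bound on $R$ and Young's inequality for the convolution, this yields $\|g(t)\|_p\le\tfrac12\|\omega\|_1\|f''\|_\infty\|u_{\textsc{NL}}(\varepsilon)(t)\|_{2p}^2\le K\varepsilon^2$ with $K$ depending only on $\omega$, $I$ and $\|f''\|_\infty$. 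Then, expressing $w$ by variation of constants, taking $L^p$ norms and applying Grönwall's Lemma \ref{lem::gronwall-lemma} exactly as in the proof of Theorem \ref{thm:attractive_limit_cycle} — now with constant forcing $K\varepsilon^2$ rather than a homogeneous right-hand side — closes the estimate as $\|w(t)\|_p\le K\varepsilon^2/(1-\|\omega\|_1)=:c\varepsilon^2$, uniformly in $t\ge 0$, which is the first claim.

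For the periodic statement I would invoke global attractivity. By Theorem \ref{thm:attractive_limit_cycle}, the zero-initial-condition trajectories satisfy $u_{\textsc{NL}}(\varepsilon)(t+nT)\to u^\star_{\textsc{NL}}(\varepsilon)(t)$ and $\varepsilon u_{\textsc{L}}(t+nT)\to\varepsilon u^\star_{\textsc{L}}(t)$ in $L^p$ as $n\to\infty$. Since the uniform bound $\|w(s)\|_p\le c\varepsilon^2$ holds at every instant $s=t+nT$, passing to the limit by continuity of the norm gives $\|u^\star_{\textsc{NL}}(\varepsilon)(t)-\varepsilon u^\star_{\textsc{L}}(t)\|_p\le c\varepsilon^2$ for all $t$, that is $u^\star_{\textsc{NL}}(\varepsilon)=\varepsilon u^\star_{\textsc{L}}+O(\varepsilon^2)$.

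The delicate point is the quadratic source: in $L^p$ the remainder obeys $\|R(u_{\textsc{NL}})\|_p\le\tfrac12\|f''\|_\infty\|u_{\textsc{NL}}\|_{2p}^2$, so I genuinely need the prior estimate in $L^{2p}$ rather than $L^p$. This is harmless when $p=\infty$, but for finite $p$ it forces the constant $c$ to involve $\|I\|_{L^{2p}_xL^\infty_t}$; this norm mismatch, together with the normalisation $f'(0)=1$ needed for $\varepsilon u_{\textsc{L}}$ to be the correct first-order term, is where care is required.
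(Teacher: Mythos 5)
Your proof is correct, and its skeleton --- setting $w = u_{\textsc{NL}}(\varepsilon) - \varepsilon u_{\textsc{L}}$, Taylor-expanding $f(s) = s + R(s)$ with $\abs{R(s)} \le \tfrac12\norm{f''}_\infty s^2$, running Gronwall's Lemma~\ref{lem::gronwall-lemma} on the variation-of-constants formula, and passing to the periodic orbits via the attractivity in Theorem~\ref{thm:attractive_limit_cycle} --- matches the paper's. Where you genuinely diverge is the crux estimate $\norm{u_{\textsc{NL}}^2}_p = O(\varepsilon^2)$. You obtain it from the identity $\norm{u_{\textsc{NL}}^2}_p = \norm{u_{\textsc{NL}}}_{2p}^2$ together with Proposition~\ref{prop:upper-bound-sol} applied with exponent $2p$; as you note, this costs the extra hypothesis $I \in L^\infty([0,\infty); L^{2p}(\R^d))$ (plus, strictly speaking, a uniqueness remark ensuring the $L^p$ and $L^{2p}$ solutions coincide). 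The paper instead derives an evolution equation for $u_{\textsc{NL}}^2$ itself, $\partial_t w = -w + h$ with $h = u_{\textsc{NL}}\left(\omega \ast f(u_{\textsc{NL}}) + \varepsilon I\right)$, and bounds the source by $\norm{h}_p \le \norm{\omega}_q \norm{u_{\textsc{NL}}}_p^2 + \varepsilon \norm{u_{\textsc{NL}}}_p \norm{I}_\infty$ with $q$ the conjugate exponent, so everything stays in $L^p$ but at the price of the implicit hypotheses $\omega \in L^q(\R^d)$ and $I(\cdot,t) \in L^\infty(\R^d)$. Neither set of auxiliary integrability assumptions appears in the statement, so the two arguments are on equal footing there: yours is the more elementary, while the paper's avoids raising the integrability of $I$ in space. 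Two further points in your favour: you make explicit the normalization $f'(0)=1$, which the paper uses silently when it writes $f(u_{\textsc{NL}}) = f(0) + f'(0)u_{\textsc{NL}} + \dots \le u_{\textsc{NL}} + \dots$; and you spell out the limit argument along the times $t+nT$ for the periodic claim, which the paper leaves implicit.
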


\begin{proof}
To lighten the notation, we drop the explicit dependence on $\varepsilon$.
Let us start by defining 
\begin{equation}
v = u_{\textsc{NL}} - \varepsilon  u_{\textsc{L}}.
\end{equation}
Then, $v(0)=0$ and $v$ satisfies
\begin{equation}
\label{eq:ode_v}
    \partial_t v = -v + \omega \ast [f(u_{\textsc{NL}})-\varepsilon u_{\textsc{L}}].
\end{equation}
In particular, by the Variation of Constant formula, the solution of \eqref{eq:ode_v} reads 
\begin{equation}
\label{eq:v-sol}
    v(t) = e^{-t} \int_0^t e^s [\omega \ast (f(u_{\textsc{NL}})-\varepsilon u_{\textsc{L}})](s) ds.
\end{equation}
 
The fact that $f''\in L^\infty(\R)$ allows to consider, for any $x\in \R^d$ and $t\ge 0$, the following
\begin{equation}
    \label{eq:taylor-dev}
        f(u_{\textsc{NL}}(x,t)) 
        \leq  u_{\textsc{NL}}(x,t) + \frac{\norm{f''}_\infty u_{\textsc{NL}}(x,t)^2}{2}.
\end{equation}
Here, we used the Taylor development of $f(u_{\textsc{NL}}(x,t))$ in $0$, together with
the fact that $f(0)=0$ and $f'(0)= 1$.

We now claim that there exists $c_1>0$, depending only on $\omega$ and $I$, such that $\|u_{\textsc{NL}}^2\|_p\le c_1\varepsilon^2$. Indeed, $u_{\textsc{NL}}^2$ solves the equation
\begin{equation}
    \partial_t w = - w + h, \qquad h = u_{\textsc{NL}}  \left(\omega\ast f(u_{\textsc{NL}}) + \varepsilon I\right),
\end{equation}
with initial condition $w(0)=0$. As a consequence, 
\begin{equation}
    \label{eq:u-squared}
    e^t\|u_{\textsc{NL}}^2\|_p \le \int_0^t e^s \|h(s)\|_p\,ds.
\end{equation}
By Proposition~\ref{prop:upper-bound-sol}, there exists $c_2>0$, depending only on $\omega$ and $I$, such that $\|u_{\textsc{NL}}\|_p \le c_2\varepsilon$, and hence at any fixed time $s>0$ we have
\begin{equation}
    \begin{split}
        \|h\|_p 
        &\le \|\omega\ast f(u_{\textsc{NL}})\|_\infty\|u_{\textsc{NL}}\|_p + \varepsilon \|u_{\textsc{NL}}\|_p\|I\|_\infty\\
        &\le \|\omega\|_q\|u_{\textsc{NL}}\|_p^2 + \varepsilon \|u_{\textsc{NL}}\|_p\|I\|_\infty\\    
        &\le c_2(c_2\|\omega\|_q +\|I\|_\infty)\varepsilon^2.
    \end{split}
\end{equation}
Here, in the second inequality, we used Young's convolution inequality together with the fact that $\|f'\|_\infty\le 1$.
Plugging the above in \eqref{eq:u-squared} proves the claim.

By the previous claim, \eqref{eq:v-sol}, and \eqref{eq:taylor-dev}, there exists a constant $c_3>0$, depending only on $\omega$, $I$, and $\|f''\|_\infty$, such that 
\begin{equation}
    \begin{aligned}
        e^t\|&v(t)\|_{p}  \\
        &\leq  \int_0^t e^s \norm{\omega}_1\left( \norm{v(s)}_{p}+ \frac{\norm{f''}_\infty \norm{u_{\textsc{NL}}^2}_p}{2}\right) ds  \\ 
                 &\leq  \norm{\omega}_1 \int_0^t e^s \norm{v(s)}_{p}ds  +  
         c_3\varepsilon^2 \int_0^t e^s\,ds.  
    \end{aligned}
\end{equation} 
Finally, the statement follows by applying Gronwall's Lemma (see Lemma~\ref{lem::gronwall-lemma}).
\end{proof}

The previous result shows that the solutions to the linear and to the nonlinear equation are similar in norm, when the input is sufficiently small. 
It is worth noticing that a similar result can be recovered by replacing the smallness assumption on the input by an assumption on the firing rate function.

\begin{proposition}
\label{prop:quasi-linearity}
    Assume that there exists $\eta>0$ such that $f(s)= s$ for any $s$ such that $\abs{s} \leq \norm{I}^2_{L^\infty_t L^p_x}+\eta$. Then, letting $u_{\textsc{NL}}$ be any solution to the nonlinear equation \eqref{eq:mean_field} and $u_{\textsc{L}}$ be a solution to the corresponding linear equation, we have
    \begin{equation}
       \norm{u_{\textsc{NL}}-u_\textsc{L}}_{p} \xrightarrow{t \rightarrow \infty }{0}.
    \end{equation}
\end{proposition}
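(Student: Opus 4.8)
The plan is to exploit the fact that, under the hypothesis, the nonlinearity \emph{switches off} once a solution becomes small: on the region $\{|s|\le \norm{I}^2_{L^\infty_t L^p_x}+\eta\}$ we have $f(s)=s$, so any solution that eventually lives there actually solves the \emph{linear} equation, and two solutions of the linear equation are driven together by the exponential contraction already established in the proof of Theorem~\ref{thm:attractive_limit_cycle} (see \eqref{eq:dis-g}). Concretely, I would set $w=u_{\textsc{NL}}-u_{\textsc{L}}$. Since $u_{\textsc{L}}$ solves $\partial_t u_{\textsc{L}}=-u_{\textsc{L}}+\omega\ast u_{\textsc{L}}+I$ and $u_{\textsc{NL}}$ solves \eqref{eq:mean_field}, subtracting gives
\[
\partial_t w = -w + \omega\ast\bigl(f(u_{\textsc{NL}})-f(u_{\textsc{L}})\bigr) + \omega\ast\bigl(f(u_{\textsc{L}})-u_{\textsc{L}}\bigr),
\]
with $w(0)=u_{\textsc{NL}}(0)-u_{\textsc{L}}(0)$. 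Using $\norm{f'}_\infty\le 1$ we control $\norm{f(u_{\textsc{NL}})-f(u_{\textsc{L}})}_p\le\norm{w}_p$, so the whole coupling between the two solutions is absorbed into the same contractive term that appears in \eqref{eq:dis-g}, while the \emph{defect} $g(t)\coloneqq\norm{f(u_{\textsc{L}}(t))-u_{\textsc{L}}(t)}_p$ acts as an external forcing depending only on the linear solution.

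The next step is the Variation of Constants / Gronwall estimate (Lemma~\ref{lem::gronwall-lemma}) applied to $z(t)=e^t\norm{w(t)}_p$, exactly as in the proof of Theorem~\ref{thm:attractive_limit_cycle}, which yields a bound of the form
\[
\norm{w(t)}_p \le \norm{w(0)}_p\, e^{-(1-\norm{\omega}_1)t} + \norm{\omega}_1\int_0^t e^{-(1-\norm{\omega}_1)(t-s)}\,g(s)\,ds.
\]
The first term vanishes as $t\to\infty$ because $1-\norm{\omega}_1>0$, and the convolution of the decaying exponential kernel with $g$ tends to $0$ as soon as $g(t)\to 0$ (split the integral at $t/2$: the near-$t$ part is small since $g$ is small there, the early part is killed by the exponential factor). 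Thus the entire statement reduces to the single claim that the linear solution's defect satisfies $g(t)\to 0$. To see this I would invoke the \emph{linear} a priori bound of Proposition~\ref{rmk:linear-upper-bound}, namely $\limsup_{t\to\infty}\norm{u_{\textsc{L}}(\cdot,t)}_p\le\norm{I}_{L^\infty_t L^p_x}$; together with the margin $\eta$ in the hypothesis this forces $u_{\textsc{L}}(\cdot,t)$, for $t$ large, to sit (in the appropriate sense) inside the region where $f$ is the identity, so that $g(t)=0$ eventually.

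The main obstacle is precisely this last passage: the hypothesis $f(s)=s$ for $\abs{s}\le \norm{I}^2_{L^\infty_t L^p_x}+\eta$ is a \emph{pointwise} condition on the values of the solution, whereas Proposition~\ref{rmk:linear-upper-bound} only delivers control of the $L^p$ \emph{norm}. When $p=\infty$ there is nothing to do, since $\norm{u_{\textsc{L}}(\cdot,t)}_\infty$ below the threshold gives $f(u_{\textsc{L}}(\cdot,t))=u_{\textsc{L}}(\cdot,t)$ a.e., hence $g(t)=0$ for $t$ large and the argument closes immediately. For finite $p$ the defect need not vanish identically, and one must instead show that the tail vanishes: writing $g(t)=\norm{\bigl(f(u_{\textsc{L}}(t))-u_{\textsc{L}}(t)\bigr)\mathbf{1}_{\{\abs{u_{\textsc{L}}(t)}>M\}}}_p\le 2\,\norm{u_{\textsc{L}}(t)\,\mathbf{1}_{\{\abs{u_{\textsc{L}}(t)}>M\}}}_p$ with $M=\norm{I}^2_{L^\infty_t L^p_x}+\eta$, and arguing that this truncated $L^p$ mass tends to $0$ (via Chebyshev together with the eventual norm bound, or a dominated-convergence argument as $u_{\textsc{L}}$ settles onto its bounded periodic profile). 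This uniform-integrability-type estimate is the delicate point; everything else is a direct reuse of the Gronwall machinery already developed in the paper.
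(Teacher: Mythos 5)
Your proposal follows essentially the same route as the paper's own proof: the paper also decomposes $u_{\textsc{NL}} = u + u_{\textsc{L}}$, controls the coupling term by $\norm{u}_p$ via $\norm{f'}_\infty \le 1$, and closes with the Variation of Constants formula and Gronwall's Lemma (Lemma~\ref{lem::gronwall-lemma}), which is exactly your estimate. The only real divergence is the treatment of the defect $g(t) = \norm{f(u_{\textsc{L}}(t)) - u_{\textsc{L}}(t)}_p$: the paper never carries it as a forcing term, because it asserts outright that, by Proposition~\ref{rmk:linear-upper-bound} and the assumption on $f$, one has $f(u_{\textsc{L}}(t)) = u_{\textsc{L}}(t)$ pointwise for $t$ sufficiently large, i.e.\ $g(t) \equiv 0$ eventually, and then runs Gronwall on the homogeneous inequality. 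In other words, the step you isolate as ``the delicate point'' is precisely the step the paper performs without justification: it passes from the $L^p$-norm bound of Proposition~\ref{rmk:linear-upper-bound} to the pointwise condition $\abs{u_{\textsc{L}}(x,t)} \le \norm{I}^2_{L^\infty_t L^p_x} + \eta$ as if the two were equivalent.

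As you correctly note, they are equivalent only for $p = \infty$ (and even there one needs the threshold to dominate the limsup bound, which the square on $\norm{I}$ in the hypothesis only guarantees when $\norm{I}_{L^\infty_t L^p_x} \ge 1$). For finite $p$, an $L^p$ bound below the threshold does not prevent $u_{\textsc{L}}$ from exceeding it on a small set, and neither the paper's assertion nor your Chebyshev/uniform-integrability sketch closes the argument: a persistent tall spike of small measure is compatible with the conclusion of Proposition~\ref{rmk:linear-upper-bound}, yet keeps the truncated tail mass $\norm{u_{\textsc{L}}(t)\,\mathbf{1}_{\{\abs{u_{\textsc{L}}(t)}>M\}}}_p$ bounded away from zero, so $g(t) \not\to 0$ in general. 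Your write-up is thus, if anything, more careful and more robust than the paper's: your exponential-convolution step only requires $g(t)\to 0$ rather than $g\equiv 0$ eventually, and the gap you flag for $p<\infty$ is a genuine gap that the published proof shares. For $p=\infty$ both arguments are complete.
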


\begin{proof}
We can write the solution $u_{\textsc{NL}}$ of the nonlinear equation \eqref{eq:mean_field} as sum of the linear  $u_{\textsc{L}}$ solution for the linear equation \eqref{eq:mean_field-lin-eps} plus a function $u$ in the following way:
    \begin{equation}
        u_{\textsc{NL}} = u + u_{\textsc{L}};
    \end{equation}
where $u$ satisfies:
\begin{equation}
\label{eq:ode_u}
    \partial_t u = -u + \omega \ast [f(u_{\textsc{NL}})-u_{\textsc{L}}].
\end{equation}
By Proposition~\ref{rmk:linear-upper-bound} and the assumption on $f$, for $t$ sufficiently large we have $f(u_{\textsc{L}}(t))= u_{\textsc{L}}(t)$. Hence, for $t$ sufficiently large and any $x\in \R^d$ it holds
\begin{equation}
\label{eq:taylor2}
    \begin{aligned}
        f(u_{\textsc{NL}}) - u_{\textsc{L}} 
        &= f(u+ u_{\textsc{L}}) - f(u_{\textsc{L}})\\
        &=\int_{u_{\textsc{L}}}^{u+u_{\textsc{L}}}f'(s)ds.
    \end{aligned}
\end{equation}

Taking the $L^p$ norm of the solution to \eqref{eq:ode_u} with initial condition $u_0 \in L^p(\R^d)$ written using Variation of Constant formula, exploiting \eqref{eq:taylor2}, we have:
\begin{equation}
    \begin{aligned}
        e^t\norm{u(t)}_{p} -\norm{u_0}_{p}& \leq  \norm{\omega}_1 \int_0^t e^s \left\|{\int_{u_{\textsc{L}}}^{u+u_{\textsc{L}}}f'(s)ds}\right\|_{p} \\
               &\leq  \norm{\omega}_1 \int_0^t e^s \norm{u( s)}_{p} ds .
    \end{aligned}
\end{equation}
Applying Gronwall's Lemma (see Lemma~\ref{lem::gronwall-lemma}), the statement follows.
\end{proof}

\section{Explicit solutions in the linear case}
\label{sec:linear}
In this section, we set the firing rate function to be linear, i.e. $f(s)= s $, and we deal with the following semi-linear neural field equation:
\begin{equation}
\label{eq:mean_field-lin}
\frac{\partial u}{\partial t}= - u + \omega\ast u + I,
\end{equation}
for a $T$-periodic input $I$.
The goal is to provide an explicit expression for its unique attractive periodic solution.
To lighten the notation, we will assume the periodicity of the external input $I$ to be $T=2\pi$.

\subsection{Decomposition of the solution}
Let us define $a(\xi) \coloneqq \hat{\omega}(\xi)-1$, where $\hat{\omega}$ is the Fourier transform of the kernel $\omega$.
For a generic function $u \in L^p(\R^d)$ allowing Fourier transform, we have:
    \begin{equation}
    \label{eq:FT_op}
        \begin{aligned}
            Au(x) = & \int_{\R^d} a(\xi)\hat{u}(\xi)e^{2\pi i \langle x , \xi \rangle} d\xi,  \\
            e^{tA}u(x) = & \int_{\R^d} e^{ta(\xi)}\hat{u}(\xi)e^{2\pi i \langle x , \xi \rangle} d\xi,\\
        \end{aligned}
    \end{equation}
using the operator notation introduced in equation \eqref{eq:operator_notation} with $f(s)=s$. 

\begin{remark}
Due to the fact that $\omega$ is even, we have that $a(\xi)$ is real for any $\xi\in\R$. Hence, 
\begin{equation}
\label{eq:well_def_den}
     i\ell-a(\xi) \neq 0 \text{ for all } \ell \in \Z \text{ and } \xi \in \R.
\end{equation}  
Moreover, thanks to $\norm{\omega}_1 < 1$,  it holds $a(\xi) < - c$ with $c$ real positive constant.
\end{remark}

Since the input $I$ is $2\pi$-periodic, we can write it as the Fourier series
\begin{equation}
\label{ass:Fourier_Series}
    I(x, t) = \sum_{\ell \in \Z} I_\ell(x) e^{i \ell t}, \qquad \forall t\in \mathbb{R},
\end{equation}
here, we let
\begin{equation}
    \label{eq:Il}
I_{\ell}(x) = \frac{1}{2\pi} \int_0^{2\pi} I(x, \tau) e^{-i \ell \tau} \, d\tau.    
\end{equation}
We denote by $\hat I_\ell$ the Fourier transform in space of $I_\ell$. With these notations, we have the following.

\begin{proposition}
\label{prop:zero_term}
The initial condition $u_{0, I} \in L^p(\R^d)$ that determines the periodic orbit $u^\star$ reads
    \begin{equation}
     u_{0, I} = \sum_{\ell \in Z }  \int_{\R^d}  \frac{\hat{I_\ell}(\xi)e^{2\pi i \langle x , \xi \rangle}}{ i\ell-a(\xi)}  d\xi.
    \end{equation}
\end{proposition}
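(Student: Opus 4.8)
The plan is to start from the closed-form initial condition of Proposition~\ref{rmk:initial-condition-periodic} (equivalently, from the periodicity requirement $u(0)=u(2\pi)$ applied to the Variation of Constants formula \eqref{eq:variation_constant}) and to evaluate it in the spatial Fourier domain, where $A$ becomes a multiplication operator. Concretely, by \eqref{eq:FT_op} the operators $e^{-sA}$ and $(\Id-e^{2\pi A})^{-1}$ act on the Fourier side as multiplication by $e^{-s a(\xi)}$ and by $(1-e^{2\pi a(\xi)})^{-1}$, the latter being well defined because $a(\xi)<-c<0$ keeps $e^{2\pi a(\xi)}$ bounded away from $1$. Taking the spatial Fourier transform of $u_{0,I}$ thus reduces the whole expression to a scalar computation in $\xi$, frequency by frequency.

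Next I would insert the time Fourier series \eqref{ass:Fourier_Series}--\eqref{eq:Il} of the input, so that $\widehat{I(\cdot,s)}(\xi)=\sum_{\ell\in\Z}\hat I_\ell(\xi)\,e^{i\ell s}$, and compute the time integral $\int_0^{2\pi} e^{-sA}I(s)\,ds$ term by term. After interchanging the sum with the integral, the key elementary computation is
\begin{equation}
\int_0^{2\pi} e^{s(i\ell-a(\xi))}\,ds = \frac{e^{2\pi(i\ell-a(\xi))}-1}{i\ell-a(\xi)} = \frac{e^{-2\pi a(\xi)}-1}{i\ell-a(\xi)},
\end{equation}
where the crucial simplification $e^{2\pi i\ell}=1$ for $\ell\in\Z$ removes the time oscillation and produces the denominator $i\ell-a(\xi)$ appearing in the statement; this denominator never vanishes by \eqref{eq:well_def_den}.

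The remaining step is purely algebraic: multiplying the time integral by the scalar coming from $(\Id-e^{2\pi A})^{-1}$, together with the $e^{2\pi A}$ factor produced by the periodicity condition, I expect the exponential prefactors to collapse, since
\begin{equation}
\frac{e^{2\pi a(\xi)}}{1-e^{2\pi a(\xi)}}\bigl(e^{-2\pi a(\xi)}-1\bigr)=\frac{1-e^{2\pi a(\xi)}}{1-e^{2\pi a(\xi)}}=1.
\end{equation}
This leaves exactly $\widehat{u_{0,I}}(\xi)=\sum_{\ell\in\Z}\hat I_\ell(\xi)/(i\ell-a(\xi))$, and the Fourier inversion formula then yields the claimed expression for $u_{0,I}(x)$.

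The step I expect to be the main obstacle is not the scalar algebra but the justification of the interchanges: swapping the infinite sum over $\ell$, the time integral over $[0,2\pi]$, and the spatial Fourier transform and its inverse. This requires controlling the tails in $\ell$ and in $\xi$ simultaneously. The bound $a(\xi)<-c$ gives $\abs{e^{s a(\xi)}}\le 1$ on $[0,2\pi]$ and keeps $(1-e^{2\pi a(\xi)})^{-1}$ bounded, while $\abs{i\ell-a(\xi)}\ge\max\{\abs{\ell},c\}$ controls the denominators; combined with the decay of $\hat I_\ell$ in $\ell$ (inherited from the time-regularity of $I$) and the spatial integrability needed to apply Fourier inversion, a Fubini and dominated-convergence argument makes the termwise manipulation rigorous.
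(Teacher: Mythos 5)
Your proof is correct, but it is organized differently from the paper's. The paper never inverts $(\Id-e^{TA})$: it takes the solution through $u_{0,I}$, splits the non-homogeneous term of the Variation of Constants formula on the Fourier side as $e^{tA}g+\sum_{\ell}(I_\ell\ast\cdot)e^{i\ell t}$ with $\hat g(\xi)=-\sum_\ell \hat I_\ell(\xi)/(i\ell-a(\xi))$, and then argues that the solution is periodic if and only if the transient $e^{tA}(u_{0,I}+g)$ vanishes identically, i.e.\ $u_{0,I}=-g$. You instead impose the fixed-point condition $u(0)=u(2\pi)$ of the Poincar\'e map --- legitimate, since Theorem~\ref{thm:attractive_limit_cycle} guarantees this fixed point is unique and is the initial datum of $u^\star$ --- and solve for $\widehat{u_{0,I}}$ frequency by frequency. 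The same elementary integral $\int_0^{2\pi}e^{s(i\ell-a(\xi))}\,ds$ drives both arguments, so the computational core is shared. What the paper's route buys is the full time-dependent decomposition of the solution, which is immediately recycled to prove Theorem~\ref{prop:limit-cycle-char}; what your route buys is a shorter, purely algebraic derivation, and you are more explicit than the paper about the Fubini/dominated-convergence justifications (which the paper's proof omits entirely).

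One caveat you should fix. The formula you quote from Proposition~\ref{rmk:initial-condition-periodic} is \emph{not} what your computation actually uses. As printed, that proposition reads $u_{0,I}=(\Id-e^{TA})^{-1}\bigl[-\int_0^Te^{-sA}I(s)\,ds\bigr]$, whereas the periodicity requirement $u(0)=u(T)$ applied to \eqref{eq:variation_constant} gives $(\Id-e^{TA})u_0=e^{TA}\int_0^Te^{-sA}I(s)\,ds$, i.e.\ $u_{0,I}=(\Id-e^{TA})^{-1}e^{TA}\int_0^Te^{-sA}I(s)\,ds$; the two expressions differ by the factor $-e^{-TA}$, so they are not equivalent (the printed formula carries a sign/factor slip). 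Your algebra uses the correct version: the factor $e^{2\pi a(\xi)}$ in your cancellation $\frac{e^{2\pi a(\xi)}}{1-e^{2\pi a(\xi)}}\bigl(e^{-2\pi a(\xi)}-1\bigr)=1$ is exactly the $e^{TA}$ above, and it is what produces the clean answer. Had you literally inserted the printed formula of Proposition~\ref{rmk:initial-condition-periodic}, the prefactor would have come out as $-e^{-2\pi a(\xi)}$ instead of $1$, contradicting the very statement you are proving. So drop the word ``equivalently'' and anchor the proof on the periodicity requirement alone.
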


\begin{proof}
We consider the solution $u(t)$ of \eqref{eq:mean_field-lin} with initial condition $u_{0, I}$ and we focus on the non-homogeneous term of  $u(t)$ written using Variation of Constant formula. Taking advantage of the notation introduced in \eqref{eq:FT_op}, we have
\begin{equation}
\begin{aligned}
     &e^{tA}\int_0^t e^{-sA}I(s)ds\\
     &=e^{tA}g + \sum_{\ell \in Z } \left( \int_{\R^d} \frac{\hat{I_\ell}(\xi)e^{2\pi i \langle x , \xi \rangle}}{ i\ell-a(\xi)}   d\xi\right) e^{ i \ell t} \\
\end{aligned}
\end{equation}
with 
\begin{equation}
    g(x) \coloneqq - \sum_{\ell \in Z }  \int_{\R^d}  \frac{\hat{I_\ell}(\xi)e^{2\pi i \langle x , \xi \rangle}}{ i\ell-a(\xi)}  d\xi .
\end{equation}
Since $u(t)$ is periodic and coincides with $u^\star(t)$ if and only if $u_{0, I} + g = 0$, then  $u_{0, I} = -g$.
\end{proof}

\begin{theorem}
\label{prop:limit-cycle-char}
Let $u^\star$ be the globally attractive periodic solution of \eqref{eq:mean_field-lin}. The following decomposition holds:
 \begin{equation}
\label{eq:limit_cycle}
\begin{aligned}
    u^{\star}(x, t) 
     & = I(x, t) + \sum_{\ell \in \Z}(I_\ell\ast K_\ell)(x) e^{ i \ell t},
\end{aligned}
\end{equation} 
where $I_\ell$ is defined in \eqref{eq:Il}, and the Fourier transform of $K_\ell$ is given by
\begin{equation}
	\label{eq:Kl}
	\hat{K_\ell}(\xi) = \frac{  i \ell -\hat{\omega}(\xi)}{ i\ell+1-\hat{\omega}(\xi)}.
\end{equation}
\end{theorem}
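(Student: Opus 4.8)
The plan is to avoid re-deriving the periodic orbit and instead to exploit the explicit initial datum $u_{0,I}$ already computed in Proposition~\ref{prop:zero_term} together with the Variation of Constants representation~\eqref{eq:variation_constant}. Writing the attractive solution as $u^\star(t)=e^{tA}u_{0,I}+e^{tA}\int_0^t e^{-sA}I(s)\,ds$, I would reuse the splitting of the non-homogeneous term established inside the proof of Proposition~\ref{prop:zero_term}, namely $e^{tA}\int_0^t e^{-sA}I(s)\,ds = e^{tA}g + \sum_{\ell\in\Z}(\int_{\R^d}\hat I_\ell(\xi)e^{2\pi i\langle x,\xi\rangle}/(i\ell-a(\xi))\,d\xi)\,e^{i\ell t}$ with $g=-u_{0,I}$. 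Substituting this, the two homogeneous contributions $e^{tA}u_{0,I}$ and $e^{tA}g$ cancel exactly, leaving
\[
u^\star(x,t)=\sum_{\ell\in\Z}\left(\int_{\R^d}\frac{\hat I_\ell(\xi)\,e^{2\pi i\langle x,\xi\rangle}}{i\ell-a(\xi)}\,d\xi\right)e^{i\ell t},
\]
a time-Fourier series whose $\ell$-th spatial coefficient is the inverse Fourier transform of the multiplier $\hat I_\ell(\xi)/(i\ell-a(\xi))$, well defined since $i\ell-a(\xi)\neq0$ by~\eqref{eq:well_def_den}.

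The remaining step is algebraic. Recalling $a(\xi)=\hat\omega(\xi)-1$, so that $i\ell-a(\xi)=i\ell+1-\hat\omega(\xi)$ is exactly the denominator appearing in~\eqref{eq:Kl}, I would peel off the identity part of the multiplier by writing
\[
\frac{1}{i\ell+1-\hat\omega(\xi)}=1+\hat K_\ell(\xi),
\]
which determines $\hat K_\ell$ uniquely and, after a one-line simplification, yields the closed form~\eqref{eq:Kl}. By the convolution theorem the $\ell$-th coefficient then equals $I_\ell+I_\ell\ast K_\ell$; the constant ``$1$'' reassembles, through the time-Fourier series~\eqref{ass:Fourier_Series}, into the full input $I(x,t)$, and the remaining sum is $\sum_\ell (I_\ell\ast K_\ell)(x)e^{i\ell t}$, giving precisely~\eqref{eq:limit_cycle}.

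An equivalent and more transparent derivation posits the ansatz $u^\star(x,t)=\sum_\ell u_\ell^\star(x)e^{i\ell t}$, inserts it into~\eqref{eq:mean_field-lin}, and matches time-harmonics. This turns the equation into the family of spatial relations $i\ell\,u_\ell^\star=-u_\ell^\star+\omega\ast u_\ell^\star+I_\ell$, whose spatial Fourier transform is the scalar identity $(i\ell+1-\hat\omega(\xi))\hat u_\ell^\star(\xi)=\hat I_\ell(\xi)$; solving and splitting the multiplier as above recovers the same expression, while the uniqueness of the periodic solution (Theorem~\ref{thm:attractive_limit_cycle}) guarantees that the series so constructed is indeed the attractive orbit $u^\star$.

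The core difficulty is analytic rather than algebraic: one must justify every termwise manipulation — the convergence of the time-Fourier expansion of $u^\star$, the legitimacy of interchanging the summation over $\ell$ with the spatial integral and with the Fourier transform, and the sense in which each $K_\ell$ defines a bona fide convolution kernel. The multiplier $\hat K_\ell$ is bounded because $i\ell-a(\xi)$ stays away from $0$ (indeed $a(\xi)<-c<0$), but promoting it to a genuine kernel $K_\ell$ and controlling the $\ell$-sum requires decay estimates on $\hat\omega$ at infinity and on the Fourier coefficients $I_\ell$, the latter furnished by the time-regularity of $I$. Establishing these integrability and summability bounds, so that the formal computation above is rigorous in $L^p$, is where the real work lies.
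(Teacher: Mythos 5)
Your proposal is correct and takes essentially the same route as the paper: the paper's proof likewise reads off the expansion $u^\star(x,t)=\sum_{\ell\in\Z}u^\star_\ell(x)e^{i\ell t}$ with $\hat u^\star_\ell=\hat I_\ell/(i\ell-a)$ from the proof of Proposition~\ref{prop:zero_term}, splits the multiplier as $1/(i\ell-a(\xi))=1+\hat K_\ell(\xi)$, and concludes via the convolution theorem, exactly as you do (your harmonic-matching ansatz and the summability caveats are extras the paper does not address either). One small remark: solving $1+\hat K_\ell(\xi)=1/(i\ell+1-\hat\omega(\xi))$ actually yields $\hat K_\ell(\xi)=(\hat\omega(\xi)-i\ell)/(i\ell+1-\hat\omega(\xi))$, the negative of \eqref{eq:Kl}, a sign slip your write-up shares with the paper's own Eq.~\eqref{eq:kernel_Kl}.
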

\begin{proof}
It follows from the proof of Proposition \ref{prop:zero_term} that we can write the periodic attractive solution $u^{\star}$ of Theorem \ref{thm:attractive_limit_cycle} using the following expansion
    \begin{equation}
    \label{eq:series-sol}
        u^{\star}(x,t) = \sum_{\ell \in \Z} u^{\star}_\ell(x) e^{ i \ell t } \text{ for all } t,
    \end{equation}
    where the coefficients $u^{\star}_\ell(x)$ are given by: 
    \begin{equation}
    \label{eq:coeff_Il}
        u^{\star}_\ell(x) = \int_{\R^d} \frac{\hat{I_\ell}(\xi)}{ i \ell -  a(\xi)}e ^{2  \pi i \langle x , \xi \rangle}d\xi.
    \end{equation}
Since $a(\xi) = \hat{\omega}(\xi)-1$, we have:
\begin{equation}
        \label{eq:kernel_Kl}
        \frac{1}{ i\ell-a(\xi)}= 1 + \hat{K_\ell}(\xi)\text{ , }  \hat{K_\ell}(\xi) = \frac{  i \ell -\hat{\omega}(\xi)}{ i\ell+1-\hat{\omega}(\xi)}.
\end{equation}
Substituting \eqref{eq:kernel_Kl} in  \eqref{eq:coeff_Il}, the proof of the statement follows applying the Convolution theorem to the product  $\hat{I_\ell}\hat{K_\ell}$.
\end{proof}

\subsection{Inverting 1D kernels}
From now on we will consider $d = 1$ (i.e $\R^d = \R$), and in accordance with previous results on static external input  (\cite{tamekue2023mathematical, tamekue2024reproducibility}), we assume the kernel to be defined as a difference of Gaussians \footnote{The difference of Gaussians kernel, commonly used in visual processing, models neural interactions by encoding short-range excitation and long-range inhibition between neurons.}:
\begin{equation}
\label{eq:DoG_kernel}
            \omega (x) := (2\pi\sigma_1^2)^{-1}e^{ - \frac{\abs{x}^2}{2\sigma_1^2}}- k(2\pi\sigma_2^2)^{-1}e^{ - \frac{\abs{x}^2}{2\sigma_2^2}}
\end{equation}
with $ k \geq 1$, $0 \leq \sigma_1 < \sigma_2$, and $ \sigma_1 \sqrt{k} \leq \sigma_2$. For reasons that will be evident later on, we assume moreover that $\sigma_1^2/\sigma_2^2 \in \Q$.

\begin{remark}
    According, e.g., to \cite{tamekue2023mathematical}, letting $\sigma_2=\alpha \sigma_1$ for $\alpha > 1$, we have
    \begin{equation}
        \|\omega\|_1= 1- k+ 2\left[ k \left(\frac{\alpha ^2}{k}\right)^{\frac{1}{1-\alpha ^2}} -\left(\frac{\alpha ^2}{k}\right)^{\frac{\alpha ^2}{1-\alpha ^2}}  \right].
    \end{equation}
    As a consequence, the assumption $\|\omega\|_1<1$ is satisfied for $k$ and $\alpha$ sufficiently near to $1$. In particular, if $\alpha=\sqrt{2}$, this holds for $k\in [1,2)$.
\end{remark}

The Fourier transform of $\omega$ can be explicitly expressed as
\begin{equation}
\label{eq:omega_hat}
     \begin{aligned}
        \hat{\omega}(\xi) =& e^{-2\pi^2\sigma_1^2 \xi^2}- ke^{ - 2\pi^2\sigma_2^2\xi^2}.\\
    \end{aligned}
\end{equation}

The objective here is to compute the inverse transform of $\hat{K_\ell}(\xi)$ when $\omega$ is defined as \eqref{eq:DoG_kernel}. 
The idea is to apply the Residue theorem to compute the real integral of
\begin{equation}
\label{eq:meromorphic}
\hat{K_\ell}(\xi)e^{2\pi i x \xi}= \frac{(i \ell -\hat{\omega}(\xi))e^{2\pi i x \xi}}{ 1 + i \ell  - \hat{\omega}(\xi)},\quad \xi \in \R,  
\end{equation}
generalizing the strategy adopted in the case of static inputs in \cite{tamekue2023mathematical}.

By definition, the function \eqref{eq:meromorphic} consists of exponentials that can be extended into the complex plane $\C$. We then introduce the set of its simple poles
\begin{equation}
\label{def:poles}
P_\ell \coloneqq \{z \in \C \mid 1+ i\ell-\hat{\omega}(z) = 0\}.
\end{equation}
As noted in \cite{tamekue2023mathematical} (see also \cite{berenstein2012complex}), given that $1+i\ell -\hat{\omega}(z)$ is a complex exponential polynomial, the assumption $\sigma_1^2/\sigma_2^2 \in \Q$ implies that $P_\ell$ is discrete and countable.

Denoting by $\bar{z}$ the complex conjugate of $z\in \C$, the set of poles has the following symmetries.

\begin{proposition}
\label{prop:poles}
Let $z\in P_\ell$, $\ell\in\Z$. Then,
\begin{equation}
    -z \in P_\ell
    \qquad\text{and}\qquad
    \bar{z}, -\bar{z} \in P_{-\ell}.
\end{equation}
\end{proposition}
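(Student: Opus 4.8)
The plan is to exploit the structure of the defining equation $1+i\ell-\hat\omega(z)=0$ together with the explicit form of $\hat\omega$. First I would observe that, from \eqref{eq:omega_hat}, the Fourier transform $\hat\omega(z)=e^{-2\pi^2\sigma_1^2 z^2}-ke^{-2\pi^2\sigma_2^2 z^2}$ is an even function of $z$: it depends on $z$ only through $z^2$. Consequently, if $z\in P_\ell$, then $\hat\omega(-z)=\hat\omega(z)$, so $1+i\ell-\hat\omega(-z)=1+i\ell-\hat\omega(z)=0$, which immediately gives $-z\in P_\ell$. This handles the first claimed symmetry with essentially no computation.

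For the conjugation symmetry, the key input is that $\omega$ is real-valued, so its Fourier transform satisfies the Hermitian relation $\overline{\hat\omega(\bar z)}=\hat\omega(z)$; equivalently, since $\hat\omega$ extends to an entire function with real Taylor coefficients (the coefficients of $e^{-2\pi^2\sigma_j^2 z^2}$ are real), one has $\overline{\hat\omega(z)}=\hat\omega(\bar z)$ for all $z\in\C$. I would then take the complex conjugate of the defining relation: if $z\in P_\ell$, then $0=\overline{1+i\ell-\hat\omega(z)}=1-i\ell-\overline{\hat\omega(z)}=1-i\ell-\hat\omega(\bar z)=1+i(-\ell)-\hat\omega(\bar z)$, which says precisely that $\bar z\in P_{-\ell}$. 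Combining this with the evenness already established (apply the first symmetry within the set $P_{-\ell}$) yields $-\bar z\in P_{-\ell}$ as well.

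The main (and only real) obstacle is justifying that $\hat\omega$ extends to an entire function and that conjugation commutes with it as claimed, i.e.\ $\overline{\hat\omega(z)}=\hat\omega(\bar z)$. This is not automatic from $\hat\omega$ being a Fourier transform of a real function on the real line alone; it requires noting that the explicit expression \eqref{eq:omega_hat} is a sum of Gaussians $e^{-2\pi^2\sigma_j^2 z^2}$ with real coefficients, each of which is entire and satisfies the conjugation identity by its absolutely convergent power series with real coefficients. Once this analytic-continuation point is settled, both identities reduce to the two elementary algebraic manipulations above.

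\begin{proof}
Recall from \eqref{eq:omega_hat} that $\hat\omega$ extends to an entire function on $\C$, given by $\hat\omega(z)=e^{-2\pi^2\sigma_1^2 z^2}-ke^{-2\pi^2\sigma_2^2 z^2}$. Since $\hat\omega$ depends on $z$ only through $z^2$, it is even: $\hat\omega(-z)=\hat\omega(z)$ for all $z\in\C$. Moreover, each summand has an absolutely convergent power series with real coefficients, whence $\overline{\hat\omega(z)}=\hat\omega(\bar z)$ for all $z\in\C$.

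Let now $z\in P_\ell$, so that $1+i\ell-\hat\omega(z)=0$. Using evenness,
\begin{equation}
1+i\ell-\hat\omega(-z)=1+i\ell-\hat\omega(z)=0,
\end{equation}
hence $-z\in P_\ell$. Taking the complex conjugate of the defining relation and using $\overline{\hat\omega(z)}=\hat\omega(\bar z)$,
\begin{equation}
0=\overline{1+i\ell-\hat\omega(z)}=1-i\ell-\hat\omega(\bar z)=1+i(-\ell)-\hat\omega(\bar z),
\end{equation}
so that $\bar z\in P_{-\ell}$. Applying the first symmetry within $P_{-\ell}$ then gives $-\bar z\in P_{-\ell}$, which completes the proof.
\end{proof}
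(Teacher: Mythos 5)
Your proof is correct. The first half (evenness of $\hat\omega$ giving $-z\in P_\ell$) is exactly the paper's argument. For the conjugation symmetry, however, you take a genuinely cleaner route than the paper: you invoke the reflection identity $\overline{\hat\omega(z)}=\hat\omega(\bar z)$, justified by the fact that $\hat\omega$ is an entire function built from Gaussians with real coefficients, and then simply conjugate the defining equation $1+i\ell-\hat\omega(z)=0$ to land in $P_{-\ell}$. The paper instead substitutes the explicit form \eqref{eq:omega_hat} into the defining equation and separates it by hand into a system of two real equations in $\Re z$, $\Im z$ (its equation \eqref{eq:decomposition_mD}), observing that replacing $\Im z$ by $-\Im z$ flips the sign of the sine terms and hence of $\ell$. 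The two arguments rest on the same underlying fact --- the paper's real/imaginary decomposition is precisely a hand verification of $\overline{\hat\omega(z)}=\hat\omega(\bar z)$ --- but yours abstracts it, which makes the proof shorter, less error-prone, and immediately applicable to any kernel whose Fourier transform continues to an entire function that is real on the real axis, not just the difference of Gaussians. What the paper's explicit computation buys in exchange is the system \eqref{eq:decomposition_mD} itself, a concrete characterization of the poles in terms of $\Re z$ and $\Im z$ that is in the spirit of the quantitative pole analysis carried out later (Proposition~\ref{prop:Rez_Imz}). One small remark on your discussion: the identity $\overline{\hat\omega(z)}=\hat\omega(\bar z)$ would in fact also follow abstractly from analyticity plus the realness of $\hat\omega$ on $\R$ (via the identity theorem), so your cautionary paragraph slightly oversells the obstacle; but your direct verification through the power series is perfectly rigorous and arguably the right level of detail here.
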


The proof of this proposition depends on the explicit form of the kernel $\omega$ introduced in \eqref{eq:DoG_kernel}. Details are provided in Appendix~\ref{app:proofs} (see Proposition \ref{prop:app_poles}). 

For the static case ($\ell = 0$), it directly follows that the presence of one pole automatically determines the presence of three additional ones.

\begin{corollary}
\label{cor:0-poles}
Assume $\ell = 0$. If $z \in P_0 $ , then $\bar{z}, -\bar{z}, -z \in P_0$.
\end{corollary}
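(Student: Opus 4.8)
The plan is to obtain the corollary as the immediate specialization of Proposition~\ref{prop:poles} to the case $\ell=0$. Indeed, setting $\ell=0$ in that proposition gives $-z\in P_0$ directly, while the second assertion $\bar z,-\bar z\in P_{-\ell}$ becomes $\bar z,-\bar z\in P_{-0}=P_0$, since trivially $-0=0$. Collecting the two conclusions yields $\bar z,-\bar z,-z\in P_0$, which is exactly the claim. Thus no new computation is required beyond invoking the already-proved proposition and observing the tautology $P_{-0}=P_0$.

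For completeness I would also note the elementary mechanism that makes this work, without appealing to Proposition~\ref{prop:poles}. By \eqref{def:poles}, membership $z\in P_0$ is equivalent to $\hat{\omega}(z)=1$, where the analytic continuation of $\hat{\omega}$ is $\hat{\omega}(z)=e^{-2\pi^2\sigma_1^2 z^2}-k\,e^{-2\pi^2\sigma_2^2 z^2}$, read off from \eqref{eq:omega_hat}. This function depends on $z$ only through $z^2$, hence it is even, $\hat{\omega}(-z)=\hat{\omega}(z)$; and since its Taylor coefficients are real, it satisfies $\hat{\omega}(\bar z)=\overline{\hat{\omega}(z)}$. Therefore, if $\hat{\omega}(z)=1$ then at once $\hat{\omega}(-z)=1$, $\hat{\omega}(\bar z)=\overline{1}=1$, and $\hat{\omega}(-\bar z)=\hat{\omega}(\bar z)=1$, giving $-z,\bar z,-\bar z\in P_0$.

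There is no genuine obstacle here: the statement is a one-line consequence of Proposition~\ref{prop:poles}. The only point worth stating explicitly is that the involution $\ell\mapsto-\ell$ fixes $\ell=0$, so the two sets $P_\ell$ and $P_{-\ell}$ appearing in the proposition coincide in the static case. This is precisely why all four symmetric points $z,-z,\bar z,-\bar z$ land in the single set $P_0$, rather than being distributed between $P_\ell$ and $P_{-\ell}$ as happens for general $\ell$.
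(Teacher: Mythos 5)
Your proposal is correct and takes the same route as the paper: there the corollary carries no separate proof, being presented as an immediate consequence of Proposition~\ref{prop:poles} specialized to $\ell=0$, exactly via the observation that $P_{-0}=P_0$. Your supplementary self-contained argument (evenness of $\hat{\omega}$ in $z$ together with the reflection identity $\hat{\omega}(\bar z)=\overline{\hat{\omega}(z)}$ coming from real coefficients) is also valid, and is in fact a cleaner version of the real/imaginary-part computation the paper carries out for general $\ell$ in Proposition~\ref{prop:app_poles} of Appendix~\ref{app:proofs}.
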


We then compute $K_\ell$, leveraging on the symmetries of Proposition \ref{prop:poles}.
More precisely, we focus on the poles that lie in the positive quadrant 
\begin{equation}
\label{eq:P+}
    \mathbf{P}^{+}_\ell \coloneqq \{ z \in P_\ell\cup P_{-\ell} 
        \mid \Re z > 0 , \Im z>0\};
\end{equation}
in particular, we can write 
\begin{equation}
    \mathbf{P}_\ell^+ = \bigcup_{n=0}^{+\infty} \bigg(\mathbf{P}_\ell^+\cap \{|z|<R_n\}\bigg),
\end{equation}
where $R_n = (M n + \varepsilon)^{1/\alpha}$, for appropriate positive constants $M, \varepsilon, \alpha$. 
Since $\mathbf{P}_\ell^+$ is discrete, we pick the enumeration\footnote{To lighten the notation, we do not explicit the dependence of $z_j$ on $\ell$.} 
\begin{equation}
	\label{eq:poles++}
	\mathbf{P}_\ell^+=\{z_0,z_1,\ldots\}
\end{equation}
 such that $\Im z_j\le \Im z_{j+1}$ and, if $\Im z_j = \Im z_{j+1}$, it holds ${\Re z_j}\le {\Re z_{j+1}}$.

\begin{figure}[tbh]
    \centering
    \includegraphics[width=.45 \textwidth]{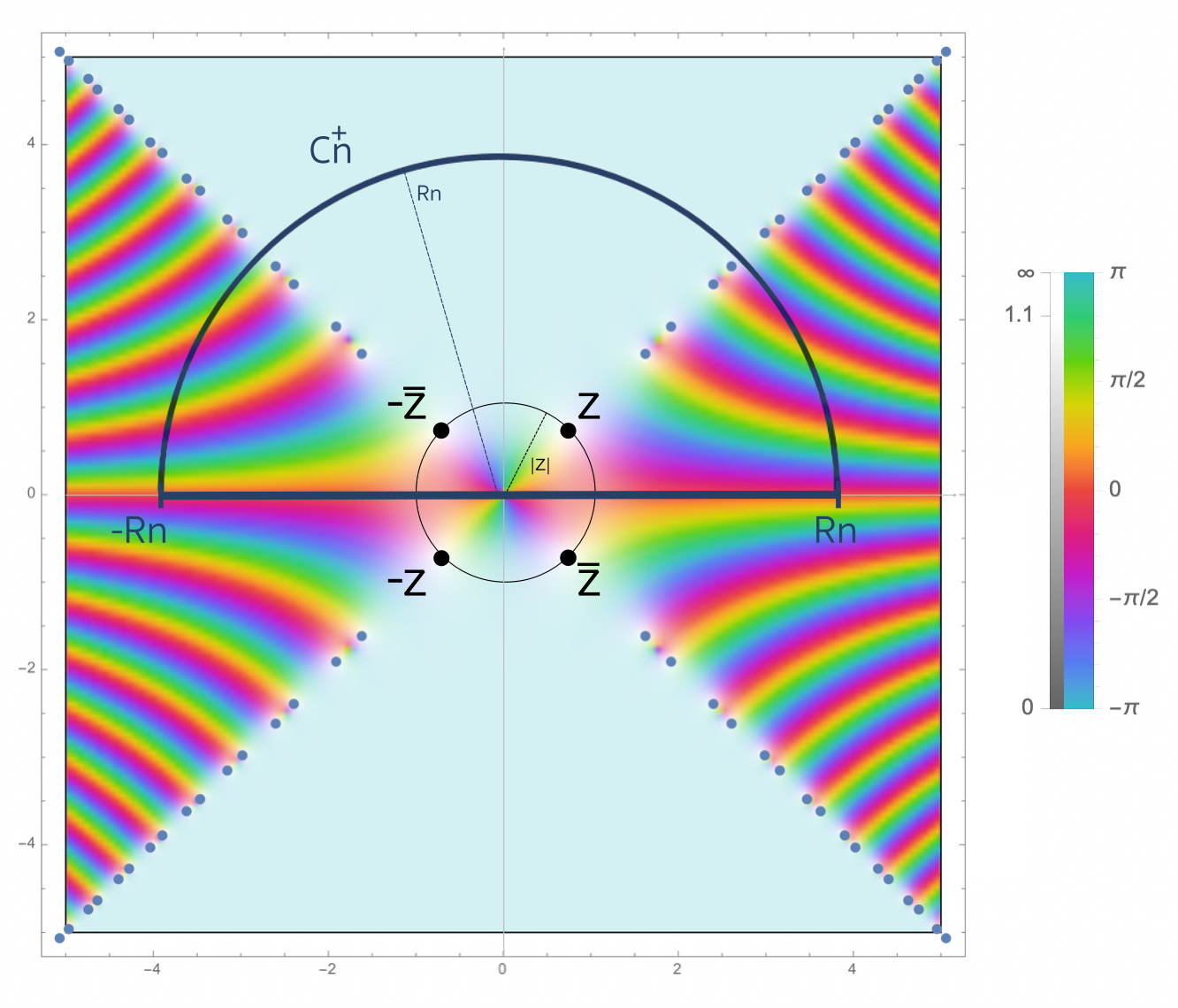}
\caption{Visualization of the meromorphic continuation of \eqref{eq:meromorphic} for $\ell = 0$ in the complex plane $\mathbb{C}$. The parameters for the kernel $\omega$ are set to be $k = 1$, $2\pi^2\sigma_1^2 = 1$, and $2\pi^2\sigma_2^2 = 2$. The color indicates the argument of the function values in the complex plane, with the graph becoming paler as the absolute value increases. The poles (zeros of $1+\hat{\omega}(z)$) are highlighted in blue: for any $z \in P_0$, the points $-z, -\bar{z}, \bar{z} \in P_0$ (see Corollary \ref{cor:0-poles}). Finally, the thick black line represents the integration path used in Proposition \ref{prop:K_series}.}

\label{fig:example_poles}
\end{figure}

\begin{proposition}
\label{prop:K_series}
The kernel $K_\ell(x)$ defined in \eqref{eq:Kl} can be recast, if $\ell \neq 0$, as
\begin{equation}
{K}_\ell(x) =  
    \sum_{j=0}^{+\infty}\frac{2\pi e^{-2\pi \lvert x \rvert \Im z_{j}}}{c_j}e^{-\sgn(\ell)i(2\pi \Re z_j\lvert x\rvert+\phi)},
\end{equation}
or, if $\ell = 0$, as 
\begin{equation}
{K}_\ell(x) = \sum_{j=0}^{+\infty}\frac{4\pi e^{-2\pi \lvert x \rvert \Im z_{j}}}{c_j}\cos(2\pi \lvert x \rvert \Re z_j+ \phi),
\end{equation} 
where $z_j \in \mathbf{P}_\ell^+$, and we denote with $c_j := \hat{\omega}'(z_j)$, and with $\phi:=\arctan\left(\Re c_j/\Im c_j\right)$.
 \end{proposition}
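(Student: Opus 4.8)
The plan is to compute $K_\ell(x)$ by applying the residue theorem to the function in \eqref{eq:meromorphic}, integrated over a suitable contour in the complex plane, and then to exploit the pole symmetries of Proposition~\ref{prop:poles} to collapse the sum over all poles into a sum over only those in the positive quadrant $\mathbf{P}_\ell^+$. First I would write $K_\ell(x)$ as the inverse Fourier transform
\begin{equation}
    K_\ell(x) = \int_{\R} \hat{K}_\ell(\xi)\, e^{2\pi i x \xi}\, d\xi = \int_\R \frac{(i\ell - \hat\omega(\xi))e^{2\pi i x \xi}}{1 + i\ell - \hat\omega(\xi)}\, d\xi,
\end{equation}
observing that since $\hat\omega(\xi)\to 0$ as $|\xi|\to\infty$ the integrand of the ``$1$'' part of the decomposition $\hat K_\ell = \frac{1}{i\ell - a} - 1$ contributes only a delta/distributional term that is already accounted for by the $I(x,t)$ summand in Theorem~\ref{prop:limit-cycle-char}; so the relevant object is the genuinely convergent integral. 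I would then close the contour in the upper half-plane when $x>0$ (and lower when $x<0$) along the path described in the text, writing $R_n = (Mn+\varepsilon)^{1/\alpha}$ and sending $n\to\infty$, so that the integral over the real axis equals $2\pi i$ times the sum of residues at the poles $z\in P_\ell$ enclosed, i.e. those with $\Im z > 0$.

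Next I would compute the residue at each simple pole $z_j$. Since the denominator is $1 + i\ell - \hat\omega(z)$, vanishing simply at $z_j$ with derivative $-\hat\omega'(z_j) = -c_j$, the residue of \eqref{eq:meromorphic} at $z_j$ is
\begin{equation}
    \Res_{z=z_j} \frac{(i\ell-\hat\omega(z))e^{2\pi i x z}}{1+i\ell - \hat\omega(z)} = \frac{(i\ell - \hat\omega(z_j))e^{2\pi i x z_j}}{-\hat\omega'(z_j)} = \frac{-e^{2\pi i x z_j}}{c_j},
\end{equation}
where I used $\hat\omega(z_j) = 1 + i\ell$ at the pole, so that $i\ell - \hat\omega(z_j) = -1$. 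Writing $z_j = \Re z_j + i\,\Im z_j$ gives $e^{2\pi i x z_j} = e^{-2\pi x \Im z_j}e^{2\pi i x \Re z_j}$, producing the decaying exponential $e^{-2\pi|x|\Im z_j}$ together with an oscillatory factor. The key is then to pair poles using Proposition~\ref{prop:poles}: for $\ell=0$, Corollary~\ref{cor:0-poles} gives the quadruple $z,-z,\bar z, -\bar z$, and combining the enclosed pair (those in the upper half-plane) yields, after factoring $1/c_j$ into modulus and the phase $\phi = \arctan(\Re c_j/\Im c_j)$, a real cosine $\cos(2\pi|x|\Re z_j + \phi)$ with the factor $4\pi$; for $\ell\neq 0$ the symmetry $\bar z,-\bar z\in P_{-\ell}$ means the conjugate poles belong to the companion index, so that the sum over $\mathbf{P}_\ell^+ = \{z\in P_\ell\cup P_{-\ell} : \Re z>0, \Im z>0\}$ produces the single complex-exponential form with $\sgn(\ell)$ and the factor $2\pi$. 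I would carefully track which poles lie in the upper versus lower half-plane depending on the sign of $x$, using $|x|$ to unify the two cases.

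The main obstacle will be justifying that the contribution of the arc of radius $R_n$ vanishes in the limit $n\to\infty$, and that the resulting series of residues converges. This requires a uniform lower bound $|1 + i\ell - \hat\omega(z)| \geq \delta > 0$ on the arcs $|z| = R_n$, which is precisely why the radii $R_n$ are chosen to avoid the poles and why the arithmetic condition $\sigma_1^2/\sigma_2^2\in\Q$ (guaranteeing $P_\ell$ is discrete, via the exponential-polynomial structure) is invoked; one must show $\hat\omega(z)$ stays bounded away from $1+i\ell$ along these circles while $e^{2\pi i x z}$ decays in the appropriate half-plane. Controlling the growth of $\hat\omega(z) = e^{-2\pi^2\sigma_1^2 z^2} - k\,e^{-2\pi^2\sigma_2^2 z^2}$ off the real axis — where the Gaussian factors grow rather than decay — and verifying that the dominant term still permits the arc estimate and the summability of $e^{-2\pi|x|\Im z_j}/|c_j|$ is the technical heart of the argument; I expect these estimates, together with the precise asymptotic distribution of the poles $z_j$ (justifying the enumeration and the choice $R_n = (Mn+\varepsilon)^{1/\alpha}$), to be the steps deferred to Appendix~\ref{app:proofs} and the ones requiring the most care.
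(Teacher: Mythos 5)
Your core strategy is the same as the paper's: extend \eqref{eq:meromorphic} meromorphically, integrate over the semicircular contours $\Gamma_n=[-R_n,R_n]\cup C_n^+$ with $R_n=(Mn+\varepsilon)^{1/\alpha}$, apply the Residue Theorem, and use the symmetries of Proposition~\ref{prop:poles} to re-index everything by $\mathbf{P}_\ell^+$. The structural difference is that you integrate $\hat K_\ell$ itself (pole set $P_\ell$, closing upward for $x>0$ and downward for $x<0$), whereas the paper never does this: it first splits $K_\ell=\sgn(\ell)K_\ell^1+K_\ell^2$ into the odd and even parts in $\ell$, whose common denominator $(1+i\ell-\hat\omega)(1-i\ell-\hat\omega)=(1-\hat\omega)^2+\ell^2$ has the conjugation-symmetric pole set $P_\ell\cup P_{-\ell}$; it then always closes in the upper half-plane, pairs each $z_j\in\mathbf{P}_\ell^+$ with $-\bar z_j$ (both are poles of $\hat K_\ell^\nu$), and uses evenness of $K_\ell$ in $x$ instead of a second contour. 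Your route is legitimate, and your explanation of the $4\pi$ versus $2\pi$ dichotomy is actually cleaner than the paper's bookkeeping: for $\ell=0$ both members of the pair $(z_j,-\bar z_j)$ are poles of $\hat K_0$, while for $\ell\neq 0$ exactly one of them is (the other lies in $P_{-\ell}$). But the direct route forces you to do by hand what you currently defer: for $\ell\neq 0$ the upper-half-plane poles of $\hat K_\ell$ are $P_\ell\cap\{\Re z>0,\Im z>0\}$ together with the reflected points $-\bar w$, $w\in P_{-\ell}\cap\{\Re z>0,\Im z>0\}$, and when both families are re-indexed by $\mathbf{P}_\ell^+$ they contribute \emph{conjugate} phases $e^{\pm i(2\pi|x|\Re z_j+\phi)}$, depending on whether $z_j$ lies in $P_\ell$ or in $P_{-\ell}$. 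The single uniform sign $-\sgn(\ell)$ does not drop out automatically from your pairing argument; tracking which elements of $\mathbf{P}_\ell^+$ belong to which family is precisely the step you postpone (and one the paper itself treats rather loosely inside Lemma~\ref{lem:K_series}).

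There is also one genuinely false step. You claim that, after discarding the ``$1$'' part, ``the relevant object is the genuinely convergent integral''. This fails for $\ell\neq 0$: since $\hat\omega(\xi)\to 0$ on the real axis,
\begin{equation}
\hat K_\ell(\xi)\;\longrightarrow\;\frac{i\ell}{1+i\ell}\neq 0,\qquad |\xi|\to\infty,
\end{equation}
so the integrand in your first display still does not decay, the integral is not convergent (even as a principal value it oscillates), and the arc contributions cannot vanish: for a constant integrand $C$ one gets $\int_{C_n^+}Ce^{2\pi ixz}\,dz\sim -C\sin(2\pi xR_n)/(\pi x)$, which has no limit. Equivalently, $K_\ell$ retains a Dirac component $\frac{i\ell}{1+i\ell}\delta$, which is \emph{not} the same object as the $I$ summand of Theorem~\ref{prop:limit-cycle-char} and must be tracked separately. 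The fix is to subtract the constant before integrating: $\hat K_\ell-\frac{i\ell}{1+i\ell}=\frac{-\hat\omega}{(1+i\ell)(1+i\ell-\hat\omega)}$ decays like a Gaussian on $\R$, has the same poles and (up to the explicit constant) the same residues, and the contour argument then applies to this remainder. In fairness, the paper's own kernels $\hat K_\ell^\nu$ in \eqref{eq:kernels_1_2} suffer from the same non-decay for $\ell\neq 0$ and the issue is glossed over there too; but since you raised the point explicitly, your resolution of it is the part that must be repaired.

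Two smaller points. Your residue has a sign slip: with $\hat\omega(z_j)=1+i\ell$ one gets $\frac{(i\ell-\hat\omega(z_j))e^{2\pi ixz_j}}{-\hat\omega'(z_j)}=\frac{(-1)e^{2\pi ixz_j}}{-c_j}=+\frac{e^{2\pi ixz_j}}{c_j}$, not $-\frac{e^{2\pi ixz_j}}{c_j}$; such signs propagate directly into the final phases. Finally, you correctly identify the vanishing of the arc integrals and the summability of the residue series as the technical heart; this is exactly what the paper defers to Appendix~\ref{app:proofs} (Lemma~\ref{lem:K_series} and Theorem~\ref{thm:K0}, which in turn invoke the static-case arc estimate of \cite{tamekue2023mathematical}).
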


\begin{proof}
We outline the proof here, with further details provided in Lemma \ref{lem:K_series}. The key point is using the symmetries of the poles identified in Proposition \ref{prop:poles} to apply the Residue Theorem.

We start by observing that
\begin{equation}
	K_\ell = 
	\begin{cases}
		2K_0^2 & \text{if } \ell=0\\
		\sgn(\ell)K^1_\ell+K_\ell^2 & \text{if } \ell \neq 0
	\end{cases}
\end{equation}
where the Fourier transforms of ${K}^1_\ell$ and ${K}^2_\ell$ are 
\begin{equation}
\label{eq:kernels_1_2}
\begin{aligned}
\hat{K}^1_\ell(\xi) &= \frac{i |\ell|}{(1+i\ell -\hat{\omega})(1-i\ell -\hat{\omega})(\xi)} \\
\hat{K}^2_\ell(\xi) &= \frac{\hat{\omega}(\xi)(\hat{\omega}(\xi)-1) +\ell^2}{(1+i\ell -\hat{\omega})(1-i\ell -\hat{\omega})(\xi)}.
\end{aligned}
\end{equation}
Observe that the set of poles of the meromorphic continuation of $K_\ell^\nu$, $\nu= 1, 2$, are exactly $P_\ell\cup P_{-\ell}$.

To compute the inverse Fourier transforms of \eqref{eq:kernels_1_2}, we use the Residue Theorem. More precisely, we will show that
\begin{equation}
	\int_{\R} \hat{K}_\ell^\nu (\xi)e^{2\pi i x\xi}\,d\xi = \lim_{n\to +\infty} \int_{\Gamma_n} \hat{K}_\ell^\nu (z)e^{2\pi i xz}\,dz, 
\end{equation}
for a particular choice of contours $\Gamma_n\subset \mathbb{C}$.
By the Residue Theorem, we will then compute the integrals on the l.h.s.~by computing the residues of $\hat{K}_\ell^\nu$ at the poles contained in $\Gamma_n$.

An example of our choice of integration path is shown in Figure \ref{fig:example_poles} for $\ell = 0$. Specifically, for $n \in \N$, we define $\Gamma_n := [-R_n, R_n] \cup C_n^+$, where $R_n = (M n + \varepsilon)^{1/\alpha}$, with $M, \varepsilon, \alpha$ as appropriate constants, $ [-R_n, R_n] \subseteq \R$, and $C_n^+ = \{z = R_n e^{i \theta} \in \C \mid \theta \in [0, \pi]\}$.

For $\ell \in \Z, \ell \neq 0$, we obtain (see Lemma \ref{lem:K_series}):
\begin{equation}
\label{eq:K1&K2}
\begin{aligned}
K_{\ell}^1(x) &= -2\pi i\sum_{j=0}^\infty \frac{e^{-2\pi x \Im z_j}}{\abs{c_j}} \sin(2\pi \abs{x} \Re z_j + \phi), \\
K_{\ell}^2(x) &= 2\pi \sum_{j=0}^\infty \frac{e^{-2\pi x \Im z_j}}{\abs{c_j}} \cos(2\pi \abs{x} \Re z_j + \phi),
\end{aligned}
\end{equation}
where $\phi = \tan^{-1} \left( {\Re(c_j)}/{\Im(c_j)} \right)$ and $c_j = \Re z_j + i \Im z_j$. The thesis follows by Euler's formula for the complex exponential.

Finally, for $\ell = 0$, we apply the same reasoning, noting that $\hat{K}^1_0 = 0$. The thesis follows by considering $2\hat{K}^2_0$.
\end{proof}

The following result is a direct consequence of Theorem \ref{thm:K0}.

 \begin{theorem}
\label{thm:K-approx}
Considering the enumeration \eqref{eq:P+}, we have:
    \begin{multline}
       \label{eq:kernel}
       \frac{ e^{2\pi \lvert x \rvert \Im z_{0}}}{2\pi/\abs{c_{0}}}{K}_\ell(x) 
       \\
       =  \begin{cases}
    e^{-\sgn(\ell)i(2\pi \Re z_0\lvert x\rvert+\phi)} +\mathcal{O}(1/x) & \ell \neq 0\\[.5em]
    2\cos(2\pi \lvert x \rvert \Re z_{0}+ \phi) +\mathcal{O}(1/x) &  \ell = 0.
\end{cases}
\end{multline}
\end{theorem}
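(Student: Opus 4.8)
The plan is to read the result off directly from the series representation established in Proposition~\ref{prop:K_series}, isolating the contribution of the dominant pole $z_0$ and absorbing everything else into the error term. Concretely, for $\ell\neq 0$ I would multiply the series for $K_\ell$ by the normalising factor $\frac{|c_0|}{2\pi}e^{2\pi|x|\Im z_0}$, so that the $j=0$ summand collapses exactly to $e^{-\sgn(\ell)i(2\pi\Re z_0|x|+\phi)}$; for $\ell=0$ the same normalisation turns the $j=0$ summand into $2\cos(2\pi|x|\Re z_0+\phi)$. The entire statement then reduces to showing that the remaining tail, multiplied by the same factor, is $\mathcal{O}(1/x)$ as $|x|\to\infty$.

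To control the tail I would exploit the ordering of the poles fixed by the enumeration~\eqref{eq:poles++}: for every $j\ge 1$ one has $\Im z_j\ge\Im z_1>\Im z_0$, so that after normalisation each summand carries a factor $e^{-2\pi|x|(\Im z_j-\Im z_0)}$ that is exponentially suppressed relative to the leading term. Writing $\delta:=\Im z_1-\Im z_0>0$, the modulus of the normalised tail is bounded by $\sum_{j\ge 1}\frac{|c_0|}{|c_j|}e^{-2\pi|x|(\Im z_j-\Im z_0)}$, which I would factor as $e^{-2\pi|x|\delta}\sum_{j\ge 1}\frac{|c_0|}{|c_j|}e^{-2\pi|x|(\Im z_j-\Im z_1)}$. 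The key input here is the asymptotic distribution of the poles recorded in Theorem~\ref{thm:K0} (equivalently, $|z_j|\sim R_j=(Mj+\varepsilon)^{1/\alpha}$ together with a lower bound on $|c_j|=|\hat\omega'(z_j)|$): this guarantees both that $\Im z_j\to\infty$ and that the residual series converges and is uniformly bounded for $|x|$ large. Consequently the normalised tail is $\mathcal{O}(e^{-2\pi\delta|x|})$, which is in particular $\mathcal{O}(1/x)$, yielding the claim.

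The main obstacle is exactly the uniform control of this infinite tail: one must establish the strict gap $\Im z_1>\Im z_0$ together with a lower bound on $|c_j|$ preventing the coefficients $1/|c_j|$ from growing too quickly, so that the tail series is dominated uniformly in $x$ and the interchange of the limit $|x|\to\infty$ with the summation is justified (for instance by dominated convergence). Both of these rest on the fine analysis of the zeros of $1+i\ell-\hat\omega$ carried out for Theorem~\ref{thm:K0}, which is where the substance lies. Finally, for $\ell\neq 0$ I would note that the decomposition $K_\ell=\sgn(\ell)K_\ell^1+K_\ell^2$ from the proof of Proposition~\ref{prop:K_series} lets me run the estimate on $K_\ell^1$ and $K_\ell^2$ separately and recombine them through Euler's formula, reproducing the single complex exponential displayed in~\eqref{eq:kernel}.
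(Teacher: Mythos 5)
Your overall skeleton is the same as the paper's: there, Theorem~\ref{thm:K-approx} is obtained as a direct consequence of Theorem~\ref{thm:K0} in the appendix, whose proof does exactly what you outline --- isolate the $j=0$ term of the residue series of Proposition~\ref{prop:K_series} (Lemma~\ref{lem:K_series}), estimate the tail, and recombine $K_\ell=\sgn(\ell)K_\ell^1+K_\ell^2$ (resp.\ $K_0=2K_0^2$) via Euler's formula. The genuine gap is in your tail estimate. Your exponential suppression hinges on a \emph{strict} gap $\delta=\Im z_1-\Im z_0>0$, which you assert follows from ``the ordering of the poles fixed by the enumeration'' \eqref{eq:poles++}. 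It does not: the enumeration only imposes $\Im z_j\le\Im z_{j+1}$, with ties broken by the real part, so $\Im z_1=\Im z_0$ is not excluded by anything you prove, and if it occurred your error term $\mathcal{O}(e^{-2\pi\delta|x|})$ would be false, since the $j=1$ summand would contribute an undamped oscillation after normalisation. The paper's argument is built precisely so as not to invoke such a gap: in Step 2 of the proof of Theorem~\ref{thm:K0} the tail is bounded, using only monotonicity of $\tau\mapsto e^{-2\pi x\tau}$, by
\begin{equation}
|R^\nu(x)| \le \frac{2\pi}{C}\sum_{j\ge 1}e^{-2\pi x \Im z_j}
\le \frac{2\pi}{C}\int_{\Im z_0}^{+\infty}e^{-2\pi x\tau}\,d\tau
= \frac{1}{C}\,\frac{e^{-2\pi x\Im z_0}}{x},
\end{equation}
so the $1/x$ in \eqref{eq:kernel} is produced by the integral comparison itself, not by exponential separation of the poles.

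The second problem is that the uniform lower bound on $|c_j|=|\hat\omega'(z_j)|$ is not a citable by-product ``recorded in Theorem~\ref{thm:K0}'': the statement of that theorem says nothing about the poles, and this bound is Step 1 of its proof --- indeed it is the real substance of the appendix argument (a case split according to $\Re z_j\le\Im z_j$ or $\Re z_j>\Im z_j$, together with a contradiction argument showing $\limsup_j(\Re z_j^2-\Im z_j^2)<+\infty$ from the pole equation $\Re\hat\omega(z_j)=1$). Your proposal is therefore caught between two readings: if you are allowed to invoke Theorem~\ref{thm:K0} as a black box, then \eqref{eq:kernel} follows in one line (this is literally the paper's proof) and your tail analysis is redundant; if you are not, then the coefficient bound --- without which your claim that the residual series is uniformly bounded has no support --- must be proven, and your proposal does not prove it.
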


\begin{remark}
Although we restricted to the case $T=2\pi$, the results of this section hold for any $ T > 0 $, by replacing $\ell$ with $\ell (2\pi / T)$ in the definition of $K_\ell$. In particular, this changes the poles appearing in Proposition~\ref{prop:K_series} and in Theorem~\ref{thm:K-approx}.
\end{remark}

\section{Application to visual processing}
\label{sec:application}

As stated in the introduction, neural field equations provide a tool for understanding the collective behavior of large populations of neurons, and they are particularly used in the context of visual perception. Here we are interested in understanding the solution concerning localized flickering and geometric visual stimuli as external inputs. In this context, the external stimulus $I$ represents cortical input, a deformed version of retinal stimulus due to the retino-cortical map.

\subsection{Retino-cortical map}

\begin{figure*}[tbh]
    \centering
\begin{subfigure}[b]{0.22\textwidth}
         \centering
         \includegraphics[width=\textwidth]{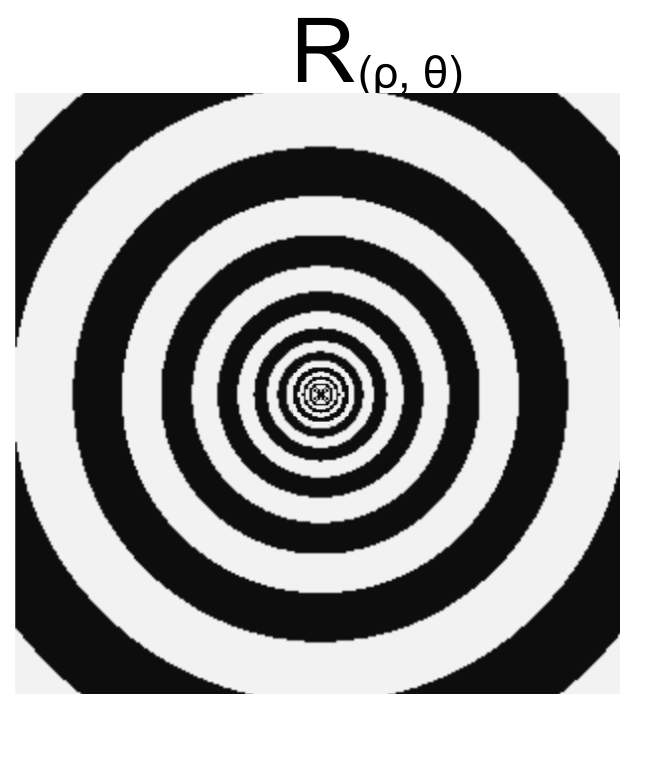}
         \caption{}
\end{subfigure}
\begin{subfigure}[b]{0.23\textwidth}
         \centering
         \includegraphics[width=\textwidth]{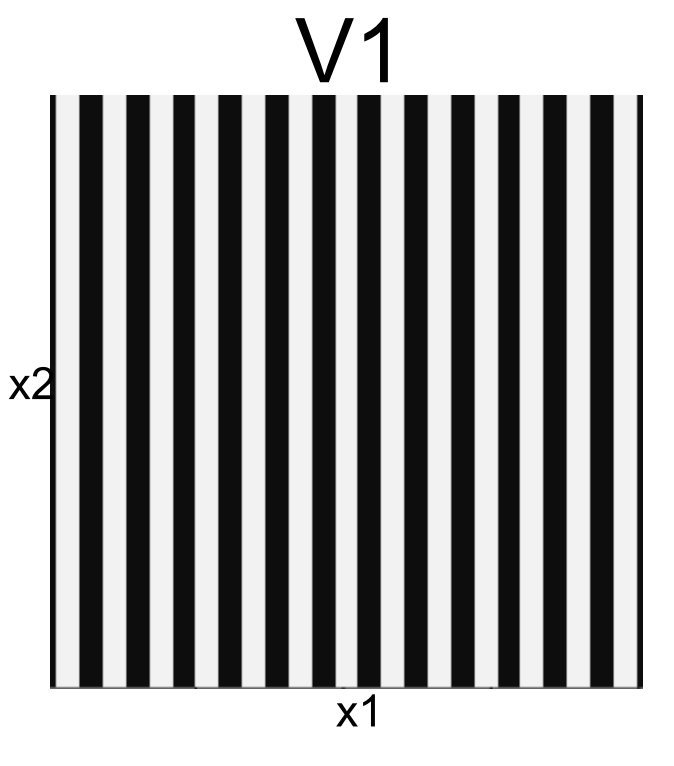} 
         \caption{}
\end{subfigure}
\begin{subfigure}[b]{0.23\textwidth}
         \centering
         \includegraphics[width=\textwidth]{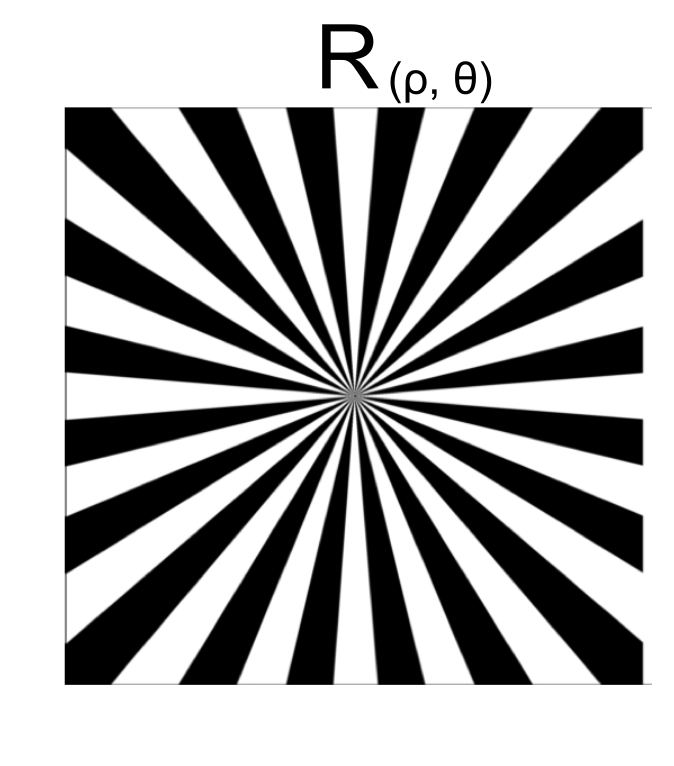} 
         \caption{}
\end{subfigure}
\begin{subfigure}[b]{0.23\textwidth}
         \centering
         \includegraphics[width=\textwidth]{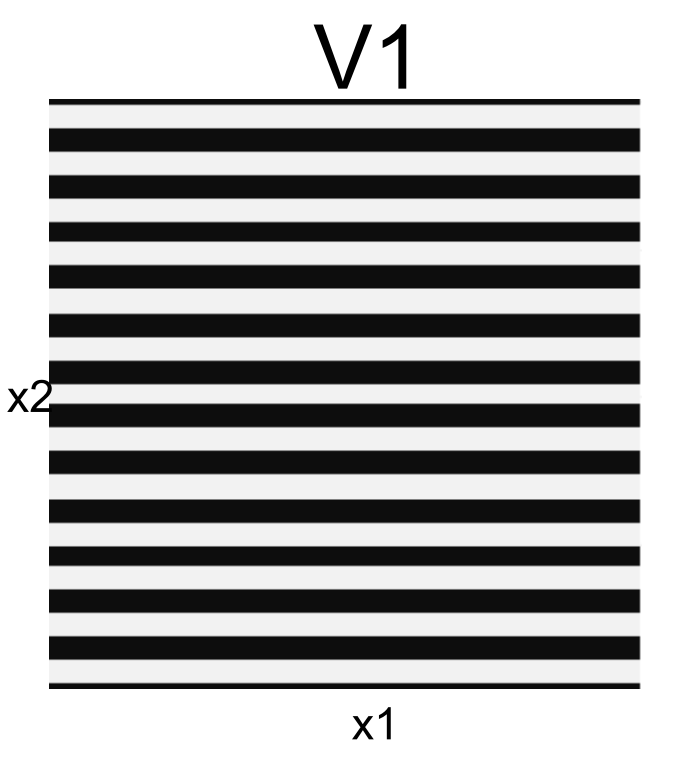} 
         \caption{}
\end{subfigure}\captionsetup{justification=justified, singlelinecheck=false, width=\linewidth}
\caption{Binary stimulus representation using \eqref{eq:binary}. (a) Display of the circular pattern in retinal coordinates (polar coordinates) of the visual field. (b) Display of the circular pattern in V1 coordinates via the complex logarithm map defined in \eqref{eq:retino-cortical-map}. This corresponds to the binary version of $I_C(x)= \cos(2\pi\zeta x_1)$. (c) Display of the radial pattern in retinal coordinates of the visual field. (d) Display of the radial pattern in V1 coordinates. This corresponds to the binary version of $I_R(x)= \cos(2\pi\zeta x_2)$. }
\label{fig:stimulus-representation}
\end{figure*}

The propagation of the visual signal from the retina to the primary visual cortex V1 is performed through the optical nerve. This propagation respects the topographic organization of the retina (\cite{tootell1982deoxyglucose}): adjacent locations on the retina project to adjacent neurons in V1 by a ``continuous'' point-by-point relationship. This induces a topographic map, referred to as the retino-cortical mapping. 

Formally, this map is shown to be well approximated, close to the center of the field of view, by the complex logarithmic function (e.g., see \cite{schwartz1977spatial}). More precisely, let $(\rho , \theta) \in (0, \infty)\times [0, 2\pi)$ denote the polar coordinates in the visual field (or in the retina) and $x\coloneqq(x_1, x_2)\in \R^2$ denote the cortical coordinates in V1. The analytic expression is given by:
\begin{equation}
\label{eq:retino-cortical-map}
\begin{aligned}
      \textsc{R} \cong \mathbb R^2 && \longrightarrow && \textsc{V1} \cong \mathbb R^2 \\
         (\rho, \theta) && \longmapsto && (x_1, x_2)
\end{aligned}
\end{equation}
where $(x_1, x_2):= (\log \rho, \theta)$. {We refer to \cite{bressloff2001geometric} for more details on this topic.}

\subsection{Binary representation and geometric stimuli}

It is common in psychophysical experiments, such as those performed in \cite{mackay1957moving, mackay1957some, billock2007neural}, to examine how flicker interacts with circular and radial patterns. These geometric stimuli can be represented in cortical coordinates $x \coloneqq (x_1, x_2) \in \R^2$ as follows:
\begin{equation}
I_C(x) = \cos(2\pi\zeta x_1), \hspace{ 0.5cm} I_R(x) = \cos(2\pi\zeta x_2)
\end{equation}
where $\zeta \in \R^+$. The choice of this representation for circular $I_C$ and radial $ I_R$ patterns is motivated by the analogous geometric hallucinatory patterns described in \cite{bressloff2001geometric} and \cite{ermentrout1979mathematical}.

To visualize $I_C$ and $I_R$ in terms of images, it is common to represent them as contrasting black and white regions, as displayed in Figure \ref{fig:stimulus-representation}. Specifically, we define the binary pattern $b_{I}$ of a function $I: \R^2 \rightarrow \R$ as
\begin{equation}
\label{eq:binary}
b_{I}(x) = \begin{cases}
0 & \text{ if } I(x) > 0 \text{ (black color)}\\
1 & \text{ if } I(x) \leq 0 \text{ (white color)}.\\
\end{cases}
\end{equation}
This representation highlights the zeros level-set $\{x\in\R^2: I(x)=0\}$ of $I$,  which delineates the boundary between different regions in the image generated by the function $I$, thus outlining the shapes within the image.

\subsection{Localized flickering input}
From a perceptual point of view, a possible way to model localized flickering behavior is by using a sinusoidal input. Typically, localized flickering information is concentrated either in the center of the visual field or its periphery, therefore, we consider the following representation in cortical coordinates of V1 
\begin{equation}
\label{eq:input-flicker}
    I(x, t) \coloneqq H(x_1)\cos(\lambda t) 
\end{equation}
with $ x = (x_1, x_2)\in \R^2, t \in \R, \lambda \in [0, +\infty)$ and where $H$ is the Heaviside step function 
\begin{equation*}
H(x_1)= \begin{cases}
    1 \quad \text{ if } x_1 \geq 0\\
    0 \quad \text{ if } x_1 < 0\\
\end{cases}.
\end{equation*}
The role of the Heaviside function is to localize the sinusoidal flicker either in the periphery (by taking $H(x_1)$) or in the center of the visual field (considering $H(-x_1)$).

In this Section, we study the effect of stimulus \eqref{eq:input-flicker} in the case where the firing rate function $f$ is linear. We recall that, in Section \ref{sec:general}, we showed that the solution for a general nonlinear $f$ is well-approximated by the linear case under appropriate assumptions on the input or on $f$.


We introduce the following notation
\begin{multline}
    \label{eq:circ-sol}
    V_{\lambda}^\ell(x,t) \coloneqq
    \\
    -\frac{\sgn(x_1) }{\abs{c_{0}}\abs{z_0}}e^{-2\pi \Im{z_0} \abs{x_1}}\cos(2\pi \Re{z_0}\abs{x_1}-\lambda t+\Phi);
\end{multline}
here, $z_0$ is the smallest pole in $\mathbf{P}_\ell^+$, according to the enumeration \eqref{eq:poles++}, $c_0=\omega'(z_0)$, and $ \Phi = \tan^{-1}\left(\Re c_0/\Im c_0\right) + \tan^{-1}\left({\Re{z_0}}/{\Im{z_0}}\right)$.

\begin{proposition}
\label{prop:flickering_equation}
Let us consider $I(x, t) = H(x_1)\cos(\lambda t)$, and assume $f(s)= s$. Then, solutions of \eqref{eq:mean_field-lin} converge toward the attractive periodic state: 
\begin{equation}
\label{eq:flicker-sol}
\begin{split}
 u^\star&(x,t)\\
 &= I(x, t) + V_\lambda^1(x,t) + \mathcal{O}\left(\frac{e^{-2\pi |x_1| \Im{z_0}}}{x_1} \right), 
\end{split}
\end{equation} 
where $V_\lambda^1$ is defined in \eqref{eq:circ-sol} with $\ell = 1$.
\end{proposition}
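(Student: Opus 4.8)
The plan is to reduce the planar problem to the one-dimensional theory of Section~\ref{sec:linear}, expand the periodic orbit through Theorem~\ref{prop:limit-cycle-char}, and then evaluate the resulting convolution against the Heaviside profile using the leading-order asymptotics of Theorem~\ref{thm:K-approx}. First I would exploit the symmetry of the input. Since $I(x,t)=H(x_1)\cos(\lambda t)$ is invariant under translations in $x_2$, the equivariance of the convolution operator $W_f$ under translations (established in Section~\ref{sec:general}) together with the uniqueness of the periodic orbit from Theorem~\ref{thm:attractive_limit_cycle} forces $u^\star$ to depend only on $(x_1,t)$. For such profiles the planar convolution $\omega\ast u$ collapses to a one-dimensional convolution against the marginal $\bar\omega(x_1):=\int_\R \omega(x_1,x_2)\,dx_2$; a direct Gaussian integration shows that $\bar\omega$ is exactly the one-dimensional difference of Gaussians with the same parameters $\sigma_1,\sigma_2,k$, so that its transform agrees with \eqref{eq:omega_hat}. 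Hence the entire pole analysis of Section~\ref{sec:linear}, and in particular Theorem~\ref{thm:K-approx}, applies verbatim.

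Next I would apply the decomposition of Theorem~\ref{prop:limit-cycle-char} in the version valid for general period $T=2\pi/\lambda$ indicated in the Remark of Section~\ref{sec:linear} (i.e.\ replacing $\ell$ by $\ell\lambda$). Writing $\cos(\lambda t)=\tfrac12(e^{i\lambda t}+e^{-i\lambda t})$, the only non-vanishing temporal Fourier coefficients are $I_{\pm1}=\tfrac12 H$, so that
\begin{equation}
u^\star(x_1,t)=I(x_1,t)+2\,\Re\!\left[\tfrac12\,(H\ast K_1)(x_1)\,e^{i\lambda t}\right],
\end{equation}
where I have used $K_{-1}=\overline{K_1}$, which holds because $\hat K_\ell$ is an even, real-analytic function of $\hat\omega$ and $\hat\omega$ is real and even. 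The whole problem thus reduces to evaluating $(H\ast K_1)(x_1)=\int_{-\infty}^{x_1}K_1(s)\,ds$.

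The core computation is to insert the asymptotics of Theorem~\ref{thm:K-approx}, keep the slowest-decaying pole $z_0\in\mathbf{P}_1^+$, and integrate. For $\ell=1$ that leading term can be rewritten compactly as $K_1(s)\sim \frac{2\pi}{\abs{c_0}}e^{-i\phi}e^{-2\pi i\bar z_0\lvert s\rvert}$, which decays since $\Im z_0>0$. Splitting the integral at $s=0$, the primitive of $e^{\mp2\pi i\bar z_0 s}$ produces the prefactor $\frac{1}{2\pi i\bar z_0}$, whose modulus $1/\abs{z_0}$ and argument supply precisely the extra factor $1/\abs{z_0}$ and the additional phase $\tan^{-1}(\Re z_0/\Im z_0)$ appearing in $\Phi$; the convergent contribution of the lower limit $-\infty$ is harmless. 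One then checks that the $x_1$-dependent, spatially decaying part of the integral carries opposite signs according to whether $x_1<0$ or $x_1>0$, so that after taking the real part and using the parity of cosine everything assembles into the single expression $V_\lambda^1$ with its $\sgn(x_1)$ prefactor. Finally, the subleading $\mathcal{O}(1/x)$ part of Theorem~\ref{thm:K-approx} and the contributions of the poles $z_j$ with $j\ge1$ are all dominated by $e^{-2\pi\lvert x_1\rvert\Im z_0}$ and integrate to the stated remainder $\mathcal{O}\!\big(e^{-2\pi\lvert x_1\rvert\Im z_0}/x_1\big)$.

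The hard part will be the last step: justifying the term-by-term integration of the pole series of Proposition~\ref{prop:K_series} over the half-line and controlling the tail uniformly in $x_1$, so that only the $z_0$-term survives at leading order while the rest is genuinely absorbed into the claimed remainder. The delicate bookkeeping point is that $\int_{-\infty}^{x_1}$ also generates, for $x_1>0$, a contribution that is \emph{constant} in $x_1$: this piece reconstructs the spatially homogeneous far-field response to the uniform flicker $\cos(\lambda t)$, and it must be carefully separated from the localised, boundary-emanating pattern $V_\lambda^1$, which is the quantity the statement isolates.
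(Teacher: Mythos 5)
Your proposal follows essentially the same route as the paper's proof: decompose $u^\star$ via Theorem~\ref{prop:limit-cycle-char} with $\ell$ replaced by $\ell\lambda$ (as in the Remark closing Section~\ref{sec:linear}), observe that the only nonzero temporal Fourier coefficients are $I_{\pm 1}=H/2$, and evaluate $(H\ast K_{\pm1})(x_1)=\int_{-\infty}^{x_1}K_{\pm1}(s)\,ds$ on the principal pole, the primitive supplying the factor $1/\abs{z_0}$ and the extra phase $\tan^{-1}(\Re z_0/\Im z_0)$ entering $\Phi$ --- exactly what the paper compresses into ``integrating by parts and rearranging according to trigonometric formulas''. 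Your preliminary reduction from $\R^2$ to $\R$ (translation invariance in $x_2$ plus marginalisation of $\omega$, whose transform indeed agrees with \eqref{eq:omega_hat}) is left implicit in the paper and is a sound addition, as is the identity $K_{-1}=\overline{K_1}$.

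However, the ``delicate bookkeeping point'' you flag at the end is a genuine obstruction, not a technicality, and your argument cannot be closed as announced without confronting it. For $x_1>0$ the constant-in-space part of $\Re\big[(H\ast K_1)(x_1)\,e^{i\lambda t}\big]$ does not vanish: it equals $H(x_1)\Re\big[\hat K_1(0)e^{i\lambda t}\big]$, and added to $I$ it reconstructs precisely the uniform-flicker response, so that $u^\star-I-V_\lambda^1$ tends, as $x_1\to+\infty$, to $\Re\big[e^{i\lambda t}/(k+i\lambda)\big]-\cos(\lambda t)$, which is nonzero whenever $\lambda\neq0$ and independent of $x_1$. A term constant in $x_1$ can never be absorbed into the remainder $\mathcal{O}\big(e^{-2\pi\abs{x_1}\Im z_0}/x_1\big)$; consequently the expansion \eqref{eq:flicker-sol}, read literally, holds on $\Omega^-=\{x_1<0\}$ (where $H$ and this constant both vanish, and where the stripe pattern $V_\lambda^1$ actually lives), while on $\Omega^+$ the homogeneous term must be carried explicitly. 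Note that the paper's own proof has the same gap --- its final chain of equalities silently discards $H(x_1)\Re\big[\hat K_1(0)e^{i\lambda t}\big]$ --- so your instinct to separate this piece is correct; what is missing in your write-up is the conclusion that it \emph{survives} in the final formula (either restrict the claim to $x_1<0$ or add the term to the right-hand side), rather than the hope that it can be dominated by decaying contributions.
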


\begin{proof}
We compute $u^\star$ relying on \eqref{eq:limit_cycle} and \eqref{eq:kernel}, by replacing $ \ell$  with  $\ell \lambda$  since the time periodicity of $I$ is $ 2\pi/\lambda$.

We notice that, for $\ell = 0$ we have $(I\ast K_0)(x_1) =0$.
On the other hand, for $\ell \neq 0$, the Fourier coefficients are given by $ I_\ell(x_1)= H(x_1)/2 \text{ if } \ell = \pm 1$, otherwise $I_\ell(x_1)= 0$.
Then, we get the statement computing
\begin{equation}
\begin{split}
    &(K_1\ast H)(x_1)  \frac{e^{i\lambda t}}{2} +(K_{-1}\ast H)(x_1)\frac{e^{-i\lambda t}}{2} \\
    &=\cos(\lambda t)(K_1^2\ast H)(x_1) + i \sin (\lambda t)(K_1^1\ast H)(x_1)  \\
    &=V_\lambda^1(x,t),
\end{split}
\end{equation}
where the last equality has been obtained integrating by parts and rearranging according to trigonometric formulas. 
\end{proof}

\begin{figure*}[tbh]
\centering
\begin{subfigure}[b]{0.24\textwidth}
         \centering
         \includegraphics[width=\textwidth]{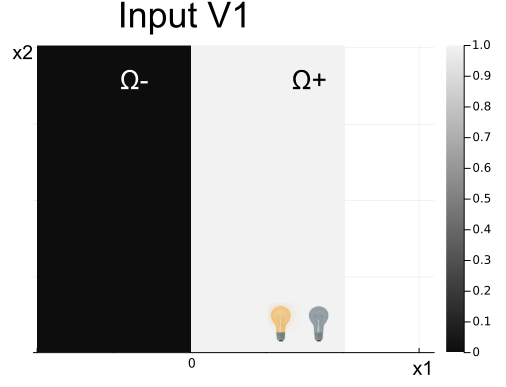}
         \caption{}
\end{subfigure}
\begin{subfigure}[b]{0.24\textwidth}
         \centering
         \includegraphics[width=\textwidth]{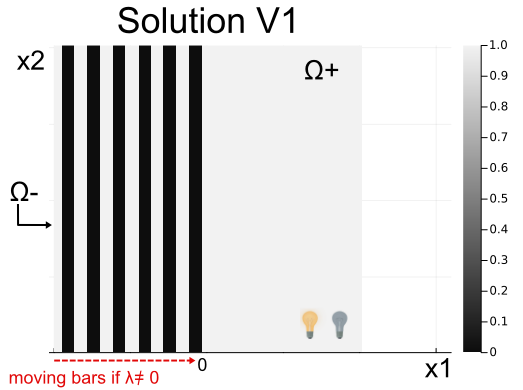} 
         \caption{}
\end{subfigure}
\begin{subfigure}[b]{0.24\textwidth}
         \centering
         \includegraphics[width=\textwidth]{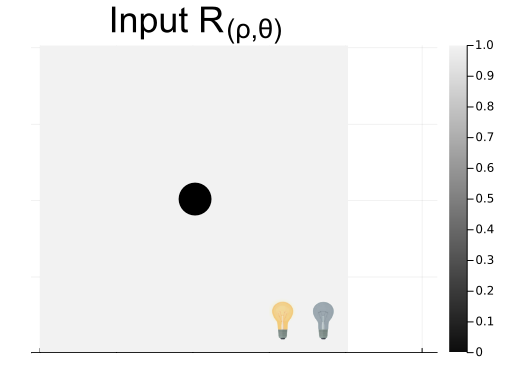} 
         \caption{}
\end{subfigure}
\begin{subfigure}[b]{0.24\textwidth}
         \centering
         \includegraphics[width=\textwidth]{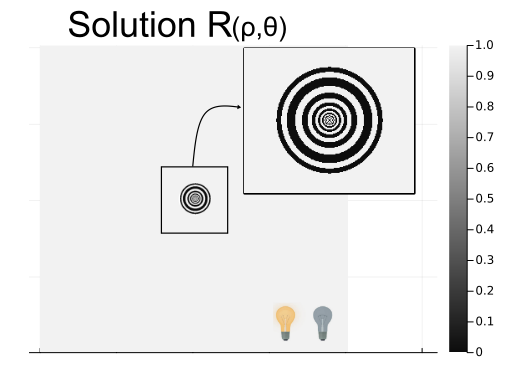} 
         \caption{}
\end{subfigure}
      \caption{ Binary representation of the input \eqref{eq:input-flicker} and of the corresponding solution \eqref{eq:flicker-sol} (or \eqref{eq:heaviside-sol} if $\lambda=0$). Bulbs in the white region indicate the presence of flickering lights if $\lambda \neq 0$. The kernel parameters for \eqref{eq:DoG_kernel} are set to be $k= 1$, $\sigma_1 = 1/\sqrt{2}\pi$ and $\sigma_2 = 2/\sqrt{2}\pi$. 
       (a) Display of \eqref{eq:input-flicker} in cortical coordinates. If $\lambda \neq 0$, $\Omega^+$ switches from black to white due to the presence of flicker. (b) Display of \eqref{eq:heaviside-sol} in cortical coordinates. If $\lambda \neq 0$, the vertical stripes are moving stripes to the right with frequency $\lambda$. (c) Display of the input \eqref{eq:input-flicker} in retinal coordinates. (d) Display of the corresponding solution \eqref{eq:heaviside-sol} in retinal coordinates. 
      }
\label{fig:heaviside-input}
\end{figure*}

We observe that when $ \lambda = 0$ the input $ I $ defined in \eqref{eq:input-flicker} transitions from describing a flickering region to one characterized by static localized information. In this case, we can compute the solution similarly.

\begin{corollary}
\label{prop:zero_term_solution}
Let us consider
 $I(x) = H(x_1)$, and assume 
 $f(s)= s$. Then, the solutions of \eqref{eq:mean_field-lin} converge toward the attractive stationary state: 
\begin{equation}
\label{eq:heaviside-sol}
\begin{aligned}
    u^\star&(x) = I(x) + V_0(x)+ \mathcal{O}\left(\frac{e^{-2\pi |x_1| \Im{z_0}}}{x_1} \right) \\
\end{aligned}
\end{equation}
where $V_0(x)\coloneqq V_0^0(x,t)$ defined in \eqref{eq:circ-sol} with $\ell=0$.
\end{corollary}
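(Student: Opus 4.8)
The plan is to treat Corollary~\ref{prop:zero_term_solution} as the $\lambda = 0$ specialization of Proposition~\ref{prop:flickering_equation}, exploiting that a vanishing flicker frequency collapses the input to the time-independent profile $I(x) = H(x_1)$. Consequently the $T$-periodic solution of Theorem~\ref{thm:attractive_limit_cycle} degenerates to a genuine stationary state, and in the Fourier-in-time decomposition of Theorem~\ref{prop:limit-cycle-char} only the zeroth mode survives: one has $I_0(x) = H(x_1)$ and $I_\ell \equiv 0$ for every $\ell \neq 0$. Substituting this into \eqref{eq:limit_cycle} leaves a single term,
\begin{equation}
    u^\star(x) = I(x) + (I_0 \ast K_0)(x_1) = H(x_1) + (H \ast K_0)(x_1),
\end{equation}
where the convolution is the one-dimensional one in $x_1$, since the input depends on $x_1$ alone and we may invoke the $1$D kernel analysis of Section~\ref{sec:linear}.

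Next I would insert the $\ell = 0$ kernel estimate of Theorem~\ref{thm:K-approx}, writing
\begin{equation}
    K_0(u) = \frac{4\pi}{\abs{c_0}}e^{-2\pi\abs{u}\Im z_0}\cos(2\pi\abs{u}\Re z_0 + \phi) + \mathcal{O}\!\left(\frac{e^{-2\pi\abs{u}\Im z_0}}{u}\right).
\end{equation}
Because $H$ is the Heaviside cutoff, the convolution reduces to the half-line integral $(H \ast K_0)(x_1) = \int_{-\infty}^{x_1} K_0(u)\,du$. One then substitutes the leading term and splits the domain at the kink $u = 0$, where $\abs{u}$ switches branch; treating $x_1 > 0$ and $x_1 < 0$ separately is precisely what produces the overall factor $\sgn(x_1)$.

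The key algebraic step is the standard primitive
\[
    \int e^{-au}\cos(bu + \phi)\,du = -\frac{e^{-au}}{\sqrt{a^2 + b^2}}\cos\!\left(bu + \phi + \tan^{-1}(b/a)\right) + \text{const},
\]
applied with $a = 2\pi\Im z_0$ and $b = 2\pi\Re z_0$, so that $\sqrt{a^2+b^2} = 2\pi\abs{z_0}$. The integration thus introduces the amplitude factor $1/\abs{z_0}$, the characteristic minus sign, and the phase shift $\tan^{-1}(\Re z_0/\Im z_0)$. Collecting this shift into $\Phi = \phi + \tan^{-1}(\Re z_0/\Im z_0)$ and combining the two half-line contributions recovers exactly $V_0(x) = V_0^0(x,t)$ of \eqref{eq:circ-sol} with $\ell = 0$, as asserted in \eqref{eq:heaviside-sol}.

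The main obstacle I anticipate is bookkeeping rather than conceptual: one must carefully propagate the remainder through the integration, verifying that $\int^{x_1}\mathcal{O}(e^{-2\pi\abs{u}\Im z_0}/u)\,du$ stays of order $\mathcal{O}(e^{-2\pi\abs{x_1}\Im z_0}/x_1)$, and control the contribution near the kink $u = 0$ uniformly in the sign of $x_1$. Since a single real kernel $K_0$ enters here, rather than the conjugate pair $K_{\pm 1}$ of Proposition~\ref{prop:flickering_equation}, the resulting profile is automatically real and time-independent, so no cancellation of imaginary parts is required and the argument is a direct simplification of the flickering case.
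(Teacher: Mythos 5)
Your proposal takes essentially the same route as the paper: the paper "proves" this corollary merely by remarking that it is the $\lambda=0$ specialization of Proposition~\ref{prop:flickering_equation}, whose proof consists exactly of your steps — only the $\ell=0$ Fourier mode survives (here $I_0=H$, all other $I_\ell\equiv 0$), Theorem~\ref{prop:limit-cycle-char} gives $u^\star = I + H\ast K_0$, and the half-line integral of the principal-pole approximation of $K_0$ from Theorem~\ref{thm:K-approx} (your explicit primitive for $\int e^{-au}\cos(bu+\phi)\,du$ is precisely the paper's ``integrating by parts and rearranging according to trigonometric formulas'') produces the $\sgn(x_1)$ factor, the $1/\abs{z_0}$ amplitude, and the shifted phase $\Phi$. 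Your attempt is, if anything, more explicit than the paper about the remainder bookkeeping near the kink $u=0$, so it is a faithful (and slightly more careful) reconstruction of the intended argument.
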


The input $I$ defined in \eqref{eq:input-flicker} corresponds to peripherally localized information in the visual field, subjected to flickering (if $\lambda = 0$, we only have localized information). In cortical coordinates, the function $I$ partitions the corresponding image into two complementary sets:
\begin{equation}
\label{eq:w+}
\begin{aligned}
    \Omega^+ &\coloneqq \{x \in \R^2 :\: x_1\geq 0\},\\
    \Omega^- &\coloneqq \{x \in \R^2 :\: x_1< 0\}.\\
\end{aligned}
\end{equation}
The set $\Omega^+$ identifies the region of the image abundant in localized information, while $\Omega^-$
corresponds to the region characterized by the absence of visual input. 

We present a visual depiction of the input \eqref{eq:input-flicker} and of the corresponding solution $u^\star$, at a fixed time $\bar t$, in Figure~\ref{fig:heaviside-input}. Images (a) and (c), represents the input in cortical and retinal coordinates, respectively. The cortical and retinal representations of the solution are presented in images (b) and (d), respectively.
We notice $u^\star$ partitions its domain in accordance with the input: $\Omega^+$ remains unaltered, while $\Omega^-$ undergoes a transformation, exhibiting patterns characterized by moving vertical stripes of uniform width, a consequence of the cosine that appears in the second term on the l.h.s. of \eqref{eq:flicker-sol}. For large values of $x_1$, the exponential term ensures that $u^\star$ remains positive across $\Omega^+$ while oscillating within $\Omega^-$. Near $x_1=0$, where this reasoning might not apply, small oscillations may occur in $\Omega^+$. Here, the outcome depends on the principal pole, and consequently on parameters $\sigma_1$, $\sigma_2$ and $\lambda$.

We achieve similar results by examining \eqref{eq:heaviside-sol} when $\lambda = 0$  with the difference that vertical stripes remain static rather than moving. 

Finally, it is worth noticing that we would obtain the same results, if we were to consider $H(x_2)$ instead of $H(x_1)$. The only distinction would be at the level of the emergent pattern: instead of vertical, we would have horizontal stripes corresponding to radial patterns (see Figure \ref{fig:stimulus-representation}, images (c) and (d)). 

\subsubsection{Biasing the width of the stripes}
\label{sec:biasing}

The size of the vertical stripes in \eqref{eq:flicker-sol} depends on the principal pole $z_0 \in \mathbf{P}_1^+$, and consequently on the parameters defining $\omega$, namely $\sigma_1$, $\sigma_2$, and the flicker frequency $\lambda$ of the input. In particular, we are interested in understanding how the principal pole's real and imaginary parts vary when the flicker frequency increases. 
Indeed, by Proposition~\ref{prop:flickering_equation} the width of the induces stripes is approximately $1/(2\Re z_0)$.

\begin{proposition}
\label{prop:Rez_Imz}
The principal pole $z_0 \in \mathbf{P}_1^+$ satisfies
\begin{multline}
\label{eq:asymp_estimate}
\Re z_0 = \frac{1}{4 \sigma_2 \sqrt{2\log(\lambda/k)}} + \mathcal{O}\left({{\log^{-3/2}(\lambda/k)}}\right), \\
\Im z_0 = \frac{\sqrt{\log(\lambda/k)}}{\sqrt{2}\pi \sigma_2}+ \mathcal{O}\left({{\log^{-3/2}(\lambda/k)}}\right). 
\end{multline}
\end{proposition}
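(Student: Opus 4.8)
The plan is to read off the real and imaginary parts of $z_0$ directly from its defining equation. By the rescaling remark (the flicker input $H(x_1)\cos(\lambda t)$ has period $2\pi/\lambda$, so one replaces $\ell$ by $\ell\lambda$), the principal pole $z_0\in\mathbf P_1^+$ is the smallest-modulus first-quadrant solution of
\begin{equation}
    \hat\omega(z)=1+i\lambda,
    \qquad
    \hat\omega(z)=e^{-\beta_1 z^2}-k\,e^{-\beta_2 z^2},
\end{equation}
where I abbreviate $\beta_j:=2\pi^2\sigma_j^2$, so that $\beta_1<\beta_2$ by hypothesis. Writing $z=a+ib$ with $a,b>0$ and $z^2=u+iv$, $u=a^2-b^2$, $v=2ab$, the whole statement reduces to extracting $a$ and $b$ from this single complex equation as $\lambda\to+\infty$.

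First I would show that along the principal branch $u=\Re(z_0^2)\to-\infty$, so that the \emph{narrow} Gaussian is exponentially subdominant: factoring out the wide one gives
\begin{equation}
    -k\,e^{-\beta_2 z^2}\Bigl(1-\tfrac1k e^{(\beta_2-\beta_1)z^2}\Bigr)=1+i\lambda,
\end{equation}
and since $(\beta_2-\beta_1)u\to-\infty$ the correction factor is $1+O(\lambda^{-(\beta_2-\beta_1)/\beta_2})$, negligible against every negative power of $\log\lambda$. Taking modulus and argument then yields the decoupled pair
\begin{align}
    -\beta_2 u &= \tfrac12\log(1+\lambda^2)-\log k + (\text{exp.\ small}),\\
    \beta_2 v &= \pi-\arctan\lambda-2\pi n + (\text{exp.\ small}),
\end{align}
with $n\in\Z$ a branch index. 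The first-quadrant constraint $v>0$ forces $n\le 0$, and since $u$ is the same across branches while $\lvert z\rvert^2=\sqrt{u^2+v^2}$ grows with $\lvert v\rvert$, the smallest modulus is attained at $n=0$; this is where I would invoke the pole symmetries of Proposition~\ref{prop:poles} to confirm that no contribution from $P_{-1}$ is smaller. Setting $L:=\log(\lambda/k)$ and using $\tfrac12\log(1+\lambda^2)=\log\lambda+O(\lambda^{-2})$ and $\arctan\lambda=\tfrac\pi2-\lambda^{-1}+O(\lambda^{-3})$, the two relations become $b^2-a^2=L/\beta_2+O(\lambda^{-2})$ and $ab=\pi/(4\beta_2)+O(\lambda^{-1})$.

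From here the extraction is elementary: the second relation gives $a=\pi/(4\beta_2 b)+O(\lambda^{-1}/b)=O(L^{-1/2})$, so $a^2=O(L^{-1})$ is a lower-order perturbation in the first relation, whence $b=\sqrt{L/\beta_2}+O(L^{-3/2})$ and then $a=\pi/(4\sqrt{\beta_2 L})+O(L^{-3/2})$. Substituting $\beta_2=2\pi^2\sigma_2^2$ turns these into the claimed $\Im z_0=\sqrt{\log(\lambda/k)}/(\sqrt2\pi\sigma_2)+O(\log^{-3/2}(\lambda/k))$ and $\Re z_0=1/(4\sigma_2\sqrt{2\log(\lambda/k)})+O(\log^{-3/2}(\lambda/k))$. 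The main obstacle is not the asymptotic algebra but the \emph{identification} step: rigorously proving that the principal element of $\mathbf P_1^+$ is exactly the $n=0$ root — i.e. that the pole genuinely lies on the principal logarithmic branch and that the narrow-Gaussian term may be discarded — which I would make precise through a per-branch fixed-point (or implicit function) argument in the variable $w=z^2$, establishing existence, uniqueness, and the ordering $\Im z_n<\Im z_{n+1}$ of the resulting poles.
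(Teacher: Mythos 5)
Your proposal follows essentially the same route as the paper's proof: factor out the dominant inhibitory Gaussian $-k e^{-2\pi^2\sigma_2^2 z^2}$, show the narrow-Gaussian ratio is $1+\mathcal{O}(\lambda^{-c})$ for some $c>0$ (your exponent $(\sigma_2^2-\sigma_1^2)/\sigma_2^2$ is the correctly derived one; the paper asserts the stronger $\gamma=2(\sigma_2^2-\sigma_1^2)/\sigma_1^2$, but any positive power of $\lambda$ beats powers of $\log\lambda$, so both suffice), take the complex logarithm to index branches by an integer, select the principal branch, and extract $a,b$ from the system $b^2-a^2\approx \log(\lambda/k)/(2\pi^2\sigma_2^2)$, $2ab\approx(\pi-\arctan\lambda)/(2\pi^2\sigma_2^2)$ — which is the same "fourth degree equation plus Taylor expansion" the paper solves after its complex logarithm. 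If anything, your discussion of why $n=0$ wins (the first-quadrant constraint, monotonicity of $\Im z$ in the branch index, and the comparison with first-quadrant poles of $P_{-1}$ via Proposition \ref{prop:poles}) is more explicit than the paper's closing one-line remark that one must choose $\ell=0$.
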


\begin{proof}
First, observe that the elements $z \in \mathbf{P}_1^+$, as the flicker frequency $\lambda$ varies, satisfy one of the following equations:
\begin{equation}
\label{eq:zeros}
1 + i\lambda = \hat{\omega}(z)
\qquad\text{or}\qquad
1 - i\lambda = \hat{\omega}(z),
\end{equation}
where $\hat{\omega}$ is defined in \eqref{eq:omega_hat}. 

We start by claiming that for any $z\in \mathbf{P}_1^+$ it holds
\begin{equation}
    \label{eq:approx-omega}
    \hat\omega(z) = -ke^{-2\pi\sigma_2^2z^2}(1+R(z))),
\end{equation}
where $R(z)=\mathcal{O}(\lambda^{-\gamma})$ with $\gamma\coloneqq 2\frac{\sigma_2^2-\sigma_1^2}{\sigma_1^2}>0$. Indeed, taking the modulus in \eqref{eq:zeros} and using the fact that $\sigma_1<\sigma_2$ we obtain
\begin{equation}
    1+\lambda^2 \le (1+k)e^{-2\pi\sigma_1^2(\Re z^2-\Im z^2)}.
\end{equation}
This yields 
\begin{equation}
    (\Re z)^2-(\Im z)^2 \le -\frac{1}{2\pi\sigma_1^2}\log\frac{1+\lambda^2}{1+k}.
\end{equation}
This allows to estimate 
\begin{equation}
    \left| R(z) \right| = k^{-1} \left|e^{-2\pi (\sigma_1^2-\sigma_2^2)z^2}\right| \le k^{-1} \left(\frac{1+\lambda^2}{1+k}\right)^{-\frac{\gamma}2},
\end{equation}
proving the claim.

We now replace the expression \eqref{eq:approx-omega} in \eqref{eq:zeros}, obtaining
\begin{equation}
    e^{-2\pi\sigma_2^2z^2} = -\frac{1+i\lambda}{k} \left( 1+\mathcal{O}(\lambda^{-\gamma})\right).
\end{equation}
Taking the (complex) logarithm\footnote{One has to pay attention to the fact that the complex logarithm is a multi-valued function.} we obtain that $z=w+i2\pi\ell+\mathcal{O}(\lambda^{-\gamma/2})$, where $\ell\in\Z$ and
\begin{equation}
    w^2 = -\frac{1}{2\pi\sigma_2^2} \left( \log\frac{\sqrt{1+\lambda^2}}{k} + i \arctan \lambda \right).
\end{equation}
One can then solve the above in terms of $\Re w$ and $\Im w$. This yields a fourth degree equation, whose solution can be approximated via the Taylor expansion of the square root for $\lambda \gg 1$, yielding the remainder term $\mathcal{O}({\log^{-3/2}(\lambda/k)})$ appearing in the statement. The statement follows by observing that, in order for $z$ to be the principal pole $z_0$ (i.e., the solution of \eqref{eq:zeros} with smallest real and imaginary part), one has to choose $\ell = 0$.
\end{proof}

\begin{figure}[tbh]
\centering
         \includegraphics[width=0.45\textwidth]{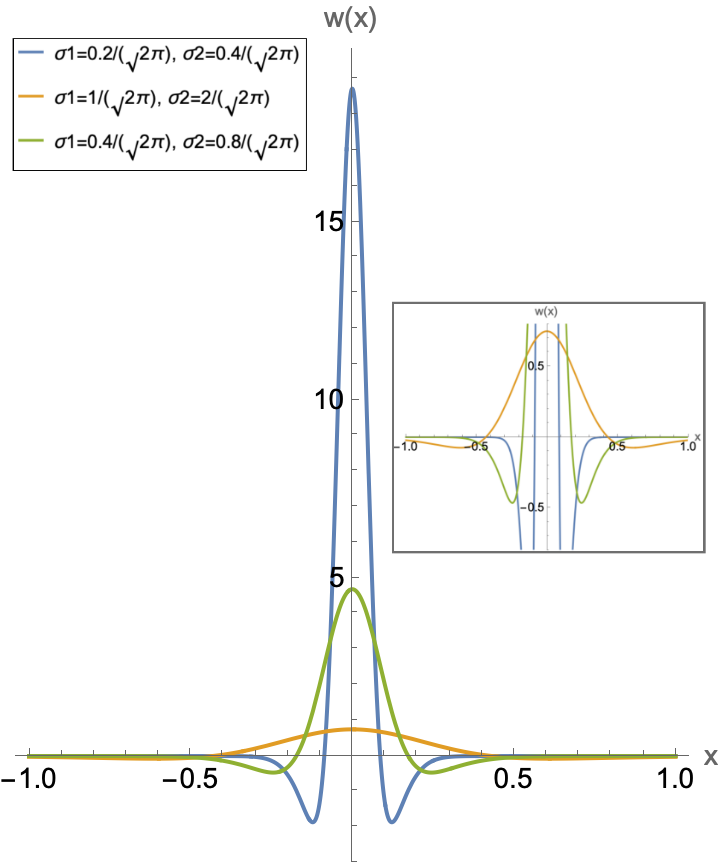}
      \caption{ Display of the kernel ${\omega}(x)$ defined by equation \eqref{eq:DoG_kernel}, with $k = 1$ and various values of $(\sigma_1, \sigma_2)$ as specified in \eqref{eq:sigma-paramters}. The box on the right provides a zoomed-in view of the central region near the origin to better illustrate the behavior of the three functions.
      }
\label{fig:kernels}
\end{figure}
 From this proof, particularly from \eqref{eq:approx-omega}, we notice the dominance of the inhibitory part of the kernel \eqref{eq:DoG_kernel} in influencing, along with flicker frequency, the behavior of the poles. This emphasizes the importance of inhibition in neural models, consistent with the observations in \cite{tamekue2024reproducibility}, where inhibition, interpreted through the firing rate function, plays a crucial role in shaping neural activity and enabling the model to accurately replicate visual phenomena.

To visualize this parameters dependence, we take $\lambda \in [2, 100]$, while for the kernel \eqref{eq:DoG_kernel} we set $k=1$ and consider the following couples: 
\begin{equation}
\label{eq:sigma-paramters}
\begin{aligned}
    1) :& \quad  (\sigma_1, \sigma_2) = (0.2/\sqrt{2}\pi, 0.4/\sqrt{2}\pi)\\
    2) :&  \quad (\sigma_1, \sigma_2) = (1/\sqrt{2}\pi, 2/\sqrt{2}\pi)\\
    3) :& \quad (\sigma_1, \sigma_2) = (0.4/\sqrt{2}\pi, 0.8/\sqrt{2}\pi).\\
\end{aligned}
\end{equation}
Figure \ref{fig:kernels} illustrates how the kernel $ \omega$ changes with the different values of $(\sigma_1, \sigma_2)$ specified. We note that these values satisfy $\|\omega\|_1 < 1$.

Then, for every pair $(\sigma_1, \sigma_2)$ of \eqref{eq:sigma-paramters}, we numerically compute with Mathematica the principal pole $z_0 = z_0(\lambda)$, varying $\lambda$ in the range $[2, 100]$.
We then plot the real part $\Re z_0(\lambda)$, the imaginary part $\Im z_0(\lambda)$, and the size of the vertical stripes $1/(2\Re z_0(\lambda))$ for each pair $(\sigma_1, \sigma_2)$. The results are displayed in Figure \ref{fig:biasing}, where we compare the approximating functions of Proposition \ref{prop:Rez_Imz} (illustrated with dashed lines), with the numerical computations (depicted with solid lines). 
In general, as the frequency increases, $\Re z_0(\lambda)$ decreases (Figure \ref{fig:biasing}, image (a)), while $\Im z_0(\lambda)$ increases (Figure \ref{fig:biasing}, image (b)). Consequently, this affects the width of the bars, which is given by $1/(2\Re z_0)$; Figure \ref{fig:biasing}, image (c) illustrates its increase as $\lambda$ grows.

\begin{figure*}[tbh]
\centering
\begin{subfigure}[b]{0.32\textwidth}
         \centering
         \includegraphics[width=\textwidth]{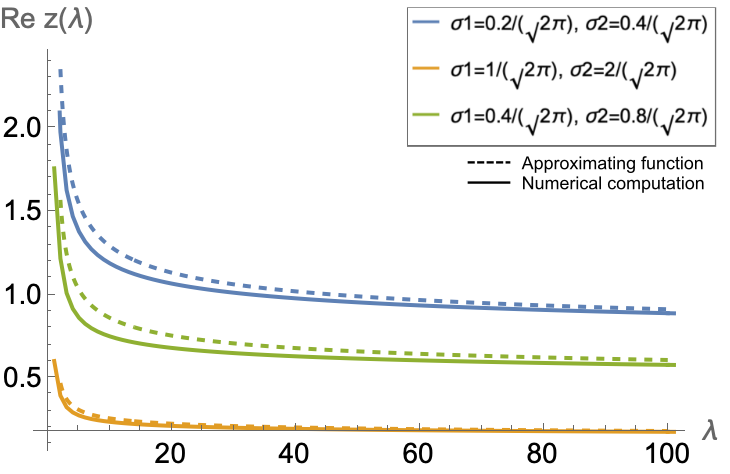} 
         \caption{}
\end{subfigure}
\begin{subfigure}[b]{0.32\textwidth}
         \centering
         \includegraphics[width=\textwidth]{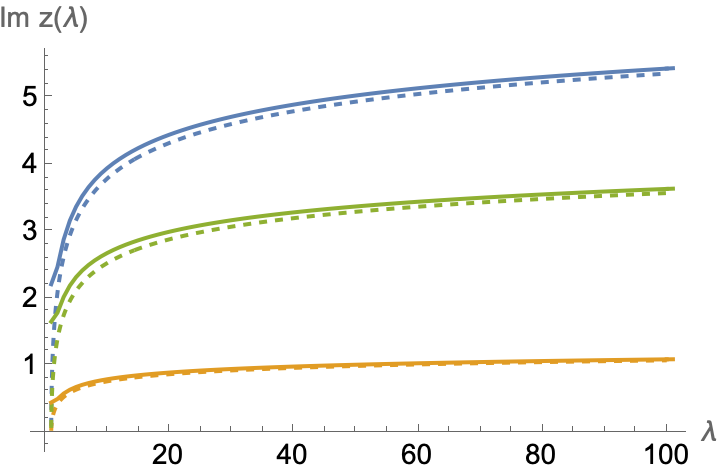} 
         \caption{}
\end{subfigure}
\begin{subfigure}[b]{0.32\textwidth}
         \centering
         \includegraphics[width=\textwidth]{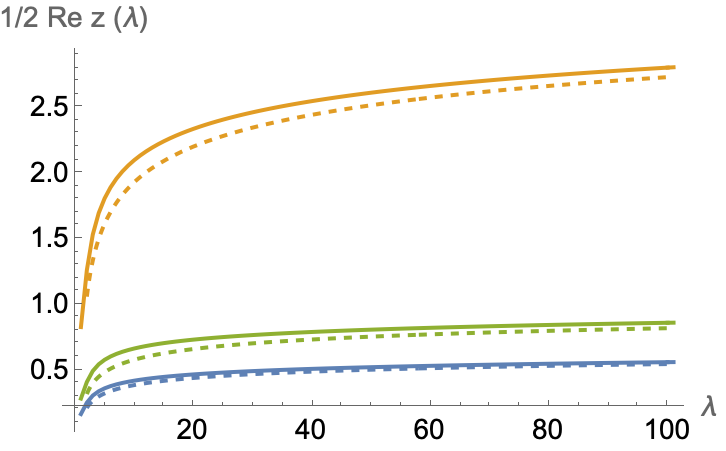} 
         \caption{}
\end{subfigure}
      \caption{Analysis of the principal pole $z_0(\lambda) \in \mathbf{P}_1^+$, as $(\sigma_1, \sigma_2)$ varies according to \eqref{eq:sigma-paramters}, with $\lambda \in [2, 100]$.  The dashed lines represent the approximating functions determined in Proposition \ref{prop:Rez_Imz}, while the solid lines illustrate the numerical computations of the poles for varying $\lambda$, performed using Mathematica. (a) Plot of $\Re z_0(\lambda)$.  (b) Plot of $\Im z_0(\lambda)$. (c) Plot of $1/(2\Re z_0 (\lambda))$, the width of the vertical stripes.
      }
\label{fig:biasing}
\end{figure*}

\subsection{Relationship with psychophysical experiments}
\label{sec:experiments}
Localized information is employed for example in \cite{tamekue2023mathematical, tamekue2024reproducibility, nicks2021understanding} to formally describe perceptual phenomena such as the Mackay effect and the Billock-Tsou illusion, using neural field models.

\subsubsection{MacKay effect}
The visual phenomenon known as the MacKay effect (\cite{mackay1957some}) occurs when the perception of radial and circular patterns is mixed. Specifically, we concentrate on the impact of observing a radial pattern, causing circles to appear superimposed on it. This psychophysical experience is illustrated in Figure \ref{fig:MK}, image (a).

\begin{figure*}[tbh]
\centering
\begin{subfigure}[b]{0.24\textwidth}
         \centering
         \includegraphics[width=\textwidth]{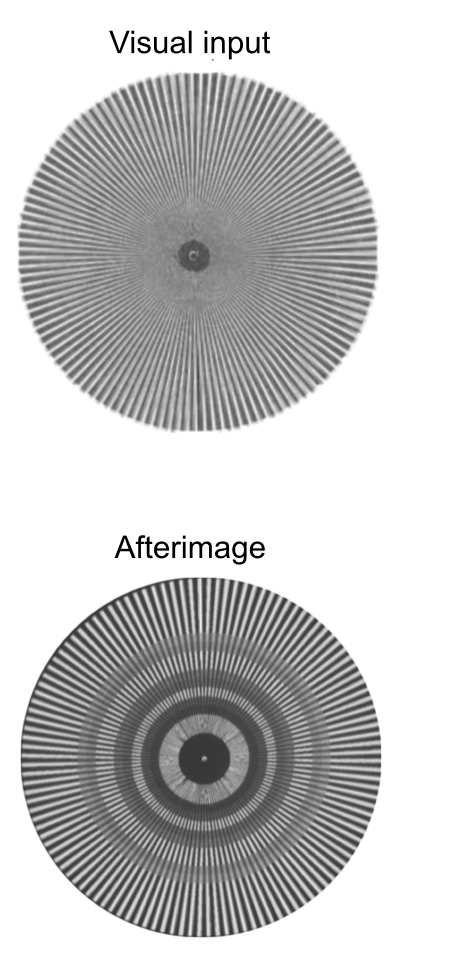} 
         \caption{}
\end{subfigure}
\begin{subfigure}[b]{0.24\textwidth}
         \centering
         \includegraphics[width=\textwidth]{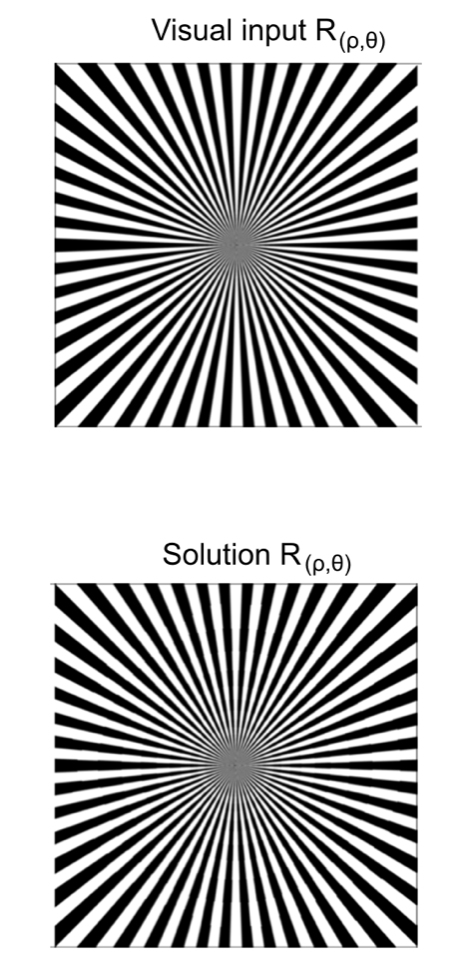} 
         \caption{}
\end{subfigure}\hspace{0.2cm}
\begin{subfigure}[b]{0.48\textwidth}
         \centering
         \includegraphics[width=\textwidth]{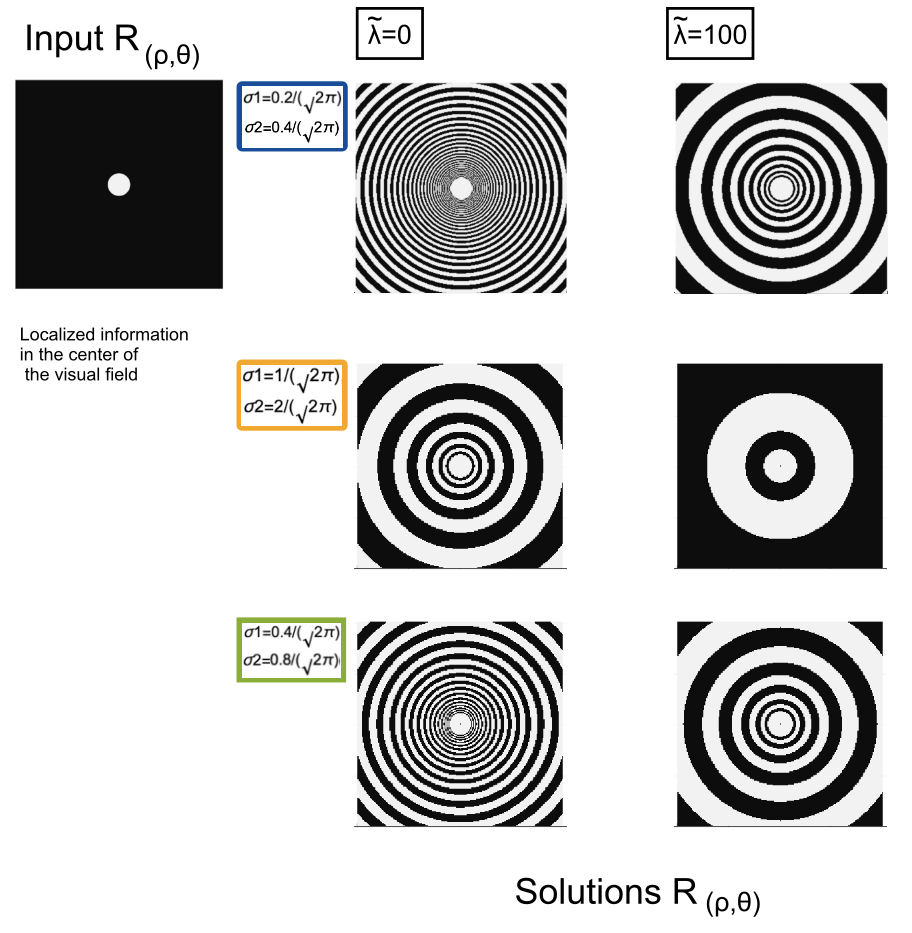} 
         \caption{}
\end{subfigure}
      \caption{Relationship with MacKay effect. (a) Depiction of MacKay effect (\cite{mackay1957moving}): the visual stimulus consists of radial patterns and it is displayed on the top. Looking at the center of the stimulus, leads to an afterimage of concentric rings superimposed on the background, as illustrated on the bottom (representation by \cite{leviant1996does}). (b) MacKay effect reproduced in \cite{tamekue2023cortical}: top image represents the binarized  input \eqref{eq:input-MK} (visual stimulus) in retinal coordinates. The bottom image shows the binary solution (the perceived image) to the corresponding linear mean field equation \eqref{eq:mean_field} in retinal coordinates. (c) Display in retinal coordinates of different binarized solutions \eqref{eq:flicker-sol} (for a fixed $\bar t$) corresponding to the binarized input \eqref{eq:input-flicker} describing localized information in the center of the visual field. Parameters $(\sigma_1, \sigma_2)$ are varied according to \eqref{eq:sigma-paramters} and $\lambda \in \{0, 100\}$. The variation in the size of the patterns translates to changes in the width of the perceived circular patterns in image (b) at the bottom.}
\label{fig:MK}
\end{figure*}

In \cite{tamekue2023mathematical}, the authors show that this experience can be reproduced using a neural field model governed by \eqref{eq:mean_field} with a linear firing rate function $f(s)=s$ and static visual input defined as:
\begin{equation}
\label{eq:input-MK}
    I(x) \coloneqq \cos(2\pi\zeta x_2) + H(-x_1),
\end{equation}
where $x =(x_1, x_2) \in \mathbb{R}^2$, $\zeta \in \mathbb{R}^+$. We recall that $H(-x_1)$ models localized information in the center of the visual field. 

\begin{remark} 
The authors in \cite{tamekue2023mathematical, tamekue2023cortical} show the existence of a globally attractive stationary stable state for \eqref{eq:mean_field}, under the assumptions defined at the beginning of Section \ref{sec:general} with a time-indpendent input. Then, combining the results from \cite{tamekue2023cortical, tamekue2023mathematical} with Corollary \ref{cor:0-poles}, the solutions of linear \eqref{eq:mean_field} with input \eqref{eq:input-MK} converges toward:
\begin{equation}
\begin{aligned}
u^\star&(x)\\
&=  \tilde I(x)+ V_0(x) + \mathcal{O}\left(\frac{e^{-2\pi |x_1| \Im{z_0}}}{x_1} \right), \\
\end{aligned}
\end{equation}
where $\tilde I(x) = (1+C_\zeta)\cos(2\pi\zeta x_2) + H(-x_1)$ with $C_\zeta = \hat{\omega(\zeta)}/(1-\hat{\omega}(\zeta))$ accordingly to \cite[Prop.1]{tamekue2023cortical}.
\end{remark}

The localized information in the center of the visual field, modeled by the Heaviside function, causes the emergence of vertical stripes in the complementary peripheral part. Section \ref{sec:biasing} investigates how the size of these illusory patterns depends on the parameters of the equation, specifically on the principal pole of the kernel $K_\ell$, for the appropriate $\ell$. By changing $(\sigma_1, \sigma_2)$ according to \eqref{eq:sigma-paramters}, the size of the perceived patterns modifies. 

MacKay's study (\cite{mackay1957moving}) notes that the effect persists under strobe lights. If the redundant information flickers, i.e., we substitute $H(-x_1)$ with $H(-x_1)\cos(\lambda t)$, the size of the patterns adjusts accordingly to \eqref{eq:flicker-sol}. This is illustrated in Figure \ref{fig:MK}, image (c), where we show the localized information in the center of the visual field and how the solution of the corresponding neural field equation changes with varying parameters $(\sigma_1, \sigma_2)$ and $\lambda$, in accordance with results presented in Section \ref{sec:biasing}.

\subsubsection{Billock-Tsou illusion}

Billock and Tsou's work on visual illusions (\cite{billock2007neural}) explores the interaction of flickering visual stimuli, such as a series of light flashes on a monitor, with geometric visual patterns. In particular, we focus on radial patterns introduced in the center of the visual field surrounded by peripheral flickering, raising the perception of concentric circles in the flickering peripheral area. An illustration of this effect can be found in Figure \ref{fig:BT-exp}. 

\begin{figure}[tbh]
    \centering
\includegraphics[width=0.45\textwidth]{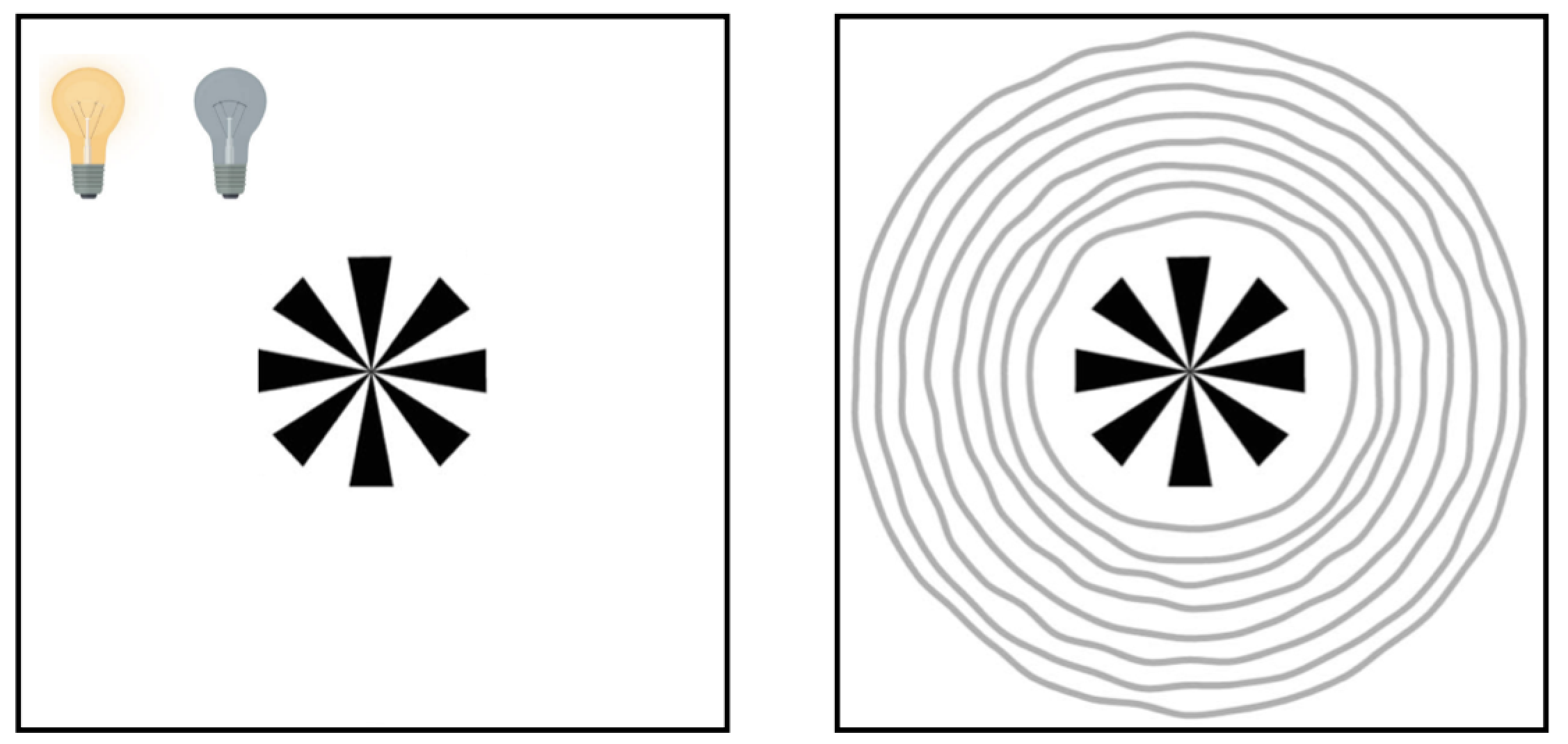}
\caption{Billock-Tsou's psychophysical experiment in the visual field. The visual stimulus is depicted in black: fan shapes pattern in the center of the image. The peripheral area to the stimulus is subject to flickering. Concentric circles represented in grey are the illusory effect within the flickering area. Image adapted from \cite{billock2007neural}.}
\label{fig:BT-exp}
\end{figure}

Previous works (\cite{nicks2021understanding, tamekue2023cortical, tamekue2024reproducibility}) have addressed the reproducibility of this phenomenon using mean field equations with a static input (i.e., no flicker in the complementary area to the geometric visual stimulus). In particular, \cite{tamekue2023cortical, tamekue2024reproducibility} show that reproducing the phenomenon is closely linked to the nonlinearity of the mean field equation \eqref{eq:mean_field}.

\begin{figure*}[tbh]
    \centering
\begin{subfigure}{\textwidth}
         \centering
         \includegraphics[width=0.9\textwidth]{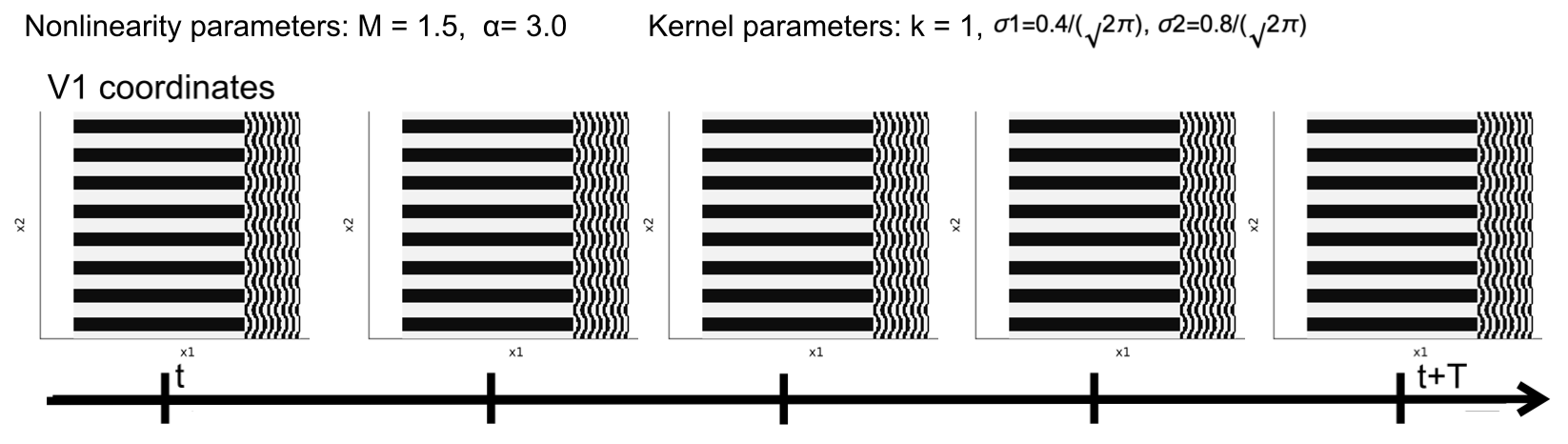}
         \caption{}
\end{subfigure}
\begin{subfigure}{\textwidth}
         \centering
         \includegraphics[width=0.9\textwidth]{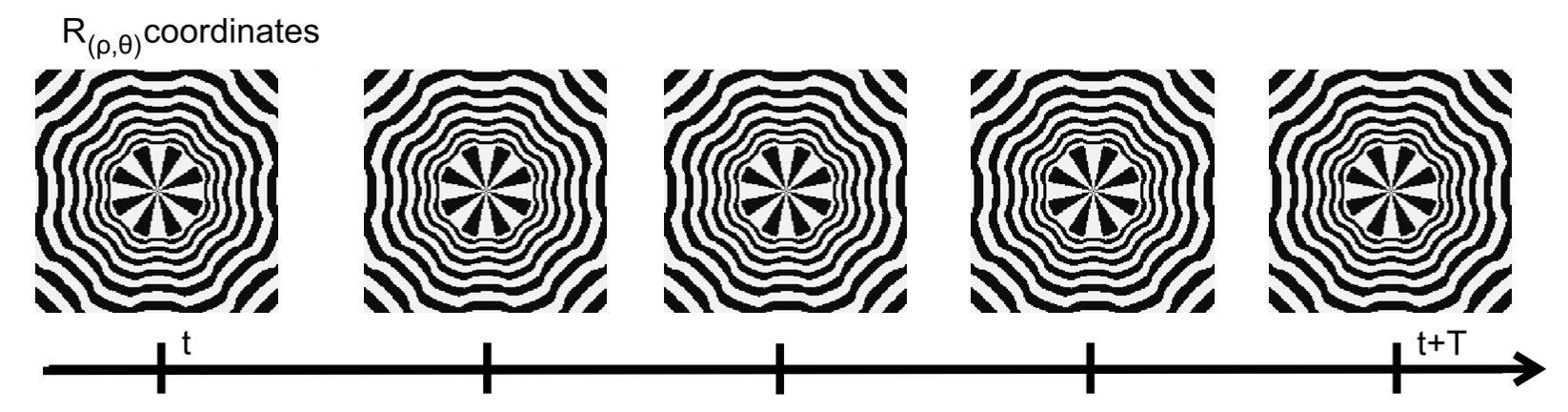} 
         \caption{}
\end{subfigure}\captionsetup{justification=justified, singlelinecheck=false, width=\linewidth}
\caption{Binary illustration of the periodic attractive state $ u^\star $ of \eqref{eq:mean_field} with modified input \eqref{eq:input-BT}, adjusted to accommodate a larger visual stimulus at the center of the visual field: $I(x, t)= H(5-x_1)\cos(2\pi \zeta x_2)+ H(x_1-5)\cos(\lambda t) $. We use the firing rate function $ f_{1.5, 3 }(s) = \max\{-1.5, \min\{3, s\}\} $, along with kernel parameters for \eqref{eq:DoG_kernel} chosen to be $k = 1, (\sigma_1, \sigma_2) = (0.4/\sqrt{2}\pi, 0.8/\sqrt{2}\pi) $ and flicker frequency $\lambda = 60 $. The evolution of the solution is depicted across frames over the time interval $ [t, t=T] $, where $ T = 2\pi/\lambda $. (a) Display of $u^\star$ in cortical coordinates.  (b) Display of $u^\star$ in retinal coordinates.
 }
\label{fig:BT}
\end{figure*}

\begin{remark}
Let us consider \eqref{eq:mean_field} with linear firing function $ f(s) = s $ and the following input:
\begin{equation}
\label{eq:input-BT}
I(x, t) = \underbrace{H(-x_1)\cos(2\pi \zeta x_2)}_{\text{static input}} + \underbrace{H(x_1)\cos(\lambda t)}_{\text{flickering input}}
\end{equation}
where $ x = (x_1, x_2) \in \mathbb{R}^2 $, $ \zeta, \lambda \in (0, \infty) $. The static input corresponds to the one used in \cite{tamekue2023cortical, tamekue2024reproducibility}. According to \cite[Prop. 5]{tamekue2023cortical} and \cite[Thm. 4.1]{tamekue2024reproducibility}, the solution with respect to the static input converges to a combination of horizontal and vertical stripe patterns in the peripheral area, instead of exclusively vertical stripes.
On the other hand, the solution corresponding to the flickering input converges to \eqref{eq:flicker-sol}, leaving the peripheral visual field unchanged. 
This aligns with the irreproducibility of the effect in the linear case, even if adding the flickering inputs.
\end{remark}

Numerical simulations performed in Julia allow us to solve the nonlinear mean field equation \eqref{eq:mean_field} with input \eqref{eq:input-BT}
and firing rate function $f_{M, \alpha }(s) = \max\{-M, \min\{1,\alpha s\}\} $, where $ M, \alpha > 0 $. Here, the temporal oscillation $ \lambda = 60 $ is chosen to align with frequencies known from psychophysical studies (e.g., \cite{shenyan2024visual, bartossek2021altered}) that induce significant illusory effects. Results are depicted in Figure \ref{fig:BT}, showing the reproducibility of the phenomenon with nonlinear parameters $M=1.5 $ and $ \alpha=3 $, and kernel parameters $k = 1, (\sigma_1, \sigma_2) = (0.4/\sqrt{2}\pi, 0.8/\sqrt{2}\pi)$. Circular patterns lightly move, suggesting slightly pulsation.

\section{Conclusions and future works}

In this paper, we study neural field dynamics in presence of time-periodic inputs, which model flickering visual stimuli.

We consider the neural field equation \eqref{eq:mean_field}, where parameters are chosen so that, in the absence of input ($I \equiv 0$), there exists a unique stable stationary solution. When the input $I$ is periodic in time, we show the exponential convergence toward a unique periodic solution $u^\star$ for \eqref{eq:mean_field}, which preserves the symmetries of the dynamic. This differs from the model presented in \cite{rule2011model} for flickering phosphenes, where spatially homogeneous inputs can lead to non-homogeneous solutions due to input impact on system stability.

We further show that when the $L^\infty$ norm of the input is small, the nonlinear dynamics \eqref{eq:mean_field} can be approximated by its linearization. 
So, in the linear case, we determine a decomposition for $u^\star$ in terms of the external input and a new kernel based on $\omega$. For $ x \in \R$, when $\omega$ is taken to be a difference of Gaussians (e.g., \eqref{eq:kernel}), we can explicitly write the analytic expression of these kernels, extending the technique introduced for the static case in \cite{tamekue2023mathematical, tamekue2023cortical} for the periodic case.

We then use this explicit representation to apply our theoretical results to visual processing. Geometric patterns commonly used in the study of visual illusions, such as concentric circles or radial shapes, are effectively one-dimensional when processed in the primary visual cortex (V1). Therefore, we formally study their interaction with localized flickering stimuli: 
in agreement with previous results (\cite{tamekue2023mathematical, tamekue2023cortical}), localized information generates the geometric percepts, while we show that flicker determines their perceived motion. Additionally, we emphasize the importance of the inhibitory component of the kernel along with flicker frequency in shaping the size of the perceived pattern.

Finally, we relate these findings to psychophysical experiments that are interpretable from a one-dimensional perspective, such as the MacKay effect (\cite{mackay1957moving}) and the Billock-Tsou illusion (\cite{billock2007neural}). Due to the explicit representation of the linear solution, we show flicker frequency and kernel parameters change the size of patterns in the MacKay effect; on the other hand, we recognize (in agreement with \cite{tamekue2024reproducibility}) 
that the Billock-Tsou illusion remains a result of nonlinear firing rate functions. Future studies will explore the influence of nonlinear shapes and parameter dependencies $(\sigma_1, \sigma_2)$ and $\lambda$ on the nonlinear Billock-Tsou effect.

\section*{Acknowledgments}
This work has been supported by the grant ANR-20-CE48-0003 of the Agence Nationale de la Recherche.

\begin{appendices}

\section{Technical results}
\label{app:proofs}

We begin with the following Gronwall's lemma, see for instance \cite[Proposition~2.1]{emmrich1999discrete} for a proof.
\begin{lemma}\label{lem::gronwall-lemma}
    Assume that $u\in C([0,T); \R)$, $T\in(0,\infty)$ satisfies for any $t\in[0,T]$ the integral inequality 
    \begin{equation}
        u(t)\le u(0)+\int_{0}^{t}g(s)u(s)ds+\int_{0}^{t}h(s)ds,
    \end{equation}
    for some $0\le g\in L^1(0,T)$ and $h\in L^1(0, T)$. Then, for any $t\in [0,T]$, $u$ satisfies the inequality
    \begin{equation}
        u(t)\le u(0)e^{G(t)}+\int_{0}^{t}h(s)e^{G(t)-G(s)}ds,
    \end{equation}
    where $G(t)=\int_{0}^{t}g(s)$.
\end{lemma}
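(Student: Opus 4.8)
The plan is to use the standard integrating-factor argument for Gronwall-type inequalities, taking care of the fact that $g$ and $h$ are only $L^1$. First I would introduce the auxiliary function
\begin{equation}
    \psi(t) \coloneqq u(0) + \int_0^t g(s) u(s)\,ds + \int_0^t h(s)\,ds,
\end{equation}
so that the hypothesis reads simply $u(t) \le \psi(t)$. Since $h \in L^1(0,T)$ and $u$ is continuous with $g \in L^1(0,T)$ (so that $gu \in L^1$), the function $\psi$ is absolutely continuous on $[0,t]$ for every admissible $t$, and satisfies $\psi'(s) = g(s)u(s) + h(s)$ for almost every $s$.

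Next I would exploit the sign condition $g \ge 0$ together with $u \le \psi$ to convert this identity into the differential inequality $\psi'(s) \le g(s)\psi(s) + h(s)$ for a.e.\ $s$. Multiplying by the positive, bounded, absolutely continuous integrating factor $e^{-G(s)}$ and using the product rule for absolutely continuous functions, I obtain
\begin{equation}
    \frac{d}{ds}\left[ e^{-G(s)}\psi(s) \right] = e^{-G(s)}\big(\psi'(s) - g(s)\psi(s)\big) \le e^{-G(s)} h(s) \quad \text{a.e.}
\end{equation}

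Then I would integrate this from $0$ to $t$ via the fundamental theorem of calculus for absolutely continuous functions, and use $G(0)=0$ together with $\psi(0)=u(0)$ to get
\begin{equation}
    e^{-G(t)}\psi(t) - u(0) \le \int_0^t e^{-G(s)} h(s)\,ds.
\end{equation}
Multiplying back by $e^{G(t)}$ and recalling that $u(t) \le \psi(t)$ yields precisely the claimed bound, since $e^{G(t)}e^{-G(s)} = e^{G(t)-G(s)}$. The only delicate point, and thus the step to handle with care, is that everything must be justified at the level of absolutely continuous rather than $C^1$ functions: both the product rule and the fundamental theorem of calculus remain valid in this setting, and the hypothesis $g \ge 0$ is exactly what guarantees that the passage from the identity $\psi' = gu + h$ to the inequality $\psi' \le g\psi + h$ goes in the correct direction.
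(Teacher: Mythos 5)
Your proposal is correct. Note that the paper does not actually prove this lemma: it simply cites \cite{emmrich1999discrete} (Proposition~2.1) and uses the result as a black box, so there is no in-paper argument to compare against. Your argument is the standard integrating-factor proof of the integral form of Gronwall's inequality, and it is complete: you correctly identify the two points that need care, namely that the majorant $\psi$ is only absolutely continuous (so the product rule and the fundamental theorem of calculus must be invoked in the AC setting, where both are indeed valid), and that the sign hypothesis $g\ge 0$ is precisely what allows the passage from $\psi'=gu+h$ to $\psi'\le g\psi+h$. This is, in effect, a self-contained substitute for the external reference the paper relies on.
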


\subsection{Computation of the 1D kernel}
In this appendix, we develop the proofs related to Section \ref{sec:linear}, building on complex analysis results.

The aim is to compute the inverse transform of $\hat{K_\ell}(\xi)$ when $\omega$ is defined as in \eqref{eq:DoG_kernel}, using the Residue theorem to evaluate the real integral of
\begin{equation}
\label{app:meromorphic}
\hat{K_\ell}(\xi)e^{2\pi i x \xi}= \frac{(i \ell -\hat{\omega}(\xi))e^{2\pi i x \xi}}{ 1 + i \ell  - \hat{\omega}(\xi)},\quad \xi \in \mathbb{R},  
\end{equation}
generalizing the strategy adopted in the static case in \cite{tamekue2023mathematical}.
We consider the set  
\begin{equation}
\label{app:poles}
P_\ell \coloneqq \{z \in \mathbb{C} \mid 1+ i\ell-\hat{\omega}(z) = 0\},
\end{equation}
which is the set of the poles of the meromorphic continuation of \eqref{app:meromorphic}. We recall that, as noted in \cite{tamekue2023mathematical} (see also \cite{berenstein2012complex}), given that $1+i\ell -\hat{\omega}(z)$ is a complex exponential polynomial, the assumption $\sigma_1^2/\sigma_2^2 \in \mathbb{Q}$ implies that $P_\ell$ is discrete and countable.

\begin{proposition}
\label{prop:app_poles}
Let $z\in P_\ell$, $\ell\in\Z$. Then,
\begin{equation}
    -z \in P_\ell
    \qquad\text{and}\qquad
    \bar{z}, -\bar{z} \in P_{-\ell},
\end{equation}
where $\bar{z}$ denotes the complex conjugate of $z$.
\end{proposition}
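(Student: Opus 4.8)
The plan is to exploit two elementary symmetries of the entire function obtained by extending $\hat\omega$ from \eqref{eq:omega_hat} to the whole complex plane, namely
\[
\hat\omega(z) = e^{-2\pi^2\sigma_1^2 z^2} - k\,e^{-2\pi^2\sigma_2^2 z^2}, \qquad z\in\mathbb{C}.
\]
First I would record that $\hat\omega$ is \emph{even}, since it depends on $z$ only through $z^2$; hence $\hat\omega(-z)=\hat\omega(z)$ for every $z$. Second, because all constants in the exponentials ($2\pi^2\sigma_1^2$, $2\pi^2\sigma_2^2$, and $k$) are real, $\hat\omega$ has real Taylor coefficients and therefore satisfies the Schwarz reflection identity $\overline{\hat\omega(z)}=\hat\omega(\bar z)$. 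These two observations are the whole engine of the proof.

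With them in hand, the three claims follow by a one-line substitution into the defining relation $\hat\omega(z)=1+i\ell$ that characterizes membership $z\in P_\ell$ (see \eqref{app:poles}). For $-z$, evenness gives $\hat\omega(-z)=\hat\omega(z)=1+i\ell$, so $1+i\ell-\hat\omega(-z)=0$ and $-z\in P_\ell$. For $\bar z$, the reflection identity gives $\hat\omega(\bar z)=\overline{\hat\omega(z)}=\overline{1+i\ell}=1-i\ell=1+i(-\ell)$, whence $1+i(-\ell)-\hat\omega(\bar z)=0$, i.e.\ $\bar z\in P_{-\ell}$. Finally, $-\bar z$ is obtained by composing the two arguments: since $\bar z\in P_{-\ell}$, applying the evenness step to $\bar z$ yields $-\bar z\in P_{-\ell}$ as well.

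There is no genuine obstacle here; the only point deserving care is the reflection identity, which hinges entirely on the coefficients being real. I would therefore make explicit that $\sigma_1,\sigma_2$ are real and $k\ge 1$, so that each exponent $-2\pi^2\sigma_j^2 z^2$ is a real multiple of $z^2$ and the exponentials are conjugate-symmetric. Once evenness and reflection symmetry are in place, the statement is an immediate bookkeeping exercise on $\hat\omega(z)=1+i\ell$, and no appeal to the discreteness or countability of $P_\ell$ is needed.
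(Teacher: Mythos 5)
Your proof is correct and follows essentially the same route as the paper: evenness of $\hat\omega$ gives $-z\in P_\ell$, and conjugation symmetry of $\hat\omega$ gives $\bar z,-\bar z\in P_{-\ell}$. The only difference is presentational — the paper verifies the conjugation symmetry by explicitly splitting $\hat\omega(z)=1+i\ell$ into its real and imaginary parts, whereas you invoke the reflection identity $\overline{\hat\omega(z)}=\hat\omega(\bar z)$ directly, which is a cleaner bookkeeping of the same fact.
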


\begin{proof}
Since $ 1 + i \ell  - \hat{\omega}$ is an even function, $-z \in P_\ell$.
Then, we rewrite $ 1 + i \ell  - \hat{\omega}(z) = 0$ as:
\begin{equation}
\begin{aligned}
 i\ell + 1 &= e^{-2\pi^2\sigma_1^2 z^2} - ke^{-2\pi^2\sigma_2^2z^2};
\end{aligned}
\end{equation}
this is equivalent to:
\begin{equation}
\label{eq:decomposition_mD}
\begin{aligned}
1 &=& e^{-2\pi^2\sigma_1^2 (\Re{z}^2-\Im{z}^2)}\cos(4\pi^2\sigma_1^2\Re{z}\Im{z})+ \\&& -ke^{ - 2\pi^2\sigma_2^2 (\Re{z}^2-\Im{z}^2)}\cos(4\pi^2\sigma_2^2\Re{z}\Im{z}),\\
\ell &= &- e^{-2\pi^2\sigma_1^2 (\Re{z}^2-\Im{z}^2)}\sin(4\pi^2\sigma_1^2\Re{z}\Im{z})+\\ &&+  ke^{ - 2\pi^2\sigma_2^2 (\Re{z}^2-\Im{z}^2)}\sin(4\pi^2\sigma_2^2\Re{z}\Im{z}).
\end{aligned}
\end{equation}
It follows that if $z \in P_\ell$, then it satisfies \eqref{eq:decomposition_mD}, and the conjugate $\bar{z}$ satisfies \eqref{eq:decomposition_mD} for $P_{-\ell}$.
\end{proof}

We computed $K_\ell$ in Proposition \ref{prop:K_series} using the symmetries of the poles identified in Proposition \ref{prop:app_poles} to apply the Residue Theorem. First, we observed
\begin{equation}
\label{app:boh}
	K_\ell = 
	\begin{cases}
		2K_0^2 & \text{if } \ell=0,\\
		\sgn(\ell)K^1_\ell+K_\ell^2 & \text{if } \ell=0
	\end{cases}
\end{equation}
where the Fourier transforms of ${K}^1_\ell$ and ${K}^2_\ell$ are given by:
\begin{equation}
\label{eq:k1_2}
\begin{aligned}
\hat{K}^1_\ell(\xi) &= \frac{i |\ell|}{( 1 + i \ell  - \hat{\omega}(\xi))( 1 -i \ell  - \hat{\omega}(\xi))} \\
\hat{K}^2_\ell(\xi) &= \frac{\hat{\omega}(\xi)(\hat{\omega}(\xi)-1) +\ell^2}{( 1 + i \ell  - \hat{\omega}(\xi))( 1 -i \ell  - \hat{\omega}(\xi))}.
\end{aligned}
\end{equation}
The $K^\nu_\ell$ for $\nu = 1, 2$ are determined by switching from the exponential series \eqref{eq:limit_cycle} to its trigonometric form, exploiting the Convolution theorem. The expression \eqref{app:boh} is then obtained by returning to the exponential form. 

Notice that the set of poles of the meromorphic continuation of $K_\ell^\nu$, $\nu=1,2$, are exactly $P_\ell\cup P_{-\ell}$.
The residue of  $K_\ell^\nu(z) e^{2\pi i x z}$ in $z$ is computed as (e.g, see \cite{cartan1961theorie}):
\begin{multline}
\label{eq:res1}
\text{Res}(K_\ell^1(z) e^{2\pi i x z}, z) = \frac{i {|\ell|} e^{2\pi i x z}}{(2(1-\hat{\omega})\hat{\omega}')
(z)},\\
\text{Res}(K_\ell^2(z) e^{2\pi i x z}, z) = \frac{(\hat{\omega}(\hat{\omega} - 1)(z) +\ell^2) e^{2\pi i x z}}{(2(1-\hat{\omega})\hat{\omega}')(z)} .\\
\end{multline}
Observe that if $z \in P_{\pm \ell}$ then $(1-\hat{\omega})(z)= \mp i\ell$.
\begin{remark}
\label{rmk:cz}
Let us denote with $c_z = \hat{\omega}'(z)$, with $z \in \C$. 
Explicit writing the real and imaginary part, leads to  
\begin{equation}
\label{eq:ck}
\frac{1}{c_z} = \frac{\Re{c_z} - i \Im{c_z}}{\abs{c_z}^2}, \hspace{0.5cm} \frac{1}{c_{-\bar{z}}} = -\frac{\Re{c_z} + i \Im{c_z}}{\abs{c_z}^2}.
\end{equation}
\end{remark}

We now focus on the poles that lie in the positive quadrant: 
\begin{equation}
\label{eq:app_P+}
    \mathbf{P}^{+}_\ell \coloneqq \{ z \in P_\ell\cup P_{-\ell} 
    \mid \Re z > 0 , \Im z>0\};
\end{equation}
and since $\mathbf{P}_\ell^+$ is discrete, we pick the enumeration 
\begin{equation}
	\label{eq:app_poles++}
	\mathbf{P}_\ell^+=\{z_0,z_1,\ldots\}
\end{equation}
 such that $\Im z_i\le \Im z_{i+1}$ and, if $\Im z_i = \Im z_{i+1}$, it holds ${\Re z_i}\le {\Re z_{i+1}}$.

\begin{lemma}
\label{lem:K_series}
Assume $\mathbf{P}_\ell^{+}$ defined as in \eqref{eq:app_P+} with the enumeration provided in \eqref{eq:app_poles++}. The kernels defined in \eqref{eq:k1_2} can be recasted as:
\begin{equation}
    \label{eq:K1&K2_ell}
    \begin{aligned}
    K_{{\ell}}^1(x) &= -&2\pi i &\sum_{j=0}^\infty \frac{e^{-2\pi \abs{x} \Im{z_j}}}{\abs{c_{j}}} \sin(2\pi \abs{x} \Re{z_j}+ \phi), \\
        K_{{\ell}}^2(x) &=  &2\pi &\sum_{j = 0}^\infty \frac{e^{-2\pi \abs{x} \Im{z_j}}}{\abs{c_{j}}} \cos(2\pi \abs{x} \Re{z_j}+\phi),
        \end{aligned}
\end{equation}
where $\phi= \tan^{-1}\left( {\Re c_{j} }/{\Im c_{j} }\right)$.
\end{lemma}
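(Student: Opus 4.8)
The plan is to evaluate, for each $\nu=1,2$, the inverse Fourier transform
\[
K_\ell^\nu(x)=\int_\R \hat K_\ell^\nu(\xi)\,e^{2\pi i x\xi}\,d\xi
\]
of the functions in \eqref{eq:k1_2} by the Residue Theorem, along the family of contours $\Gamma_n=[-R_n,R_n]\cup C_n^+$ with $C_n^+=\{R_n e^{i\theta}:\theta\in[0,\pi]\}$ and $R_n=(Mn+\varepsilon)^{1/\alpha}$ (cf.\ Figure~\ref{fig:example_poles}). I would first fix $x>0$, so that $e^{2\pi i xz}=e^{2\pi i x\Re z}e^{-2\pi x\Im z}$ decays in the upper half-plane, and recover $x<0$ at the end from the even symmetry $z\mapsto -z$ of the pole set (Proposition~\ref{prop:app_poles}); this is exactly what turns $x$ into $\abs{x}$ in \eqref{eq:K1&K2_ell}. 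Since $\Gamma_n$ is positively oriented, the Residue Theorem gives $\int_{\Gamma_n}\hat K_\ell^\nu(z)e^{2\pi i xz}\,dz=2\pi i\sum_{z}\Res\big(\hat K_\ell^\nu(z)e^{2\pi i xz},z\big)$, summed over the poles of $P_\ell\cup P_{-\ell}$ enclosed by $\Gamma_n$, so the whole computation reduces to discarding the arc $C_n^+$ and then summing residues.

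The step I expect to be the main obstacle is showing $\int_{C_n^+}\to 0$, which is genuinely delicate because the integrand does not decay at infinity in the naive way. On the one hand, $\hat\omega(z)=e^{-2\pi^2\sigma_1^2 z^2}-ke^{-2\pi^2\sigma_2^2z^2}$ is \emph{unbounded} on $C_n^+$: writing $z=R_ne^{i\theta}$ one has $\Re(z^2)=R_n^2\cos 2\theta$, so $\abs{\hat\omega}$ blows up like $e^{2\pi^2\sigma_2^2R_n^2\abs{\cos 2\theta}}$ for $\theta$ near $\pi/2$ and decays to $0$ for $\theta$ near $0,\pi$. On the other hand, along the real axis $\hat K_\ell^\nu(\xi)$ tends, for $\ell\neq 0$, to a nonzero constant as $\abs{\xi}\to\infty$, hence is not integrable; I would handle this by first subtracting its value at infinity, which is entire and therefore leaves all residues unchanged while contributing only a Dirac mass at the origin, irrelevant for $x\neq 0$. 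For the arc estimate I would then choose the radii $R_n$ so that $\Gamma_n$ stays uniformly away from $P_\ell\cup P_{-\ell}$ — possible because $\sigma_1^2/\sigma_2^2\in\Q$ makes this set discrete — and split $C_n^+$ into the sector near the imaginary axis, where $\Im z\sim R_n$ and the Jordan factor $e^{-2\pi x\Im z}$ overwhelms the growth of the integrand, and the complementary sector, where $\hat\omega\to 0$ drives the regularized integrand to $0$. Estimating the two sectors separately yields the vanishing of $\int_{C_n^+}$; this is the kernel-specific heart of the argument and the natural extension to $\ell\neq 0$ of the static computation of \cite{tamekue2023mathematical}.

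It then remains to sum the residues. For each $z_j\in\mathbf{P}_\ell^+$ (first quadrant), the symmetry of Proposition~\ref{prop:app_poles} places a second enclosed pole at $-\bar z_j$ (second quadrant), and on $P_{\pm\ell}$ one has $(1-\hat\omega)(z)=\mp i\ell$, which I would substitute into the residue formulas \eqref{eq:res1}. Pairing the two residues and invoking Remark~\ref{rmk:cz} — specifically the relation between $1/c_{z_j}$ and $1/c_{-\bar z_j}$ in \eqref{eq:ck} — the conjugate exponentials $e^{2\pi i xz_j}$ and $e^{-2\pi i x\bar z_j}$ recombine, via Euler's formula, into a single real oscillation $e^{-2\pi x\Im z_j}$ times a $\sin$ or $\cos$ of argument $2\pi x\Re z_j+\phi$, where the phase $\phi=\arctan(\Re c_j/\Im c_j)$ arises from writing $1/c_j$ in polar form. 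The distinction between the two kernels is that the numerator $i\abs{\ell}$ of $\hat K_\ell^1$ keeps the same sign at $z_j$ and $-\bar z_j$, whereas the numerator of $\hat K_\ell^2$ changes sign; this opposite pairing is precisely what produces the $\sin$ in $K_\ell^1$ and the $\cos$ in $K_\ell^2$. Finally, the resulting series converges absolutely for $x\neq 0$ thanks to the factor $e^{-2\pi\abs{x}\Im z_j}$ together with $\Im z_j\to\infty$, and the case $x<0$ follows by the reflection noted above, completing the proof of \eqref{eq:K1&K2_ell}.
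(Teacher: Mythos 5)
Your proposal follows the same route as the paper's own proof: the same contours $\Gamma_n=[-R_n,R_n]\cup C_n^+$ with $R_n=(Mn+\varepsilon)^{1/\alpha}$ threading between the poles, the Residue Theorem applied to the poles of $P_\ell\cup P_{-\ell}$ in the upper half-plane, the pairing of each $z_j\in\mathbf{P}_\ell^+$ with $-\bar z_j$ via Proposition~\ref{prop:app_poles}, the substitution $(1-\hat\omega)(z)=\mp i\ell$ on $P_{\pm\ell}$ together with Remark~\ref{rmk:cz} to recombine the paired residues into a single damped sinusoid, and evenness in $x$ to pass from $x>0$ to $\abs{x}$. Where you genuinely depart from the paper is the treatment of the arc integral, and your version is the more careful one. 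The paper disposes of $\lim_n\int_{C_n^+}$ by invoking ``the same reasoning as'' the static-case computation of \cite{tamekue2023mathematical}; but that argument covers $\ell=0$, where $\hat K_0^2(\xi)\to 0$ as $\abs{\xi}\to\infty$ on $\R$. For $\ell\neq 0$ one has instead
\begin{equation*}
\hat K_\ell^1(\xi)\to \frac{i\abs{\ell}}{1+\ell^2},
\qquad
\hat K_\ell^2(\xi)\to \frac{\ell^2}{1+\ell^2},
\end{equation*}
both nonzero, so the inverse Fourier integral is not absolutely convergent and the arc contribution of the constant part does not vanish: for a constant $c$ one computes
\begin{equation*}
\int_{C_n^+} c\,e^{2\pi i x z}\,dz = -\,c\,\frac{\sin(2\pi x R_n)}{\pi x},
\end{equation*}
which oscillates rather than tending to $0$, exactly cancelling the equally non-convergent contribution of the segment $[-R_n,R_n]$. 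Your regularization—subtract the value at infinity, which has no poles and hence leaves all residues untouched, then note that the difference decays Gaussianly on $\R$ and stays bounded on $C_n^+$ away from the poles—is precisely what is needed to make the contour argument legitimate for $\ell\neq 0$, and it is absent from the paper. The price, which you correctly flag, is that $K_\ell^\nu$ (hence $K_\ell$ itself, for $\ell\neq0$) contains a Dirac mass at the origin in addition to the series \eqref{eq:K1&K2_ell}, so the identity should be read pointwise for $x\neq 0$ (or distributionally modulo a multiple of $\delta$); this caveat propagates to any later use of these kernels inside convolutions. Aside from this, your account of which pairing produces the sine versus the cosine (the numerator of $\hat K_\ell^1$ is unchanged between $P_\ell$ and $P_{-\ell}$ while that of $\hat K_\ell^2$ flips sign, against a denominator that always flips) matches the paper's computation, and your remark on absolute convergence of the series via $e^{-2\pi\abs{x}\Im z_j}$ implicitly uses the lower bound on $\abs{c_j}$ that the paper only establishes later, in Step~1 of Theorem~\ref{thm:K0}.
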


\begin{proof}
To compute the inverse Fourier transforms of \eqref{eq:k1_2}, we use the Residue Theorem. More precisely, we will show that
\begin{equation}
\label{eq:inverseF}
\int_{\R} \hat{K}_\ell^\nu (\xi)e^{2\pi i x\xi}\,d\xi = \lim_{n\to +\infty} \int_{\Gamma_n} \hat{K}_\ell^\nu (z)e^{2\pi i xz}\,dz, 
\end{equation}
for a particular choice of contours $\Gamma_n\subset \mathbb{C}$.
By the Residue Theorem, we will then compute the integrals on the l.h.s.~by computing the residues of $\hat{K}_\ell^\nu$ at the poles contained in $\Gamma_n$.

Specifically, for $n \in \N$, we define $\Gamma_n := [-R_n, R_n] \cup C_n^+$, where $R_n = (M n + \varepsilon)^{1/\alpha}$, with $M, \varepsilon, \alpha$ as appropriate constants, $ [-R_n, R_n] \subseteq \R$, and $C_n^+ = \{z = R_n e^{i \theta} \in \C \mid \theta \in [0, \pi]\}$. Then, applying the same reasoning as \cite[Theorem B.1 ]{tamekue2023mathematical}, it is possible to show that:
 \begin{equation}
        \lim_{n \rightarrow \infty}\int_{C_n^+}\hat{K}_{\ell}^\nu(z) e^{2\pi i x z}dz = 0,
\end{equation}
from which it follows \eqref{eq:inverseF}.

On the other hand, the denominator of $\hat{K}_\ell^\nu , \nu = 1, 2$ is given by $(1+i\ell -\hat{\omega})(1-i\ell-\hat{\omega})$ and therefore the set of poles it is the same and it is given by $ P_\ell \cup P_{-\ell}$. We observe that the poles considered by the path are those of the set $\mathbf{P}_\ell^+$ and minus their complex conjugates. Then, denoting with $h(z)=\hat{K}_{\ell}^\nu(z)  e^{2\pi i x z}$, we have \eqref{eq:inverseF} equal to: 
\begin{equation}
\label{eq:res_thm_applicationR}
    2\pi i \sum_{j=0}^\infty \bigg(\Res(h(z), z_j)
    + \Res(h(z), -\bar{z_j}) \bigg). 
\end{equation}
Developing \eqref{eq:res_thm_applicationR} for $\nu = 1$, we have: 
\begin{equation}
\begin{aligned}
K_\ell^1(x)  =& 
- 2\pi i \sum_{j=0}^\infty  \frac{e^{-2\pi x \Im{z_j}}}{2\abs{c_{j}}^2}( e^{2\pi i \Re{z_j} x}(\Re c_{j}-i\Im c_{j})\\ & +e^{-2\pi x i \Re{z_j}}(\Re c_{j}+i\Im c_{j}) )\\
     &= - 2 \pi i \sum_{j=0}^\infty \frac{e^{-2\pi x \Im{z_j}}}{\abs{c_{j}}^2}(\Re c_{j}\cos(2\pi \Re{z_j} x)\\ &+ \Im c_{j}\sin(2\pi x \Re{z_j}))\\
     &= -2\pi i \sum_{j=0}^\infty \frac{e^{-2\pi x \Im{z_j}}}{\abs{c_{j}}} \sin(2\pi x \Re{z_j}+ \phi).
\end{aligned}
\end{equation}
Analogously for $\nu = 2$, we have:
\begin{equation}
\begin{aligned}
  K^2_\ell(x)   &= 2\pi \sum_{j=0}^\infty \frac{e^{-2\pi x \Im{z_j}}}{\abs{c_{j}}} \cos(2\pi x \Re{z_j}+\phi).
\end{aligned}
\end{equation}
The thesis follows, since $K_\ell^\nu$ for $\nu = 1, 2$ are an even function in $x$.
\end{proof}

Considering the enumeration defined in \eqref{eq:poles++}, the ``principal'' term of $K_\ell(x)$ computed in \eqref{eq:K1&K2_ell} is the one corresponding to $ z_0 \in \mathbf{P}_\ell^{+}$.

\begin{theorem}
\label{thm:K0}
Given $K_\ell^\nu(x)$, with $ \nu = 1, 2$, the following approximations hold: 
   \begin{multline}
        e^{2\pi \abs{x} \Im z_0} K_\ell^1(x) = -\frac{2 \pi  i}{\abs{c_{0}}} \sin(2\pi \abs{x} \Re z_0+\phi) + \mathcal{O}(1/x),\\
         e^{2\pi \abs{x} \Im z_0} K_\ell^2(x) = \frac{2 \pi }{\abs{c_{0}}}\cos(2\pi \abs{x} \Re z_0+ \phi)+ \mathcal{O}\textbf{\textbf{}}(1/x),
\end{multline}
where $z_0 \in \mathbf{P}_\ell^+$ and $\phi = \tan^{-1}\left( {\Re c_0}/{\Im c_0}\right)$.
\end{theorem}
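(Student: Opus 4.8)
The plan is to obtain Theorem~\ref{thm:K0} directly from the series representations of Lemma~\ref{lem:K_series}, by peeling off the contribution of the principal pole $z_0$ and absorbing everything else into the error term. Concretely, I would multiply the two series for $K_\ell^1(x)$ and $K_\ell^2(x)$ by $e^{2\pi\abs{x}\Im z_0}$, so that the generic summand acquires the factor $e^{-2\pi\abs{x}(\Im z_j-\Im z_0)}$. By the enumeration \eqref{eq:app_poles++}, the index $j=0$ attains the minimal imaginary part among the poles of $\mathbf{P}_\ell^+$, hence its factor equals $1$ and the $j=0$ term reproduces verbatim the claimed leading expressions $-\tfrac{2\pi i}{\abs{c_0}}\sin(2\pi\abs{x}\Re z_0+\phi)$ and $\tfrac{2\pi}{\abs{c_0}}\cos(2\pi\abs{x}\Re z_0+\phi)$, with $\phi=\tan^{-1}(\Re c_0/\Im c_0)$. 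Everything then reduces to showing that the tail $\sum_{j\ge 1}$ contributes only $\mathcal{O}(1/x)$.

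For the tail estimate, I would bound the sine and cosine factors by $1$ and exploit the ordering of the imaginary parts. Fixing a reference value $x_0>0$, for every $\abs{x}\ge x_0$ one has $e^{-2\pi\abs{x}(\Im z_j-\Im z_0)}\le e^{-2\pi(\abs{x}-x_0)(\Im z_1-\Im z_0)}e^{-2\pi x_0(\Im z_j-\Im z_0)}$, since $\Im z_j\ge \Im z_1$ for $j\ge 1$. Summing against $\abs{c_j}^{-1}$ gives
\[
\sum_{j\ge 1}\frac{e^{-2\pi\abs{x}(\Im z_j-\Im z_0)}}{\abs{c_j}}
\le C\,e^{-2\pi(\abs{x}-x_0)(\Im z_1-\Im z_0)},
\]
where $C=\sum_{j\ge 1}\abs{c_j}^{-1}e^{-2\pi x_0(\Im z_j-\Im z_0)}<\infty$ is the value of the (absolutely convergent) reference series. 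Hence the tail decays exponentially in $\abs{x}$, which is in particular $\mathcal{O}(1/x)$, and the two approximations follow.

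The delicate point, and the step I expect to be the real obstacle, is the \emph{strict} gap $\Im z_1>\Im z_0$, i.e.\ that $z_0$ is the unique element of $\mathbf{P}_\ell^+$ of least imaginary part; without it, several summands would survive multiplication by $e^{2\pi\abs{x}\Im z_0}$ and the single-term leading behaviour would break down. Here the geometry of the poles is essential: forcing $1+i\ell-\hat\omega(z)$ to vanish keeps $\Re(z^2)$ bounded as $\abs z\to\infty$, so the poles of $\mathbf{P}_\ell^+$ accumulate along the ray $\arg z=\pi/4$; consequently $\Im z_j\to\infty$, only finitely many poles lie below any horizontal line, and the reference series defining $C$ converges absolutely. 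This makes $z_0$ well defined, but the genuine absence of a tie $\Im z_0=\Im z_1$ must still be extracted from the transcendental system \eqref{eq:decomposition_mD}; should ties occur, one would instead collect all poles of minimal imaginary part into the leading term and push the strictly higher ones into $\mathcal{O}(1/x)$.
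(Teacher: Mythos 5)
Your proposal follows the same skeleton as the paper's proof (peel off the $j=0$ term of the series in Lemma~\ref{lem:K_series}, bound the trigonometric factors of the tail by $1$, and show the tail is negligible), but the tail estimate is run differently, and it contains a genuine gap. Your constant $C=\sum_{j\ge 1}\abs{c_j}^{-1}e^{-2\pi x_0(\Im z_j-\Im z_0)}$ involves the derivatives $c_j=\hat\omega'(z_j)$, and nothing in your argument rules out $\abs{c_j}\to 0$ along the pole sequence (poles nearly coalescing into double zeros), in which case this reference series may diverge. The geometric facts you invoke --- $\Re(z^2)$ bounded on the pole set, accumulation along $\arg z=\pi/4$, finitely many poles below any horizontal line --- constrain the \emph{positions} of the $z_j$ but say nothing about the size of $\hat\omega'(z_j)$. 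This is precisely what Step~1 of the paper's proof supplies: from $1=\Re\hat\omega(z_j)\le e^{-2\pi^2\sigma_1^2(\Re z_j^2-\Im z_j^2)}+ke^{-2\pi^2\sigma_2^2(\Re z_j^2-\Im z_j^2)}$ it deduces $\limsup_j(\Re z_j^2-\Im z_j^2)<\infty$, and from this a uniform positive lower bound $\abs{c_j}\ge \tilde C>0$. Any completion of your argument needs this lemma (or an equivalent control on the $c_j$); as written, the convergence of $C$ is asserted, not proved.

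The second difference concerns the spectral gap $\Im z_1>\Im z_0$, which you correctly identify as unproven and which your exponential bound cannot do without. The paper avoids having to establish it: after the uniform bound $\abs{c_j}^{-1}\le 1/C$, it compares the tail with an integral using only the monotonicity of $\tau\mapsto e^{-2\pi x\tau}$,
\begin{equation*}
\sum_{j\ge 1} e^{-2\pi x \Im z_j}\;\le\;\int_{\Im z_0}^{+\infty}e^{-2\pi x\tau}\,d\tau\;=\;\frac{e^{-2\pi x\Im z_0}}{2\pi x},
\end{equation*}
which yields the $\mathcal{O}(1/x)$ remainder directly --- this is exactly why the theorem claims only $\mathcal{O}(1/x)$ rather than the exponential smallness your route would give. (To be fair, this comparison tacitly assumes the imaginary parts $\Im z_j$ are sufficiently separated, so the paper is not entirely free of implicit hypotheses on the pole distribution either; but it does not require exhibiting a strict minimizer.) In short: your leading-term extraction is correct and identical to the paper's, but to close the proof you must (i) prove the uniform lower bound on $\abs{c_j}$ as in the paper's Step~1, and (ii) either prove the strict gap $\Im z_1>\Im z_0$ or switch to the paper's monotone-comparison bound, which delivers the stated $\mathcal{O}(1/x)$ without it.
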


\begin{proof}
We  re-write \eqref{eq:K1&K2_ell} as: 
 \begin{equation}
 \begin{aligned}
        K_\ell^1(x) =& -&2\pi i \frac{e^{-2\pi x \Im z_0}}{\abs{c_{0}}} \sin(2\pi \abs{x}\Re z_0+ \phi) + R^1(x)\\
         K_\ell^2(x) =& &2\pi  \frac{e^{-2\pi x \Im z_0}}{\abs{c_{0}}} \cos(2\pi \abs{x}\Re z_0+ \phi) + R^2(x)
\end{aligned}
\end{equation}
where
 \begin{equation}
 \begin{aligned}
    R^1(x) &= -2 \pi i &\sum_{j = 1}^{+\infty} \frac{e^{-2\pi x \Im{z_j}}}{\abs{c_{j}}} \sin(2\pi \abs{x} \Re{z_j}+ \phi_j)\\
    R^2(x) &= 2 \pi  &\sum_{j = 1}^{+\infty} \frac{e^{-2\pi x \Im{z_j}}}{\abs{c_{j}}} \cos(2\pi \abs{x} \Re{z_j}+ \phi_j).
\end{aligned}
\end{equation}

    \textbf{Step 1:} We start by showing that $\abs{c_{j}}$ is bounded from below by a constant and this holds for every $j \in \N$. 

    Suppose we are in the region of the complex plane where the poles are characterized by the real part smaller than the imaginary part; i.e. $\Re z_j \leq \Im z_j $, then we have: 
\begin{equation}
\begin{split}
    \abs{c_{j}} & = 2 \abs{z_j}\abs{2\pi^2\sigma_1^2e^{-2\pi^2\sigma_1^2 z_j^2}-2\pi^2\sigma_2^2e^{-2\pi^2\sigma_2^2 z_j^2} } \\
 & \geq 4 \pi^2 \sigma_1^2 |z_0| e^{-2\pi^2\sigma_1^2 (\Re z_{j}^2-\Im z_{j}^2)}\\
    & \geq 4 \pi^2 \sigma_1^2 |z_0|\coloneqq \tilde{C}.  
\end{split}    
\end{equation}

On the other hand, suppose the poles lie in the region of the complex plane, where the imaginary part is larger than the real part, i.e. $\Re z_{j} > \Im z_{j}$. Let us show that  
\begin{equation}
    \label{eq:claim}
    \limsup\limits_{j\rightarrow \infty}(\Re z_{j}^2-\Im z_{j}^2)= M.
\end{equation}
Indeed, this will imply that $|c_j|\ge 2|z_0|e^{-M}$, which together with the previous point will yield that $|c_j|$ is uniformly bounded from below.

To this purpose, we start by observing that
\begin{multline}
    \label{eq:bah}
  1=\Re \hat\omega(z_{j_n}) 
  \le e^{-2\pi^2\sigma_1^2(\Re{z_{j_n}}^2-\Im{z_{j_n}}^2)}\\
  + ke^{ - 2\pi^2\sigma_2^2(\Re{z_{j_n}}^2-\Im{z_{j_n}}^2)}
\end{multline}
Assuming, by contradiction, that there exists a subsequence $(j_n)_{n\in \N}$ such that \[\limsup\limits_{n\rightarrow \infty}(\Re z_{j_n}^2-\Im z_{j_n}^2)= +\infty,\] the r.h.s.\ of \eqref{eq:bah} tends to $0$ as $k$ tends to $+\infty$. This is a contradiction, which proves \eqref{eq:claim}.

\textbf{Step 2:} Estimate of $R^i(x), i = 1, 2$.
Since the function $\tau\mapsto e^{-2\pi x \tau }$ is decreasing, we have
\begin{equation}
\begin{aligned}
 |R^i(x)| &\leq \frac{2 \pi }{C} \sum_{\substack{j = 1 \\
    z_j \in P^{++}_0}}^{+\infty} e^{-2\pi x \Im z_{j}} \\
    & \leq \frac{2 \pi }{C} \int_{\Im{z_0}}^{+\infty}e^{-2\pi x \tau} d\tau  \\ &=\frac{1}{C} \frac{e^{-2\pi x \Im{z_0}} }{x}.    
\end{aligned}
\end{equation}

The result follows since $K_\ell$ is an even function in $x$.
\end{proof}

\end{appendices}

\bibliography{biblio}


\begin{thebibliography}{30}
\ifx \bisbn   \undefined \def \bisbn  #1{ISBN #1}\fi
\ifx \binits  \undefined \def \binits#1{#1}\fi
\ifx \bauthor  \undefined \def \bauthor#1{#1}\fi
\ifx \batitle  \undefined \def \batitle#1{#1}\fi
\ifx \bjtitle  \undefined \def \bjtitle#1{#1}\fi
\ifx \bvolume  \undefined \def \bvolume#1{\textbf{#1}}\fi
\ifx \byear  \undefined \def \byear#1{#1}\fi
\ifx \bissue  \undefined \def \bissue#1{#1}\fi
\ifx \bfpage  \undefined \def \bfpage#1{#1}\fi
\ifx \blpage  \undefined \def \blpage #1{#1}\fi
\ifx \burl  \undefined \def \burl#1{\textsf{#1}}\fi
\ifx \doiurl  \undefined \def \doiurl#1{\url{https://doi.org/#1}}\fi
\ifx \betal  \undefined \def \betal{\textit{et al.}}\fi
\ifx \binstitute  \undefined \def \binstitute#1{#1}\fi
\ifx \binstitutionaled  \undefined \def \binstitutionaled#1{#1}\fi
\ifx \bctitle  \undefined \def \bctitle#1{#1}\fi
\ifx \beditor  \undefined \def \beditor#1{#1}\fi
\ifx \bpublisher  \undefined \def \bpublisher#1{#1}\fi
\ifx \bbtitle  \undefined \def \bbtitle#1{#1}\fi
\ifx \bedition  \undefined \def \bedition#1{#1}\fi
\ifx \bseriesno  \undefined \def \bseriesno#1{#1}\fi
\ifx \blocation  \undefined \def \blocation#1{#1}\fi
\ifx \bsertitle  \undefined \def \bsertitle#1{#1}\fi
\ifx \bsnm \undefined \def \bsnm#1{#1}\fi
\ifx \bsuffix \undefined \def \bsuffix#1{#1}\fi
\ifx \bparticle \undefined \def \bparticle#1{#1}\fi
\ifx \barticle \undefined \def \barticle#1{#1}\fi
\bibcommenthead
\ifx \bconfdate \undefined \def \bconfdate #1{#1}\fi
\ifx \botherref \undefined \def \botherref #1{#1}\fi
\ifx \url \undefined \def \url#1{\textsf{#1}}\fi
\ifx \bchapter \undefined \def \bchapter#1{#1}\fi
\ifx \bbook \undefined \def \bbook#1{#1}\fi
\ifx \bcomment \undefined \def \bcomment#1{#1}\fi
\ifx \oauthor \undefined \def \oauthor#1{#1}\fi
\ifx \citeauthoryear \undefined \def \citeauthoryear#1{#1}\fi
\ifx \endbibitem  \undefined \def \endbibitem {}\fi
\ifx \bconflocation  \undefined \def \bconflocation#1{#1}\fi
\ifx \arxivurl  \undefined \def \arxivurl#1{\textsf{#1}}\fi
\csname PreBibitemsHook\endcsname

\bibitem[\protect\citeauthoryear{Amari}{1972}]{A72}
\begin{barticle}
\bauthor{\bsnm{Amari}, \binits{S.I.}}:
\batitle{Characteristics of random nets of analog neuron-like elements}.
\bjtitle{IEEE Transactions on Systems, Man, and Cybernetics}
\bvolume{2}(\bissue{5}),
\bfpage{643}--\blpage{657}
(\byear{1972})
\end{barticle}
\endbibitem

\bibitem[\protect\citeauthoryear{Wilson and Cowan}{1972}]{WC72}
\begin{barticle}
\bauthor{\bsnm{Wilson}, \binits{H.R.}},
\bauthor{\bsnm{Cowan}, \binits{J.D.}}:
\batitle{Excitatory and inhibitory interactions in localized populations of model neurons}.
\bjtitle{Biophysical Journal}
\bvolume{12}(\bissue{1}),
\bfpage{1}--\blpage{24}
(\byear{1972})
\end{barticle}
\endbibitem

\bibitem[\protect\citeauthoryear{Ermentrout and Cowan}{1979}]{ermentrout1979mathematical}
\begin{botherref}
\oauthor{\bsnm{Ermentrout}, \binits{G.B.}},
\oauthor{\bsnm{Cowan}, \binits{J.D.}}:
A mathematical theory of visual hallucination patterns.
Biological cybernetics
\textbf{34}(3)
(1979)
\end{botherref}
\endbibitem

\bibitem[\protect\citeauthoryear{Bressloff and Cowan}{2003}]{bressloff2003functional}
\begin{barticle}
\bauthor{\bsnm{Bressloff}, \binits{P.C.}},
\bauthor{\bsnm{Cowan}, \binits{J.D.}}:
\batitle{The functional geometry of local and horizontal connections in a model of v1}.
\bjtitle{Journal of Physiology-Paris}
\bvolume{97}(\bissue{2-3}),
\bfpage{221}--\blpage{236}
(\byear{2003})
\end{barticle}
\endbibitem

\bibitem[\protect\citeauthoryear{Bressloff et~al.}{2001}]{bressloff2001geometric}
\begin{barticle}
\bauthor{\bsnm{Bressloff}, \binits{P.C.}},
\bauthor{\bsnm{Cowan}, \binits{J.D.}},
\bauthor{\bsnm{Golubitsky}, \binits{M.}},
\bauthor{\bsnm{Thomas}, \binits{P.J.}},
\bauthor{\bsnm{Wiener}, \binits{M.C.}}:
\batitle{Geometric visual hallucinations, euclidean symmetry and the functional architecture of striate cortex}.
\bjtitle{Philosophical Transactions of the Royal Society of London. Series B: Biological Sciences}
\bvolume{356}(\bissue{1407}),
\bfpage{299}--\blpage{330}
(\byear{2001})
\end{barticle}
\endbibitem

\bibitem[\protect\citeauthoryear{Veltz et~al.}{2015}]{veltz2015effects}
\begin{barticle}
\bauthor{\bsnm{Veltz}, \binits{R.}},
\bauthor{\bsnm{Chossat}, \binits{P.}},
\bauthor{\bsnm{Faugeras}, \binits{O.}}:
\batitle{On the effects on cortical spontaneous activity of the symmetries of the network of pinwheels in visual area v1}.
\bjtitle{The Journal of Mathematical Neuroscience (JMN)}
\bvolume{5},
\bfpage{1}--\blpage{28}
(\byear{2015})
\end{barticle}
\endbibitem

\bibitem[\protect\citeauthoryear{Faye et~al.}{2011}]{faye2011analysis}
\begin{barticle}
\bauthor{\bsnm{Faye}, \binits{G.}},
\bauthor{\bsnm{Chossat}, \binits{P.}},
\bauthor{\bsnm{Faugeras}, \binits{O.}}:
\batitle{Analysis of a hyperbolic geometric model for visual texture perception}.
\bjtitle{The Journal of Mathematical Neuroscience}
\bvolume{1},
\bfpage{1}--\blpage{51}
(\byear{2011})
\end{barticle}
\endbibitem

\bibitem[\protect\citeauthoryear{Sarti and Citti}{2015}]{SC15}
\begin{barticle}
\bauthor{\bsnm{Sarti}, \binits{A.}},
\bauthor{\bsnm{Citti}, \binits{G.}}:
\batitle{The constitution of visual perceptual units in the functional architecture of {V1}}.
\bjtitle{Journal of computational neuroscience}
\bvolume{38}(\bissue{2}),
\bfpage{285}--\blpage{300}
(\byear{2015})
\end{barticle}
\endbibitem

\bibitem[\protect\citeauthoryear{Bolelli et~al.}{2024}]{bolelli2024individuation}
\begin{botherref}
\oauthor{\bsnm{Bolelli}, \binits{M.V.}},
\oauthor{\bsnm{Citti}, \binits{G.}},
\oauthor{\bsnm{Sarti}, \binits{A.}},
\oauthor{\bsnm{Zucker}, \binits{S.W.}}:
Individuation of 3d perceptual units from neurogeometry of binocular cells.
arXiv preprint arXiv:2410.02870
(2024)
\end{botherref}
\endbibitem

\bibitem[\protect\citeauthoryear{Bolelli et~al.}{2023}]{BCSZ23b}
\begin{bchapter}
\bauthor{\bsnm{Bolelli}, \binits{M.V.}},
\bauthor{\bsnm{Citti}, \binits{G.}},
\bauthor{\bsnm{Sarti}, \binits{A.}},
\bauthor{\bsnm{Zucker}, \binits{S.}}:
\bctitle{A neurogeometric stereo model for individuation of 3d perceptual units}.
In: \beditor{\bsnm{Nielsen}, \binits{F.}},
\beditor{\bsnm{Barbaresco}, \binits{F.}} (eds.)
\bbtitle{Geometric Science of Information},
pp. \bfpage{53}--\blpage{62}.
\bpublisher{Springer},
\blocation{Cham}
(\byear{2023})
\end{bchapter}
\endbibitem

\bibitem[\protect\citeauthoryear{Faye et~al.}{2010}]{FCF10}
\begin{botherref}
\oauthor{\bsnm{Faye}, \binits{G.}},
\oauthor{\bsnm{Chossat}, \binits{P.}},
\oauthor{\bsnm{Faugeras}, \binits{O.}}:
Some theoretical results for a class of neural mass equations.
arXiv preprint arXiv:1005.0510
(2010)
\end{botherref}
\endbibitem

\bibitem[\protect\citeauthoryear{Faye and Faugeras}{2010}]{faye2010some}
\begin{barticle}
\bauthor{\bsnm{Faye}, \binits{G.}},
\bauthor{\bsnm{Faugeras}, \binits{O.}}:
\batitle{Some theoretical and numerical results for delayed neural field equations}.
\bjtitle{Physica D: Nonlinear Phenomena}
\bvolume{239}(\bissue{9}),
\bfpage{561}--\blpage{578}
(\byear{2010})
\end{barticle}
\endbibitem

\bibitem[\protect\citeauthoryear{Nicks et~al.}{2021}]{nicks2021understanding}
\begin{botherref}
\oauthor{\bsnm{Nicks}, \binits{R.}},
\oauthor{\bsnm{Cocks}, \binits{A.}},
\oauthor{\bsnm{Avitabile}, \binits{D.}},
\oauthor{\bsnm{Johnston}, \binits{A.}},
\oauthor{\bsnm{Coombes}, \binits{S.}}:
Understanding sensory induced hallucinations: From neural fields to amplitude equations.
SIAM Journal on Applied Dynamical Systems
\textbf{20}(4)
(2021)
\end{botherref}
\endbibitem

\bibitem[\protect\citeauthoryear{Rule et~al.}{2011}]{rule2011model}
\begin{barticle}
\bauthor{\bsnm{Rule}, \binits{M.}},
\bauthor{\bsnm{Stoffregen}, \binits{M.}},
\bauthor{\bsnm{Ermentrout}, \binits{B.}}:
\batitle{A model for the origin and properties of flicker-induced geometric phosphenes}.
\bjtitle{PLoS computational biology}
\bvolume{7}(\bissue{9}),
\bfpage{1002158}
(\byear{2011})
\end{barticle}
\endbibitem

\bibitem[\protect\citeauthoryear{Tamekue et~al.}{2024}]{tamekue2023mathematical}
\begin{barticle}
\bauthor{\bsnm{Tamekue}, \binits{C.}},
\bauthor{\bsnm{Prandi}, \binits{D.}},
\bauthor{\bsnm{Chitour}, \binits{Y.}}:
\batitle{A mathematical model of the visual mackay effect}.
\bjtitle{SIAM Journal on Applied Dynamical Systems}
\bvolume{23}(\bissue{3}),
\bfpage{2138}--\blpage{2178}
(\byear{2024})
\end{barticle}
\endbibitem

\bibitem[\protect\citeauthoryear{Tamekue et~al.}{2025}]{tamekue2024reproducibility}
\begin{barticle}
\bauthor{\bsnm{Tamekue}, \binits{C.}},
\bauthor{\bsnm{Prandi}, \binits{D.}},
\bauthor{\bsnm{Chitour}, \binits{Y.}}:
\batitle{Reproducibility via neural fields of visual illusions induced by localized stimuli}.
\bjtitle{Discrete and Continuous Dynamical Systems - B}
\bvolume{30}(\bissue{4}),
\bfpage{1441}--\blpage{1471}
(\byear{2025})
\end{barticle}
\endbibitem

\bibitem[\protect\citeauthoryear{Tamekue et~al.}{2023}]{tamekue2023cortical}
\begin{barticle}
\bauthor{\bsnm{Tamekue}, \binits{C.}},
\bauthor{\bsnm{Prandi}, \binits{D.}},
\bauthor{\bsnm{Chitour}, \binits{Y.}}:
\batitle{Cortical origins of mackay-type visual illusions: A case for the non-linearity}.
\bjtitle{IFAC-PapersOnLine}
\bvolume{56}(\bissue{2}),
\bfpage{476}--\blpage{481}
(\byear{2023})
\end{barticle}
\endbibitem

\bibitem[\protect\citeauthoryear{MacKay}{1957}]{mackay1957moving}
\begin{barticle}
\bauthor{\bsnm{MacKay}, \binits{D.M.}}:
\batitle{Moving visual images produced by regular stationary patterns}.
\bjtitle{Nature}
\bvolume{180}(\bissue{4591}),
\bfpage{849}--\blpage{850}
(\byear{1957})
\end{barticle}
\endbibitem

\bibitem[\protect\citeauthoryear{Billock and Tsou}{2007}]{billock2007neural}
\begin{barticle}
\bauthor{\bsnm{Billock}, \binits{V.A.}},
\bauthor{\bsnm{Tsou}, \binits{B.H.}}:
\batitle{Neural interactions between flicker-induced self-organized visual hallucinations and physical stimuli}.
\bjtitle{Proceedings of the National Academy of Sciences}
\bvolume{104}(\bissue{20}),
\bfpage{8490}--\blpage{8495}
(\byear{2007})
\end{barticle}
\endbibitem

\bibitem[\protect\citeauthoryear{Deimling}{2006}]{deimling2006ordinary}
\begin{bbook}
\bauthor{\bsnm{Deimling}, \binits{K.}}:
\bbtitle{Ordinary Differential Equations in Banach Spaces}
vol. \bseriesno{596}.
\bpublisher{Springer},
\blocation{Berlin, Heidelberg}
(\byear{2006})
\end{bbook}
\endbibitem

\bibitem[\protect\citeauthoryear{Brezis}{2010}]{brezis2010functional}
\begin{bbook}
\bauthor{\bsnm{Brezis}, \binits{H.}}:
\bbtitle{Functional Analysis, Sobolev Spaces and Partial Differential Equations}.
\bsertitle{Universitext}.
\bpublisher{Springer},
\blocation{New York}
(\byear{2010})
\end{bbook}
\endbibitem

\bibitem[\protect\citeauthoryear{Berenstein and Gay}{2012}]{berenstein2012complex}
\begin{bbook}
\bauthor{\bsnm{Berenstein}, \binits{C.A.}},
\bauthor{\bsnm{Gay}, \binits{R.}}:
\bbtitle{Complex Analysis and Special Topics in Harmonic Analysis}.
\bpublisher{Springer},
\blocation{New York}
(\byear{2012})
\end{bbook}
\endbibitem

\bibitem[\protect\citeauthoryear{Tootell et~al.}{1982}]{tootell1982deoxyglucose}
\begin{barticle}
\bauthor{\bsnm{Tootell}, \binits{R.B.}},
\bauthor{\bsnm{Silverman}, \binits{M.S.}},
\bauthor{\bsnm{Switkes}, \binits{E.}},
\bauthor{\bsnm{De~Valois}, \binits{R.L.}}:
\batitle{Deoxyglucose analysis of retinotopic organization in primate striate cortex}.
\bjtitle{Science}
\bvolume{218}(\bissue{4575}),
\bfpage{902}--\blpage{904}
(\byear{1982})
\end{barticle}
\endbibitem

\bibitem[\protect\citeauthoryear{Schwartz}{1977}]{schwartz1977spatial}
\begin{barticle}
\bauthor{\bsnm{Schwartz}, \binits{E.L.}}:
\batitle{Spatial mapping in the primate sensory projection: analytic structure and relevance to perception}.
\bjtitle{Biological cybernetics}
\bvolume{25}(\bissue{4}),
\bfpage{181}--\blpage{194}
(\byear{1977})
\end{barticle}
\endbibitem

\bibitem[\protect\citeauthoryear{MacKay}{1957}]{mackay1957some}
\begin{barticle}
\bauthor{\bsnm{MacKay}, \binits{D.}}:
\batitle{Some further visual phenomena associated with regular patterned stimulation}.
\bjtitle{Nature}
\bvolume{180}(\bissue{4595}),
\bfpage{1145}--\blpage{1146}
(\byear{1957})
\end{barticle}
\endbibitem

\bibitem[\protect\citeauthoryear{Leviant}{1996}]{leviant1996does}
\begin{barticle}
\bauthor{\bsnm{Leviant}, \binits{I.}}:
\batitle{Does ‘brain-power’make enigma spin?}
\bjtitle{Proceedings of the Royal Society of London. Series B: Biological Sciences}
\bvolume{263}(\bissue{1373}),
\bfpage{997}--\blpage{1001}
(\byear{1996})
\end{barticle}
\endbibitem

\bibitem[\protect\citeauthoryear{Shenyan et~al.}{2024}]{shenyan2024visual}
\begin{barticle}
\bauthor{\bsnm{Shenyan}, \binits{O.}},
\bauthor{\bsnm{Lisi}, \binits{M.}},
\bauthor{\bsnm{Greenwood}, \binits{J.A.}},
\bauthor{\bsnm{Skipper}, \binits{J.I.}},
\bauthor{\bsnm{Dekker}, \binits{T.M.}}:
\batitle{Visual hallucinations induced by ganzflicker and ganzfeld differ in frequency, complexity, and content}.
\bjtitle{Scientific Reports}
\bvolume{14}(\bissue{1}),
\bfpage{2353}
(\byear{2024})
\end{barticle}
\endbibitem

\bibitem[\protect\citeauthoryear{Bartossek et~al.}{2021}]{bartossek2021altered}
\begin{barticle}
\bauthor{\bsnm{Bartossek}, \binits{M.T.}},
\bauthor{\bsnm{Kemmerer}, \binits{J.}},
\bauthor{\bsnm{Schmidt}, \binits{T.T.}}:
\batitle{Altered states phenomena induced by visual flicker light stimulation}.
\bjtitle{PloS one}
\bvolume{16}(\bissue{7}),
\bfpage{0253779}
(\byear{2021})
\end{barticle}
\endbibitem

\bibitem[\protect\citeauthoryear{Emmrich}{1999}]{emmrich1999discrete}
\begin{bbook}
\bauthor{\bsnm{Emmrich}, \binits{E.}}:
\bbtitle{Discrete Versions of Gronwall's Lemma and Their Application to the Numerical Analysis of Parabolic Problems}.
\bpublisher{Technische Universität Berlin},
\blocation{Berlin}
(\byear{1999})
\end{bbook}
\endbibitem

\bibitem[\protect\citeauthoryear{Cartan}{1997}]{cartan1961theorie}
\begin{botherref}
\oauthor{\bsnm{Cartan}, \binits{H.P.}}:
Th{\'e}orie {\'e}l{\'e}mentaire des fonctions analytiques d'une ou plusieurs variables complexes.
Hermann Editeurs
(1997)
\end{botherref}
\endbibitem

\end{thebibliography}

\end{document}